\def\ps@headings{%
\def\@oddhead{\mbox{}\scriptsize\rightmark \hfil \thepage}%
\def\@evenhead{\scriptsize\thepage \hfil \leftmark\mbox{}}%
\def\@oddfoot{}%
\def\@evenfoot{}}
\begin{document}
\title{New Order-Optimal Decentralized Coded Caching Schemes with Good Performance \\ in the Finite File Size Regime}

\author{Sian Jin,~\IEEEmembership{Student~Member,~IEEE}\thanks{S. Jin, Y. Cui and H. Liu are with Shanghai Jiao Tong University, China. G. Caire is with Technical University of Berlin, Germany. This paper was presented in part at  IEEE GLOBECOM 2016 and was submitted in part to IEEE ICC 2017.}, \ Ying Cui,~\IEEEmembership{Member,~IEEE}, \\ Hui Liu,~\IEEEmembership{Fellow,~IEEE},  \ Giuseppe Caire,~\IEEEmembership{Fellow,~IEEE}}

\maketitle
\newtheorem{Thm}{Theorem}
\newtheorem{Lem}{Lemma}
\newtheorem{Cor}{Corollary}
\newtheorem{Def}{Definition}
\newtheorem{Exam}{Example}
\newtheorem{Alg}{Algorithm}
\newtheorem{Sch}{Scheme}
\newtheorem{Prob}{Problem}
\newtheorem{Rem}{Remark}
\newtheorem{Proof}{Proof}
\newtheorem{Asump}{Assumption}
\newtheorem{Subp}{Subproblem}

\vspace{-1cm}

\begin{abstract}
A decentralized coded caching scheme based on independent random content placement
has been proposed by Maddah-Ali and Niesen, and has been shown to achieve an order-optimal memory-load tradeoff
when the file size goes to infinity. It was then successively shown by Shanmugam {\em et al.} that in the practical operating regime where the file size is limited
such scheme yields much less attractive load gain.
In this paper, we propose a decentralized random coded caching scheme and a partially decentralized sequential coded
caching scheme with different coordination requirements in the content placement phase.
The proposed content placement and delivery methods aim at ensuring abundant coded-multicasting opportunities in the content
delivery phase when the file size is finite.
We first analyze the loads of the two proposed schemes and show that the sequential scheme
outperforms the random  scheme in the  finite file size regime.
We also show that both our proposed schemes outperform  Maddah-Ali--Niesen's  and Shanmugam {\em et al.}'s decentralized schemes for finite file size, when the number of users is sufficiently large.
Then, we show that our schemes achieve the same memory-load tradeoff as  Maddah-Ali--Niesen's decentralized
scheme when the  file size goes to infinity, and hence are also order optimal.
Finally, we analyze the load gains of the two proposed schemes and characterize the corresponding required
file sizes.
\end{abstract}
\begin{keywords}
Coded caching, coded multicasting, content distribution, finite file size analysis.
\end{keywords}

\section{Introduction}

The rapid proliferation of smart mobile devices has triggered
an unprecedented growth of the global mobile data traffic. It is predicted that global mobile data traffic will increase nearly eightfold between 2015 and 2020 \cite{Cisco}.
Recently, to support the dramatic growth of wireless data traffic, caching and multicasting  have been proposed as two  promising approaches for massive content delivery in wireless networks.
By proactively placing content closer to or even at end-users during the off-peak hours, network congestion during the peak hours can be greatly reduced \cite{Femtocahing,Approximation,Lau,CDN,P2P}.
On the other hand,
leveraging the broadcast nature of the wireless medium by multicast transmission,
popular content can be delivered to multiple requesters simultaneously \cite{Zhou}.
For this reason, wireless multicasting has been specified in 3GPP standards
known as evolved Multimedia Broadcast Multicast Service (eMBMS) \cite{eMBMS}.

Note that in \cite{Femtocahing,Approximation,Lau,CDN,P2P,Zhou}, caching and multicasting are   considered separately.
In view of the benefits of caching and multicasting, joint design of the two promising techniques is expected to achieve superior performance for massive content delivery in wireless networks.
For example, in \cite{multicast2014}, the optimization of caching and multicasting, which is NP-hard, is considered in a small cell network, and a simplified solution
with approximation guarantee is proposed.
In \cite{Inelastic}, the authors propose a joint throughput-optimal caching and multicasting algorithm to maximize the service rate in a multi-cell network.
In \cite{CuiTWC1} and \cite{CuiTWC2}, the authors consider the analysis and optimization of caching and multicasting in large-scale wireless networks modeled using stochastic geometry.
However, \cite{multicast2014,Inelastic,CuiTWC1,CuiTWC2} only consider joint design of traditional uncoded caching and multicasting,
the gain of which mainly derives from making content available locally and serving multiple requests of the same contents concurrently.

Recently, a new class of caching schemes, referred to as {\em coded caching},  have received significant interest, as they
can achieve order-optimal memory-load tradeoff through  wise design of  content placement in the user caches.
The main novelty of such schemes with respect to conventional approaches (e.g., as currently used in content delivery networks)
is that the messages stored in the user caches  are treated as ``receiver side information'' in order to enable network-coded multicasting, such that a
single multicast codeword is useful to a large number of users, even though they are not requesting
the same content.
In \cite{Alicentralized} and \cite{Alidecentralized}, Maddah-Ali and Niesen consider a system with one server connected  through a shared error-free link to $L$ users. The server has a database of $N$ files, and each user  has an isolated cache memory of $M$ files. They formulate a caching problem consisting of two phases, i.e., content placement phase and content delivery phase.
The goal is to minimize the worst-case (over all possible requests) load of the shared link in the delivery phase.
In particular, in \cite{Alicentralized}, a centralized coded caching scheme is proposed, which requires a centrally coordinated placement phase that depends on the knowledge of  the number of active users in the delivery phase.
Although this centralized scheme achieves an order-optimal memory-load tradeoff, it has limited practical applicability since the server does not know,
in general, how many users will actually be active during the delivery phase.
In \cite{Alidecentralized}, a decentralized coded caching scheme is proposed, which achieves an order-optimal memory-load tradeoff
in the asymptotic regime of infinite file size (i.e., the number of packets per file goes to infinity).
However,  it was successively shown in \cite{Allerton14} that
this decentralized coded caching scheme can achieve at most a load gain of 2 over conventional uncoded caching,\footnote{For future reference, in this paper, we refer to ``load gain'' of a particular coded caching scheme as the ratio between the worst-case
load achieved by conventional uncoded caching and the worst-case load achieved by
that particular scheme. Since coding should provide a lower load, the gain is some number larger than 1.}
if the file size is less than or equal to $\frac{N/M}{L}\exp\left(L\frac{M}{N}\right)$.
The main reason for this negative result is that the random content placement mechanism
in Maddah-Ali--Niesen's decentralized scheme causes large variance of the lengths of messages involved in the coded multicast XOR operations,
leading to a drastic reduction of coded-multicasting opportunities.

In \cite{Alinonuniform,mulitilevel,Zhangnonuniform,Sinong,Onlinecaching,Carienonunifrom15,Cariemultiple,Carierandom}, the goal is to reduce the average load of the shared link in the delivery phase under heterogenous file popularity.
Specifically, in \cite{Alinonuniform,mulitilevel,Zhangnonuniform,Sinong}, Maddah-Ali--Niesen's decentralized scheme \cite{Alidecentralized} is extended in order to reduce the average load under the assumption that  file popularity is known in advance. In \cite{Onlinecaching}, Maddah-Ali--Niesen's decentralized scheme is extended to an online caching scheme, which is to reduce the average load by placing content   on the fly and without knowledge of future requests. The decentralized random coded caching schemes in \cite{Alinonuniform,mulitilevel,Zhangnonuniform,Sinong,Onlinecaching} suffer from the same drawback
of the original scheme in \cite{Alidecentralized} when the file size is limited.
In \cite{Carienonunifrom15,Cariemultiple,Carierandom},
the authors propose decentralized random coded caching schemes
based on chromatic number index coding  to reduce the average load.
The  schemes based on greedy algorithms  can achieve order-optimal loads  in the asymptotic regime of infinite file size with manageable complexity, but again suffer from the
finite-length file problem.
The main reason for this negative result is that the greedy algorithms assign colors to uncolored vertices in
a randomized manner, which cannot maximize the number of messages involved in each coded multicast XOR operation.

In \cite{Allerton14},  Shanmugam {\em et al.} propose a  decentralized user grouping coded caching scheme and analyze its performance in the finite file size regime.
In particular, Shanmugam {\em et al.} consider the load gain achieved by this scheme as a function of required file size.
This scheme achieves a larger load gain than Maddah-Ali--Niesen's decentralized scheme when the file size is limited, but a smaller asymptotic
load gain when the file size goes to infinity. The undesirable performance in the asymptotic regime of infinite file size is mainly caused by  the penalty
in the ``pull down phase''  of the random delivery algorithm, which reduces the number of messages involved in the coded multicast XOR operations,  and hence leads to a reduction of
coded-multicasting opportunities.

Therefore, it is desirable to design  decentralized coded caching schemes that can achieve good performance when the file size is finite,
while maintaining good performance 
when the file size grows to infinity.
In this paper, we consider the same problem setting as  in \cite{Alidecentralized},  with the focus on reducing the worst-case load of the shared link when the file size is finite, while achieving order-optimal memory-load tradeoff when the file size grows to infinity. Our main contributions are summarized below.
\begin{itemize}
\item Motived by the content placement of Maddah-Ali--Niesen's centralized scheme, we construct a cache content base
formed by a collection of carefully designed cache contents, from which the users choose their content placement.
This avoids the high variance of the basic decentralized  random placement, and yet ensures a large number of coded-multicasting opportunities in the  finite file size regime.
\item We propose a decentralized random coded caching scheme and a partially decentralized sequential coded caching scheme, that share the same delivery procedure and  differ by the way
the users choose their  content placement from the cache content base.
The two schemes have different coordination requirements in the placement phase, and can be applied to
different scenarios.
\item We analyze the loads achieved by the two proposed schemes and  show that the sequential coded caching scheme
outperforms the random coded caching scheme in the finite file size regime.
We also show that the two proposed decentralized schemes outperform  Maddah-Ali--Niesen's and Shanmugam {\em et al.}'s
decentralized schemes in the  finite file size regime, when the number of users is sufficiently large.
Then, we analyze the asymptotic loads of our schemes when the file size goes to infinity and show that both schemes achieve the same memory-load
tradeoff as Maddah-Ali--Niesen's decentralized scheme, and hence  are also  order-optimal in the memory-load tradeoff.
\item We analyze the load gains of the two proposed schemes in the finite file size regime, and
derive an upper bound on the required file size for given target load gain under each proposed scheme. We further show that
the load gains of the two proposed  schemes converge to the same limiting load gain, when the number of users goes to infinity.  For each proposed scheme, we also analyze the growth of the load gain with respect to the required file size, when the file size is large.

\item Numerical results show that the two proposed coded caching schemes outperform Maddah-Ali--Niesen's  and Shanmugam {\em et al.}'s decentralized schemes in the finite file size regime, when the number of users is sufficiently large.
\end{itemize}

\section{Problem Setting}\label{Sec:setting}

As  in \cite{Alidecentralized}, we consider a system with one server connected through a shared, error-free link to $L\in\mathbb N$  users, where $\mathbb N$ denotes the set of all natural numbers.
The server has access to a database of $N\in\mathbb N$ ($N \geq L$)
files, denoted by $W_1, \ldots ,W_N$, consisting of $F \in\mathbb N$ indivisible data units.\footnote{The indivisible data units may be ``bits''
or, more practically, data chunks dictated by some specific memory or storage device format (e.g., a hard-drive sector formed by 512 bytes), that cannot be further divided
because of the specific read/write scheme.} The parameter $F$ indicates the maximum number of packets in which a file can be divided.
Let  $\mathcal{N}\triangleq \{1,2,\ldots ,N\}$ and $\mathcal{L} \triangleq \{1,2, \ldots L\}$ denote the set of file indices and the set of user indices, respectively.
Each user  has an isolated cache memory of $MF$ data units, for some real number $M \in [0,N]$.

The system operates in two phases, i.e., a placement phase and a delivery phase\cite{Alidecentralized}. In the placement phase, the users are given access to the entire database of $N$ files. Each user is then able to fill the content of its cache using the database. 
Let $\phi_l$ denote the caching function for user $l$, which maps the files $W_1,\ldots ,W_N$ into the cache content $$Z_l\triangleq \phi_l(W_1,\ldots ,W_N)$$ for user  $l \in \mathcal{L}$. Note that $Z_l$   is of size  $MF$ data units.  Let $\mathbf Z\triangleq \left(Z_1,\cdots, Z_L\right)$ denote the cache contents of all the $L$ users.
In the delivery phase, each user requests one file (not necessarily distinct) in the database.
Let $d_{l}\in\mathcal N$ denote the index of the file requested by user $l \in \mathcal{L} $, and
let $\mathbf d\triangleq \left(d_1,\cdots, d_L\right)\in \mathcal N^L$ denote the requests of  all the $L$ users. 
The server replies to these $L$ requests by sending a message  over the shared link, which is observed by all the $L$ users.
Let $\psi$ denote the encoding function for the server, which maps the files $W_1,\ldots ,W_N$, the cache contents $\mathbf Z$, and the requests $\mathbf d$ into the multicast message $$Y\triangleq  \psi(W_1,\ldots, W_N, \mathbf Z,\mathbf d)$$ sent by the server over the shared link. Let $\mu_l$ denote the decoding function at user $l$, which maps the multicast message $Y$ received over the shared link, the cache content $Z_l$ and the request $d_l$, to the estimate  $$\widehat{W}_{d_l}\triangleq \mu_l(Y,Z_l,d_l)$$ of the requested file $W_{d_l}$ of user $l\in\mathcal L$.
Each user should be able to recover its requested file from the message received over the shared link
and its cache content. Thus,
we impose the successful content delivery condition
\begin{equation} \label{success}
\widehat{W}_{d_l} = W_{d_l}, \;\; \forall \; l \in \mathcal{L}.
\end{equation}

Given the cache size $M$, the cache contents $\mathbf Z$ and the requests $\mathbf d$ of all the $L$
users, let $R{(M,\mathbf Z,\mathbf d)}F$ be the length (expressed in data units) of the multicast message $Y$,
where $R{(M,\mathbf Z,\mathbf d)}$ represents the (normalized) load of the shared link.
Let
\begin{equation} \label{worst_case}
R(M,\mathbf Z) \triangleq \max_{\mathbf d \in \mathcal N^L} R(M,\mathbf Z,\mathbf d)  \nonumber
\end{equation}
denote the worst-case  (normalized) load of the shared link.
Note that if $M=0$, then in the delivery phase the server simply transmits the union of all requested files over the shared link,
resulting in  $LF$ data units being sent in the worst case of all distinct demands.
Hence, we have
$R(0,\mathbf Z)=L$.
If $M=N$, then all files in the database
can be cached at every user in the placement phase.
Hence, we have $R(N,\mathbf Z)=0$.
In this paper, for all $M \in (0,N)$,
we wish to minimize the worst-case
(over all $\mathbf d \in \mathcal N^L$) load of the shared link in the delivery phase.
The minimization is with respect to the placement strategy (i.e., the caching functions $\{\phi_l:l\in\mathcal L\}$),
the delivery strategy (i.e.,  the encoding function $\psi$),
and the decoding functions $\{\mu_l:l\in\mathcal L\}$,  subject to the successful content delivery condition in (\ref{success}).
Later, we shall use slightly different notations for the worst-case load to reflect the dependency on the specific scheme considered.
In the following, for ease of presentation, we shall use ``load'' to refer to worst-case load.

\section{Decentralized Coded Caching Schemes}
In this section,  we
propose a decentralized random coded caching scheme and a partially decentralized sequential coded caching scheme.  The two schemes have different requirements on coordination in the placement phase, but share the same procedure in the delivery phase.

\subsection{Cache Content Base and User Information Matrix}
First, motived by Maddah-Ali--Niesen's centralized coded caching scheme \cite{Alicentralized}, we introduce a  cache content base, which is parameterized by 
$K \in \{2,3,\cdots$\},
and is designed to ensure abundant  coded-multicasting opportunities  in  content delivery, especially when the
file size is not large. The construction of the cache content base is similar to that of the cache contents at $K$ users in Maddah-Ali--Niesen's centralized coded caching scheme \cite{Alicentralized}.

For given $K$, we consider special values of cache size
$M \in \mathcal M_K\triangleq\{N/K,2N/K,\ldots ,(K-1)N/K\}$.
The remaining values of $M \in (0,N)$ can be handled by memory sharing \cite{Alicentralized}.
Set $t \triangleq KM/N \in \{1,2,\ldots ,K-1\}$.
Given $K$ and $t$, each file is split into ${K \choose t}$ nonoverlapping packets of $\frac {F}{{K \choose t}}$ data units.
In this paper, we assume that $K$ satisfies $\frac {F}{{K \choose t}}\in\mathbb N$.
We label the packets of file $W_n$ as
$$W_n=(W_{n,\mathcal{T}}:\mathcal{T} \subset \mathcal{K},|\mathcal{T}|=t),$$
where $\mathcal{K} \triangleq \{1,2,\ldots,K\}$.
The cache content base consists of a collection of $K$ cache contents, i.e., $\mathcal{C} \triangleq \{C_1,C_2,\ldots,C_K\}$,
where $$C_k = \left(W_{n,\mathcal{T}} : n\in \mathcal N, k \in \mathcal{T}, \mathcal{T} \subset \mathcal{K},|\mathcal{T}|=t\right).$$
Thus, each  cache content $C_k$ contains $N{K-1 \choose t-1}$ packets, and the total number of data units in cache content $C_k$ is $$N{K-1 \choose t-1}\frac{F}{{K \choose t}}=F\frac{Nt}{K}=FM,$$
which is the same as the cache size. Note that, sometimes, we also refer to cache content $C_k$ as cache content $k$.
Note that the cache content base is carefully designed to create the same coded-multicasting opportunities for every possible set of requests  in the delivery phase,  to reduce the worst-case load.
\begin{Exam}[Cache Content Base] Consider $N=5$, $M=2$ and $K=5$. Then, $t=KM/N=2$, and the cache content base consists of the following cache contents
$$C_1=(W_{n,\{1,2\}},W_{n,\{1,3\}},W_{n,\{1,4\}},W_{n,\{1,5\}}:n\in \mathcal N)$$
$$C_2=(W_{n,\{1,2\}},W_{n,\{2,3\}},W_{n,\{2,4\}},W_{n,\{2,5\}}:n\in \mathcal N)$$
$$C_3=(W_{n,\{1,3\}},W_{n,\{2,3\}},W_{n,\{3,4\}},W_{n,\{3,5\}}:n\in \mathcal N)$$
$$C_4=(W_{n,\{1,4\}},W_{n,\{2,4\}},W_{n,\{3,4\}},W_{n,\{4,5\}}:n\in \mathcal N)$$
$$C_5=(W_{n,\{1,5\}},W_{n,\{2,5\}},W_{n,\{3,5\}},W_{n,\{4,5\}}:n\in \mathcal N).$$
\label{Exm:cache content base}
\end{Exam}

Later, we shall see that in the two proposed coded caching schemes, each user chooses one cache content from this cache content base. In addition, the value of  $K$ affects the loads  of the two proposed schemes.

Next, we introduce a user information matrix, which will be used in content delivery of the two proposed coded caching schemes. Let $X_{k}$ denote the number of users which store $C_k$. Note that  $\mathbf {X} \triangleq (X_{k})_{k \in \mathcal K}$ reflects content placement.
Denote $X_{\max} \triangleq \underset{k=1,\ldots,K}{\max}X_k$. We now introduce a $K \times X_{\max}$ matrix $\mathbf D \triangleq (D_{k,j})_{k\in\mathcal K, j=1,\cdots, X_{\max}}$, referred to as the user information matrix,  to describe the cache contents and requests of all the users.  Specifically,  for  the $k$-th row of this matrix, set  $D_{k,j} \in \mathcal{N}$ to be  the index of the file requested by the $j$-th user who stores $C_k$, if $j \in \{1,2,\ldots ,X_{k}\}$, and set  $D_{k,j}$ to be 0, if  $j \in \{X_{k}+1,X_{k}+2,\ldots ,X_{\max}\}$. Let $\mathcal{\widehat{K}}_j \triangleq \{k \in \mathcal{K} : D_{k,j} \neq 0\}$ denote the index set of the cache contents stored at the users in the $j$-th column. Thus, $\widehat{K}_j \triangleq |\mathcal{\widehat{K}}_j| \leq K$ also represents the number of users in the $j$-th column. Note that $\widehat{K}_j$ is non-increasing with $j$ and  $\sum_{j=1}^{X_{\max}}\widehat{K}_j=L$.
\begin{Exam}[User Information Matrix]
Consider the same setting as in Example~\ref{Exm:cache content base}. In addition, suppose $L=10$ and the cache contents of these users are as follows: $Z_1=C_2,Z_2=C_1,Z_3=C_3,Z_4=C_1,Z_5=C_1,Z_6=C_3,Z_7=C_2,Z_8=C_5,Z_9=C_2,Z_{10}=C_4$. Then, we have $X_1=3, \ X_2=3, \ X_3=2, \ X_4=1,\  X_5=1$, $\mathcal{\widehat{K}}_1=\{1,2,3,4,5\},\ \mathcal{\widehat{K}}_2=\{1,2,3\},\ \mathcal{\widehat{K}}_3=\{1,2\}$,
and  the user information matrix is
\begin{align}
\mathbf D=(D_{k,j})_{k\in\mathcal K, j=1,\cdots, X_{\max}}=\begin{bmatrix}
  d_2 & d_4 & d_5 \\
  d_1 & d_7 & d_9 \\
  d_3 & d_6 & 0 \\
  d_{10} & 0 & 0 \\
  d_8 & 0 & 0 \\
\end{bmatrix}.\label{eqn:ex-matrix}
\end{align}
\label{Exm:user information matrix}
\end{Exam}

Later, we shall see that based on the user information matrix,   the requests of   users in the same column are satisfied simultaneously using  coded-multicasting, while the requests of users in different columns are satisfied separately.

\subsection{Decentralized Random  Coded Caching Scheme}\label{subsec:rand}
In this part, we present the placement and delivery procedures of our proposed decentralized random coded caching scheme.
Specifically, in the placement phase, each user $l \in \mathcal{L}$ independently stores one cache content from the cache content base of cardinality
$K$ with uniform probability $\frac {1}{K}$.
Note that the placement procedure does not require any coordination and can be operated in a decentralized manner. For example,  the number of active users in the delivery phase is not required during the placement phase.

In the  delivery phase, the  users in each column  are served simultaneously using coded-multicasting.  
Consider the $j$-th column. Denote $\overline{\tau_j} \triangleq \min \{t+1,\widehat{K}_j\}$ and $\underline{\tau_j} \triangleq \max\{1,t+1-(K-\widehat{K}_j)\}$.
Consider any $\tau_j \in \{\underline{\tau_j},\underline{\tau_j}+1,\ldots,\overline{\tau_j}\}$. We focus on a subset $\mathcal S_j^1 \subseteq \mathcal{\widehat{K}}_j$ with $|\mathcal S_j^1|=\tau_j$ and a subset $\mathcal S_j^2 \subseteq \mathcal{K}-\mathcal{\widehat{K}}_j$ with $|\mathcal S_j^2|=t+1-\tau_j$.\footnote{By taking all values of $\tau_j $ in $ \{\underline{\tau_j},\underline{\tau_j}+1,\ldots,\overline{\tau_j}\}$, we can go through all subsets $\mathcal S_j^1 \subseteq \mathcal{\widehat{K}}_j$ and $\mathcal S_j^2 \subseteq \mathcal{K}-\mathcal{\widehat{K}}_j$, such that $\mathcal S_j^1 \neq \emptyset$ and $|\mathcal S_j^1 \cup \mathcal S_j^2|=t+1$.}
Observe that every $\tau_j-1$ cache contents in $\mathcal S_j^1$ share a packet that is needed by the user which stores  the remaining cache content in $\mathcal S_j^1$.
More precisely, for any $s \in \mathcal S_j^1$,  the packet $W_{D_{s,j},(\mathcal S_j^1 \setminus \{s\}) \cup \mathcal S_j^2}$ is requested by the user storing cache content $s$, since it is a packet of $W_{D_{s,j}}$.
At the same time, it is missing at cache content $s$ since $s \notin \mathcal S_j^1 \setminus \{s\}$. Finally, it is present in the cache
content $k \in \mathcal S_j^1 \setminus \{s\}$. For any subset $\mathcal S_j^1$ of cardinality $|\mathcal S_j^1|=\tau_j$ and subset $\mathcal S_j^2$ of cardinality $|\mathcal S_j^2|=t+1-\tau_j$, the server transmits coded multicast message $$\oplus_{s \in \mathcal S_j^1} W_{D_{s,j},(\mathcal S_j^1 \setminus \{s\}) \cup \mathcal S_j^2},$$
where $\oplus$ denotes bitwise XOR.
Note that the delivery procedure and the corresponding load are the same over all possible  requests.
In Algorithm~\ref{alg:col}, we formally describe the delivery procedure for the users in the $j$-th column of the user information matrix. Note that when $\widehat{\mathcal K}_j=\mathcal K$, the proposed delivery procedure for the $j$-th column in Algorithm~\ref{alg:col} reduces to the one in Maddah-Ali--Niesen's centralized scheme \cite{Alicentralized}.
The delivery procedure for the $j$-th column is repeated for all columns
$j = 1, \ldots, X_{\max}$, and  the multicast message $Y$  is simply the concatenation of the coded multicast messages for all columns $j = 1, \ldots, X_{\max}$.

\begin{algorithm}[t]
\caption{Delivery Algorithm for Column $j$}
\begin{algorithmic}[1]
\STATE \textbf{initialize}  $\overline{\tau_j} \leftarrow \min \{t+1,\widehat{K}_j\}$, $\underline{\tau_j} \leftarrow \max \{1,t+1-(K-\widehat{K}_j)\}$ and $t \leftarrow \frac {KM}{N}$.
\FOR {$\tau_j =\underline{\tau_j}:\overline{\tau_j}$}
  \FORALL {$\mathcal S_j^1 \subseteq \mathcal{\widehat{K}}_j , \mathcal S_j^2  \subseteq \mathcal{K}-\mathcal{\widehat{K}}_j: |\mathcal S_j^1|=\tau_j , |\mathcal S_j^2|=t+1-\tau_j$}
     \STATE server sends $\oplus_{s \in \mathcal S_j^1} W_{D_{s,j},(\mathcal S_j^1\setminus \{s\}) \cup \mathcal S_j^2 }$
  \ENDFOR
\ENDFOR
\end{algorithmic}\label{alg:col}
\end{algorithm}
\begin{Exam}[Content Delivery] \label{Exm:Content_Delivery}
Consider the same  setting as in Example~\ref{Exm:user information matrix}. According to Algorithm~\ref{alg:col}, the coded multicast messages for the three columns in the user information matrix given in \eqref{eqn:ex-matrix}  are illustrated  in Table~\ref{tab:column_1}, Table~\ref{tab:column_2} and Table~\ref{tab:column_3}, separately. By comparing the three tables, we can observe that for given $K$,  coded-multicasting opportunities in each column $j$ decrease as $\widehat K_j$ decreases.
\begin{table}[h]
 \centering
\scriptsize{\begin{tabular}{|c|c|c|c|}
 \hline
 $\tau_1$ & $\mathcal S_1^1$ & $\mathcal S_1^2$ & Coded Multicast Message\\
 \hline
3 & $\{1,2,3\}$ & $\emptyset$ & $W_{D_{1,1},\{2,3\}}\oplus W_{D_{2,1},\{1,3\}} \oplus W_{D_{3,1},\{1,2\}}$ \\
\hline
3 & $\{1,2,4\}$ & $\emptyset$ & $W_{D_{1,1},\{2,4\}}\oplus W_{D_{2,1},\{1,4\}} \oplus W_{D_{4,1},\{1,2\}}$ \\
\hline
3 & $\{1,2,5\}$ & $\emptyset$ & $W_{D_{1,1},\{2,5\}}\oplus W_{D_{2,1},\{1,5\}} \oplus W_{D_{5,1},\{1,2\}}$ \\
\hline
3 & $\{1,3,4\}$ & $\emptyset$ & $W_{D_{1,1},\{3,4\}}\oplus W_{D_{3,1},\{1,4\}} \oplus W_{D_{4,1},\{1,3\}}$ \\
 \hline
3 & $\{1,3,5\}$ & $\emptyset$ & $W_{D_{1,1},\{3,5\}}\oplus W_{D_{3,1},\{1,5\}} \oplus W_{D_{5,1},\{1,3\}}$ \\
 \hline
3 & $\{1,4,5\}$ & $\emptyset$ & $W_{D_{1,1},\{4,5\}}\oplus W_{D_{4,1},\{1,5\}} \oplus W_{D_{5,1},\{1,4\}}$ \\
 \hline
3 & $\{2,3,4\}$ & $\emptyset$ & $W_{D_{2,1},\{3,4\}}\oplus W_{D_{3,1},\{2,4\}} \oplus W_{D_{4,1},\{2,3\}}$ \\
 \hline
3 & $\{2,3,5\}$ & $\emptyset$ & $W_{D_{2,1},\{3,5\}}\oplus W_{D_{3,1},\{2,5\}} \oplus W_{D_{5,1},\{2,3\}}$ \\
 \hline
3 & $\{2,4,5\}$ & $\emptyset$ & $W_{D_{2,1},\{4,5\}}\oplus W_{D_{4,1},\{2,5\}} \oplus W_{D_{5,1},\{2,4\}}$ \\
 \hline
3 & $\{3,4,5\}$ & $\emptyset$ & $W_{D_{3,1},\{4,5\}}\oplus W_{D_{4,1},\{3,5\}} \oplus W_{D_{5,1},\{3,4\}}$ \\
 \hline
 \end{tabular}}
 \caption{\small{Coded multicast message for  column $1$ of  matrix $\mathbf D$ in \eqref{eqn:ex-matrix}.  $\overline{\tau_1}=3$,  and $\underline{\tau_1}=3$, $t=2$, and $\mathcal{\widehat{K}}_1=\{1,2,3,4,5\}$.}}\label{tab:column_1}
\end{table}
\begin{table}[h]
 \centering
\scriptsize{\begin{tabular}{|c|c|c|c|}
 \hline
 $\tau_2$ & $\mathcal S_2^1$ & $\mathcal S_2^2$ &  Coded Multicast Message\\
 \hline
3 & $\{1,2,3\}\ $ & $\emptyset$ & $W_{D_{1,2},\{2,3\}}\oplus W_{D_{2,2},\{1,3\}} \oplus W_{D_{3,2},\{1,2\}}$ \\
\hline
2 & $\{1,2\}\ $ & $\{4\}$ & $W_{D_{1,2},\{2,4\}}\oplus W_{D_{2,2},\{1,4\}}$ \\
\hline
2 & $\{1,2\}\ $ & $\{5\}$ & $W_{D_{1,2},\{2,5\}}\oplus W_{D_{2,2},\{1,5\}}$ \\
\hline
2 & $\{1,3\}\ $ & $\{4\}$ & $W_{D_{1,2},\{3,4\}}\oplus W_{D_{3,2},\{1,4\}}$ \\
 \hline
2 & $\{1,3\}\ $ & $\{5\}$ & $W_{D_{1,2},\{3,5\}}\oplus W_{D_{3,2},\{1,5\}}$ \\
 \hline
2 & $\{2,3\}\ $ & $\{4\}$ & $W_{D_{2,2},\{3,4\}}\oplus W_{D_{3,2},\{2,4\}}$ \\
 \hline
2 & $\{2,3\}\ $ & $\{5\}$ & $W_{D_{2,2},\{3,5\}}\oplus W_{D_{3,2},\{2,5\}}$ \\
 \hline
1 & $\{1\}\ $ & $\{4,5\}$ & $W_{D_{1,2},\{4,5\}}$ \\
 \hline
1 & $\{2\}$ & $\{4,5\}$ & $W_{D_{2,2},\{4,5\}}$ \\
 \hline
1 & $\{3\}$ & $\{4,5\}$ & $W_{D_{3,2},\{4,5\}}$ \\
 \hline
 \end{tabular}}
 \caption{\small{Coded multicast  message for  column $2$ of  matrix $\mathbf D$ in \eqref{eqn:ex-matrix}.  $\overline{\tau_2}=3$, $\underline{\tau_2}=1$, $t=2$, and $\mathcal{\widehat{K}}_2=\{1,2,3\}$.}}\label{tab:column_2}
\end{table}
\begin{table}[h]
 \centering
 \scriptsize{\begin{tabular}{|c|c|c|c|}
 \hline
 $\tau_3$ & $\mathcal S_3^1$ & $\mathcal S_3^2$ &  Coded Multicast Message \\
 \hline
2 & $\{1,2\}\ $ & $\{3\}$ & $W_{D_{1,3},\{2,3\}}\oplus W_{D_{2,3},\{1,3\}}$ \\
\hline
2 & $\{1,2\}\ $ & $\{4\}$ & $W_{D_{1,3},\{2,4\}}\oplus W_{D_{2,3},\{1,4\}}$ \\
\hline
2 & $\{1,2\}\ $ & $\{5\}$ & $W_{D_{1,3},\{2,5\}}\oplus W_{D_{2,3},\{1,5\}}$ \\
 \hline
1 & $\{1\}\ $ & $\{3,4\}$ & $W_{D_{1,3},\{3,4\}}$ \\
 \hline
1 & $\{1\}\ $ & $\{3,5\}$ & $W_{D_{1,3},\{3,5\}}$ \\
 \hline
1 & $\{1\}\ $ & $\{4,5\}$ & $W_{D_{1,3},\{4,5\}}$ \\
 \hline
1 & $\{2\}\ $ & $\{3,4\}$ & $W_{D_{2,3},\{3,4\}}$ \\
 \hline
1 & $\{2\}\ $ & $\{3,5\}$ & $W_{D_{2,3},\{3,5\}}$ \\
 \hline
1 & $\{2\}\ $ & $\{4,5\}$ & $W_{D_{2,3},\{4,5\}}$ \\
 \hline
 \end{tabular}}
 \caption{\small{Coded multicast message for  column $3$ of matrix $\mathbf D$ in \eqref{eqn:ex-matrix}. $\overline{\tau_3}=2$, $\underline{\tau_3}=1$, $t=2$, and $\mathcal{\widehat{K}}_3=\{1,2\}$.}}\label{tab:column_3}
\end{table}\label{ex:delivery}
\end{Exam}

Now, we argue that each user can successfully recover its requested file. Consider the user in the $j$-th column which stores cache content $k$.
Consider  subsets  $\mathcal S_j^1 \subseteq \mathcal{\widehat{K}}_j$ and  $\mathcal S_j^2 \subseteq \mathcal{K}-\mathcal{\widehat{K}}_j$, such that $k \in \mathcal S_j^1$ and $|\mathcal S_j^1 \cup \mathcal S_j^2|=t+1$.
Since  cache content $k \in \mathcal S_j^1$ already contains the packets $W_{D_{s,j},(\mathcal S_j^1 \setminus \{s\}) \cup \mathcal S_j^2}$ for all $s \in \mathcal S_j^1 \setminus \{k\}$, the  user storing cache content $k$ can solve
$W_{D_{k,j},(\mathcal S_j^1 \setminus \{k\}) \cup \mathcal S_j^2}$
from the coded multicast message
$$
\oplus_{s \in \mathcal S_j^1} W_{D_{s,j},(\mathcal S_j^1 \setminus \{s\}) \cup \mathcal S_j^2}
$$
sent over the shared link.
Since this is true for every such subsets $\mathcal S_j^1\subseteq \mathcal{\widehat{K}}_j$ and  $\mathcal S_j^2 \subseteq \mathcal{K}-\mathcal{\widehat{K}}_j$ satisfying $k \in \mathcal S_j^1$ and $|\mathcal S_j^1 \cup \mathcal S_j^2|=t+1$, the user in the $j$-th column  storing cache content $k$ is able to recover all packets of the form  $(W_{D_{k,j},(\mathcal S_j^1 \setminus \{k\}) \cup \mathcal S_j^2}:\mathcal S_j^1\subseteq \mathcal{\widehat{K}}_j,\mathcal S_j^2 \subseteq \mathcal{K}-\mathcal{\widehat{K}}_j, k \in \mathcal S_j^1,|(\mathcal S_j^1 \setminus \{k\}) \cup \mathcal S_j^2|=t)$,
which is equivalent to the form $\left(W_{D_{k,j},\mathcal{T}}:\mathcal{T} \subseteq \mathcal{K} \setminus \{k\}, |\mathcal{T}|=t\right)$ of its requested file $W_{D_{k,j}}$.
The remaining packets are of the form $\left(W_{D_{k,j},\mathcal{T}}:k\in\mathcal{T},\mathcal{T} \subset \mathcal{K} ,| \mathcal T|=t\right).$ But these packets are already contained in cache content $k$. Hence, the user in the $j$-th column storing cache content $k$
can recover all packets of its requested file $W_{D_{k,j}}$.
\begin{algorithm}[t]
\caption{Decentralized Random Coded Caching Scheme}
\textbf{Placement Procedure}
\begin{algorithmic}[1]
\FOR {$l \in \mathcal{L}$}
\STATE $Z_{l}  \leftarrow  C_k$, where $k$ is chosen uniformly at random from $\mathcal{K}$
\ENDFOR
\end{algorithmic}
\textbf{Delivery Procedure}
\begin{algorithmic}[1]
\FOR {$j=1,\cdots, X_{\max}$}
\STATE Run Algorithm~\ref{alg:col} for the users in the $j$-th column
\ENDFOR
\end{algorithmic}
\label{alg:random}
\end{algorithm}

Finally, we formally summarize the decentralized random coded caching scheme in Algorithm~\ref{alg:random}. Note that different from Maddah-Ali--Niesen's decentralized and Shanmugam {\em et al.}'s decentralized schemes, for any $K\in \{2,3,\cdots\}$, in the proposed decentralized random coded caching scheme, the lengths of messages involved in the coded multicast XOR operations are the same for all content placements  $\mathbf X$ and all possible requests $\mathbf d$; the number of coded multicast messages is random, and depends on random content placement $\mathbf X$.

\subsection{Decentralized Sequential  Coded Caching Scheme}

Here we consider also a partially decentralized sequential coded caching scheme.
The delivery procedure of this scheme is the same as that of the decentralized random coded caching scheme described in
Section~\ref{subsec:rand}. Therefore, we only present the sequential placement procedure.
As illustrated in Example~\ref{ex:delivery}, for given $K$, coded-multicasting opportunities within each column decrease with the number of users
in the column. For given $K$ and $L$, to maximally create coded-multicasting opportunities among a fixed number of users, it is desired to  regulate the
number of users in each column to be $K$  as much as possible. Based on this key idea, we  propose the following sequential placement procedure.
Specifically, in the placement phase, each user $l \in \mathcal{L}$ chooses cache content $(l-1)\bmod K+1$, and stores it in its cache.
Thus, we have $\widehat{K}_j=K$ for all $j \in \mathbb{N}$ satisfying $1 \leq j \leq \lceil L/K \rceil-1$, and   $\widehat{K}_j=L-(\lceil L/K \rceil-1)K$ for $j=\lceil L/K \rceil$.
For  $K \geq L$ we have obviously only one column and $\widehat K_1 = L$.
Note that the number of active users in the delivery phase is not required during the placement phase.\footnote{
The partially decentralized sequential coded caching scheme can be applied to a dynamic network where users can join and leave  arbitrarily. Suppose $L_{-}$ users leave the network and $L_{+}$ users join the network. We can allocate $\min\{L_{+},L_{-}\}$ of the cache contents  stored in the $L_{-}$ leaving users to $\min\{L_{+},L_{-}\}$ of the $L_{+}$ new users, and continue to allocate the cache contents to the remaining  $L_{+}-\min\{L_{+},L_{-}\}$ new users (if there are any) using the proposed sequential placement procedure.  Note that if $L_{+}\geq L_{-}$, the load analysis of the partially decentralized sequential coded caching scheme still applies. Otherwise, the load analysis provides a lower bound of the actual load.}

Finally, we  formally summarize the partially decentralized sequential coded caching scheme in Algorithm~\ref{alg:seq}. Note that for any $K\in \{2,3,\cdots\}$, in the proposed partially decentralized sequential coded caching scheme, the lengths of messages involved in the coded multicast XOR operations are the same for all possible requests $\mathbf d$.

\begin{Rem}
In  Section V of \cite{Allerton14}, a centralized coded caching scheme based on {\em user grouping} is proposed in order to limit the required file size for a given target load gain.
Specifically, the $L$ users are divided into groups of size $K$, and Maddah-Ali--Niesen's centralized coded caching scheme is applied for each user group separately.
The main difference from what we do here is that the centralized user grouping scheme in Section V of \cite{Allerton14} assumes that $L$ is divisible by $K$, such that
all groups are perfectly balanced. In the proposed sequential coded caching scheme, $K$ is given apriori and $L$ can be any natural number.  If $L$ is indivisible by $K$, the load of the residual users forming the last group will  affect the overall load in general. Note that the impact is significant  when $L/K$ is small.
Hence, here we carefully consider the user grouping scheme and analyze also the load when $L$ is  indivisible by $K$.
\end{Rem}


\begin{algorithm}[h]
\caption{Decentralized Sequential Coded Caching Scheme}
\textbf{Placement Procedure}
\begin{algorithmic}[1]
\FOR {$l \in \mathcal{L}$}
\STATE $Z_{l}  \leftarrow  C_k$, where $k=\left ( (l-1)\bmod K \right ) +1$
\ENDFOR
\end{algorithmic}
\textbf{Delivery Procedure}
\begin{algorithmic}[1]
\FOR {$j=1,\cdots, X_{\max}$}
\STATE Run Algorithm~\ref{alg:col} for the users in the $j$-th column
\ENDFOR
\end{algorithmic}
\label{alg:seq}
\end{algorithm}

\section{Preliminaries}\label{Sec:pre}

First, consider  one column.
Denote $r(M,K,\widehat{K}_j)$ as the load for serving the $\widehat{K}_j$ users in the $j$-th column.
\begin{Lem} [Per-column Load] The per-column  load for serving $\widehat{K}_j$  users is given by
\begin{align}
r(M,K,\widehat{K}_j)=
\begin{cases}
\frac{{K \choose KM/N+1} - {K-\widehat{K}_j \choose KM/N+1}}{{K \choose KM/N}}
, &\widehat{K}_j +1 \leq K(1-M/N)\\
\frac{{K \choose KM/N+1} }{{K \choose KM/N}}, &\widehat{K}_j +1 > K(1-M/N).
\end{cases}
\label{eqn:col}
\end{align}
\label{Lem:col}
\end{Lem}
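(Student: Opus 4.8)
\emph{Proof sketch.} The plan is to reduce the statement to a direct count of the coded multicast messages produced by Algorithm~\ref{alg:col}. Recall from the cache content base construction that each file is split into $\binom{K}{t}$ packets of $F/\binom{K}{t}$ data units, where $t = KM/N \in \{1,\ldots,K-1\}$, and that every message $\oplus_{s \in \mathcal S_j^1} W_{D_{s,j},(\mathcal S_j^1 \setminus \{s\}) \cup \mathcal S_j^2}$ emitted in line~4 is the bitwise XOR of $|\mathcal S_j^1|$ packets of common length $F/\binom{K}{t}$, hence is itself a single block of $F/\binom{K}{t}$ data units. Therefore the normalized per-column load $r(M,K,\widehat K_j)$ equals the total number of messages sent for column $j$, divided by $\binom{K}{t}$.

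Next I would set up a bijection between the triples $(\tau_j,\mathcal S_j^1,\mathcal S_j^2)$ iterated over by the two nested loops of Algorithm~\ref{alg:col} and the subsets $\mathcal U \subseteq \mathcal K$ with $|\mathcal U| = t+1$ and $\mathcal U \cap \widehat{\mathcal K}_j \neq \emptyset$: the forward map is $\mathcal U = \mathcal S_j^1 \cup \mathcal S_j^2$, and the inverse is $\mathcal S_j^1 = \mathcal U \cap \widehat{\mathcal K}_j$, $\mathcal S_j^2 = \mathcal U \setminus \widehat{\mathcal K}_j$, $\tau_j = |\mathcal U \cap \widehat{\mathcal K}_j|$ (using $\widehat{\mathcal K}_j \cap (\mathcal K \setminus \widehat{\mathcal K}_j) = \emptyset$). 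The point requiring a check is that the loop bounds $\underline{\tau_j} = \max\{1, t+1-(K-\widehat K_j)\}$ and $\overline{\tau_j} = \min\{t+1, \widehat K_j\}$ are exactly the range of $|\mathcal U \cap \widehat{\mathcal K}_j|$ over all admissible $\mathcal U$: the lower bound follows from $\mathcal S_j^1 \neq \emptyset$ together with $|\mathcal S_j^2| \le K - \widehat K_j$, and the upper bound from $\mathcal S_j^1 \subseteq \widehat{\mathcal K}_j$ together with $|\mathcal S_j^1| \le t+1$. Since distinct pairs yield distinct $\mathcal U$ and every admissible $\mathcal U$ is produced exactly once, the number of messages sent for column $j$ is $\#\{\mathcal U \subseteq \mathcal K : |\mathcal U| = t+1,\ \mathcal U \cap \widehat{\mathcal K}_j \neq \emptyset\}$.

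Finally I would evaluate this count by complementation: among the $\binom{K}{t+1}$ subsets of $\mathcal K$ of size $t+1$, the ones violating the nonempty-intersection condition are precisely those contained in $\mathcal K \setminus \widehat{\mathcal K}_j$, a set of size $K - \widehat K_j$, of which there are $\binom{K-\widehat K_j}{t+1}$. Hence the message count is $\binom{K}{t+1} - \binom{K-\widehat K_j}{t+1}$, and dividing by $\binom{K}{t}$ gives the first line of \eqref{eqn:col}. The case split is then immediate from the convention $\binom{n}{k}=0$ for $k>n$: writing $K(1-M/N) = K - t$, the condition $\widehat K_j + 1 > K(1-M/N)$ is $t+1 > K - \widehat K_j$, which makes the subtracted term vanish and yields the second line; otherwise it is a genuine binomial coefficient and the first line stands.

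I do not anticipate a real obstacle; the content is a bijective count. The only step deserving care is confirming that the outer loop over $\tau_j$ and the inner loop over $(\mathcal S_j^1,\mathcal S_j^2)$ together enumerate each $(t+1)$-subset meeting $\widehat{\mathcal K}_j$ exactly once, with no omissions or repeats — that is, the tightness of the $\underline{\tau_j},\overline{\tau_j}$ bounds and well-definedness of the bijection above; the remainder is bookkeeping.
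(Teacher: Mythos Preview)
Your proof is correct but takes a genuinely different route from the paper's. The paper counts the messages as the double sum $\sum_{\tau_j=\underline{\tau_j}}^{\overline{\tau_j}}\binom{\widehat K_j}{\tau_j}\binom{K-\widehat K_j}{t+1-\tau_j}$ and then evaluates it via the Vandermonde (and Chu--Vandermonde) identity, splitting into four cases according to whether $K-\widehat K_j \ge t+1$ and whether $\widehat K_j \ge t+1$; in each case it shows the sum equals $\binom{K}{t+1}$ or $\binom{K}{t+1}-\binom{K-\widehat K_j}{t+1}$. You instead bypass the sum entirely by exhibiting a bijection between the algorithm's iterations and the $(t{+}1)$-subsets of $\mathcal K$ that meet $\widehat{\mathcal K}_j$, then counting those subsets by complementation. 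Your argument is more elementary and handles all regimes uniformly, with the case split falling out of the convention $\binom{n}{k}=0$ for $k>n$; the paper's approach is more computational but makes explicit the stratification by $\tau_j=|\mathcal U\cap\widehat{\mathcal K}_j|$ that your bijection packages implicitly. Either way the content is the same combinatorial fact; your route is shorter and, as you note, the only point requiring care is the tightness of the loop bounds, which you have verified correctly.
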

\begin{proof}
Please refer to Appendix A.
\end{proof}

Based on Lemma~\ref{Lem:col}, we now obtain the load for serving all the users.
Recall that $X_k$ denotes the number of users storing  cache content $k$. Let $X_{(1)} \leq X_{(2)} \leq \ldots X_{(K-1)} \leq X_{(K)}$ be the $X_{k}$'s arranged in increasing order, so that $X_{(k)}$ is the $k$-th smallest. Note that $X_{(K)}=X_{\max}$. Set $X_{(0)}=0$. For all $j\in\mathbb N$ satisfying $X_{(k-1)}<j\leq X_{(k)}$, we have $\widehat K_j=K-k+1$, where $k=1,\cdots, K$.
We denote $$R(M,K,L,\mathbf {X})\triangleq \sum_{j=1}^{X_{\max}}r(M,K,\widehat{K}_j)$$ as the load for serving all the users for given $\mathbf X$.
Note that $\sum_{j=1}^{K}X_k=L$.
Thus, based on Lemma~\ref{Lem:col}, we can obtain  $R(M,K,L,\mathbf {X})$.
\begin{Lem} [Load for All Users] The  load for serving all the users for given $\mathbf X$ is given by
\begin{align}
R(M,K,L,\mathbf {X})=\frac{1}{{K \choose KM/N}} \sum_{k=KM/N+1}^{K}X_{(k)}  {k-1 \choose KM/N}.\label{eqn:matrix}
\end{align}\label{Lem:matrix}
\end{Lem}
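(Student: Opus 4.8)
The plan is to start from the definition $R(M,K,L,\mathbf{X})=\sum_{j=1}^{X_{\max}}r(M,K,\widehat{K}_j)$ and reorganize the sum over columns into a sum over the distinct values taken by $\widehat{K}_j$. Recall from the discussion preceding the statement that, once the $X_k$'s are sorted as $X_{(1)}\le X_{(2)}\le\cdots\le X_{(K)}$ with $X_{(0)}=0$, the columns indexed by $j$ with $X_{(k-1)}<j\le X_{(k)}$ all have $\widehat{K}_j=K-k+1$, and there are exactly $X_{(k)}-X_{(k-1)}$ such columns. Hence I would first write
\begin{align}
R(M,K,L,\mathbf{X})=\sum_{k=1}^{K}\bigl(X_{(k)}-X_{(k-1)}\bigr)\,r\!\left(M,K,K-k+1\right).\nonumber
\end{align}

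Next I would substitute the closed form for $r(M,K,\widehat{K}_j)$ from Lemma~\ref{Lem:col} with $\widehat{K}_j=K-k+1$. Writing $t=KM/N$, the relevant case distinction $\widehat{K}_j+1\le K(1-M/N)$ becomes $K-k+2\le K-t$, i.e. $k\ge t+2$; for $k\le t+1$ the ``saturated'' branch $r={K\choose t+1}/{K\choose t}$ applies, while ${K-\widehat{K}_j\choose t+1}={k-1\choose t+1}$ which vanishes whenever $k-1\le t$. So in fact the single expression $r={K\choose t+1}-{k-1\choose t+1}$ over ${K\choose t}$ is valid for all $k$ once we note ${k-1\choose t+1}=0$ for $k\le t+1$. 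This lets me collapse the case analysis and get
\begin{align}
R(M,K,L,\mathbf{X})=\frac{1}{{K\choose t}}\sum_{k=1}^{K}\bigl(X_{(k)}-X_{(k-1)}\bigr)\left[{K\choose t+1}-{k-1\choose t+1}\right].\nonumber
\end{align}

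The last step is a summation-by-parts (Abel summation) together with the hockey-stick identity. The term $\sum_{k=1}^{K}(X_{(k)}-X_{(k-1)}){K\choose t+1}$ telescopes to $X_{(K)}{K\choose t+1}$. For the remaining piece I would apply Abel summation to $\sum_{k}(X_{(k)}-X_{(k-1)}){k-1\choose t+1}$, converting it into $X_{(K)}{K-1\choose t+1}-\sum_{k=1}^{K-1}X_{(k)}\bigl[{k\choose t+1}-{k-1\choose t+1}\bigr]=X_{(K)}{K-1\choose t+1}-\sum_{k=1}^{K-1}X_{(k)}{k-1\choose t}$, using Pascal's rule ${k\choose t+1}-{k-1\choose t+1}={k-1\choose t}$. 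Combining the two pieces and using ${K\choose t+1}-{K-1\choose t+1}={K-1\choose t}$ to absorb the $k=K$ term, everything consolidates to $\frac{1}{{K\choose t}}\sum_{k=t+1}^{K}X_{(k)}{k-1\choose t}$ (the lower limit being $t+1$ because ${k-1\choose t}=0$ for $k\le t$), which is exactly \eqref{eqn:matrix}. I expect the bookkeeping in this Abel-summation step — keeping track of boundary terms and the shift between ${k-1\choose t+1}$ and ${k-1\choose t}$ — to be the main place where care is needed; the rest is a routine substitution of Lemma~\ref{Lem:col}.
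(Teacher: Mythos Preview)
Your proposal is correct and follows essentially the same route as the paper's proof: regroup the column sum via the order statistics $X_{(k)}$, substitute Lemma~\ref{Lem:col}, telescope the ${K\choose t+1}$ part, and use Pascal's rule to convert the remaining summation-by-parts into $\sum_{k=t+1}^{K}X_{(k)}{k-1\choose t}$. The only cosmetic difference is that you collapse the two branches of Lemma~\ref{Lem:col} up front via ${k-1\choose t+1}=0$ for $k\le t+1$, whereas the paper keeps the split explicit before merging; the algebra is otherwise identical.
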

\begin{proof}Please refer to Appendix B.\end{proof}

\section{Load Analysis}\label{sec:main}
In this section, we first analyze   the loads of the two proposed schemes.
Then, 
we analyze the asymptotic loads of the two proposed schemes,  when the file size is large.
\subsection{Load}
\subsubsection{Loads of Two Proposed Schemes}
To emphasize the dependence of the load on memory size $M$, design parameter $K$ and number of users $L$, let $R_r (M,K,L)\triangleq \mathbb E_{\mathbf X}[R(M,K,L,\mathbf X)]$ denote the average load under the proposed decentralized random  coded caching scheme for given $M$, $K$, and $L$, with $\mathbf X$ given by this scheme.  Here, the average $\mathbb E_{\mathbf X}$ is taken over random content placement $\mathbf X$, which follows a multinomial distribution.
Based on Lemma~\ref{Lem:matrix}, we have the following result.
\begin{Thm} [Load of Decentralized Random Coded Caching Scheme] For $N \in \mathbb{N}$ files,  a cache content base of cardinality $K \in \{2,3,\cdots$\}, and $L\in\mathbb N$ users each with cache size $M \in\mathcal M_K$,   we have
\begin{align}
R_r (M,K,L)
=& \sum_{(x_1,x_2,\ldots,x_K)\in \mathcal X_{K,L}} {L \choose x_1\,x_2 \ldots x_K} \frac{1}{K^{L}} \times
\frac{1}{{K \choose KM/N}}  \sum_{k=KM/N+1}^{K} x_{(k)}  {k-1 \choose KM/N},\label{eqn:ran}
\end{align}
where $\mathcal X_{K,L}\triangleq\{(x_1,x_2,\ldots,x_K)|\sum_{k=1}^{K}x_k=L\}$.
\label{Thm:random}
\end{Thm}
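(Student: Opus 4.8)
The plan is to obtain \eqref{eqn:ran} by computing the expectation in the definition of $R_r(M,K,L)=\mathbb E_{\mathbf X}[R(M,K,L,\mathbf X)]$ directly, using Lemma~\ref{Lem:matrix} together with the fact that, under Algorithm~\ref{alg:random}, the placement vector $\mathbf X=(X_1,\ldots,X_K)$ is a multinomial random vector.

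First I would record the distribution of $\mathbf X$. In the placement procedure each of the $L$ users selects a cache content from $\mathcal C=\{C_1,\ldots,C_K\}$ independently and with uniform probability $1/K$, so $X_k$ is the number of users among $L$ independent trials landing on outcome $k$. Hence $\mathbf X$ is multinomial with parameters $L$ and $(1/K,\ldots,1/K)$, and for every $(x_1,\ldots,x_K)\in\mathcal X_{K,L}$,
\begin{align}
\Pr[\mathbf X=(x_1,\ldots,x_K)]={L \choose x_1\,x_2\,\ldots\,x_K}\frac{1}{K^{L}}. \nonumber
\end{align}
Here it matters that $\mathcal X_{K,L}$ indexes \emph{ordered} tuples: the multinomial coefficient then counts exactly the user-to-content assignments realizing that tuple, with no over- or under-counting, and the order statistics $x_{(KM/N+1)}\le\cdots\le x_{(K)}$ in the target formula are simply read off from each realization.

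Next I would invoke Lemma~\ref{Lem:matrix}: conditioned on $\mathbf X=(x_1,\ldots,x_K)$, the load is the deterministic quantity $R(M,K,L,\mathbf X)=\frac{1}{{K \choose KM/N}}\sum_{k=KM/N+1}^{K}x_{(k)}{k-1 \choose KM/N}$, independent of the requests $\mathbf d$. By the law of total expectation, $R_r(M,K,L)=\sum_{(x_1,\ldots,x_K)\in\mathcal X_{K,L}}\Pr[\mathbf X=(x_1,\ldots,x_K)]\,R(M,K,L,(x_1,\ldots,x_K))$, which is a finite sum since $|\mathcal X_{K,L}|={L+K-1 \choose K-1}<\infty$; substituting the multinomial pmf and the expression from Lemma~\ref{Lem:matrix} yields \eqref{eqn:ran} verbatim.

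The only delicate point — more bookkeeping than depth — is the reduction to this clean multinomial computation: one must confirm that the load depends on the placement \emph{only} through the multiset $\{X_1,\ldots,X_K\}$ (equivalently through the order statistics), so that $\mathbf X$ is the sole remaining source of randomness and the requests drop out entirely. This is precisely the invariance guaranteed by the cache content base construction of Section~\ref{Sec:setting}--III (the XORed message lengths are identical across all placements and all requests), already encapsulated in Lemma~\ref{Lem:col} and Lemma~\ref{Lem:matrix}; once it is invoked, the argument above is immediate. I expect this structural observation, rather than any calculation, to be the main thing worth spelling out.
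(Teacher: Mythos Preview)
Your proposal is correct and follows essentially the same approach as the paper: write $R_r(M,K,L)=\mathbb E_{\mathbf X}[R(M,K,L,\mathbf X)]$, plug in the deterministic expression from Lemma~\ref{Lem:matrix}, and use that $\mathbf X$ is multinomial with pmf ${L \choose x_1\,x_2\,\ldots\,x_K}K^{-L}$. The paper's Appendix~C is in fact terser than your write-up, omitting the side remarks about order statistics and request-independence that you correctly flag as already absorbed into Lemmas~\ref{Lem:col}--\ref{Lem:matrix}.
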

\begin{proof} Please refer to Appendix C.
\end{proof}

To emphasize the dependence of the load on memory size $M$, design parameter $K$ and number of users $L$, let $R_s(M,K,L)\triangleq R(M,K,L,\mathbf X)$ denote the load under the proposed partially decentralized sequential  coded caching scheme for given $M$, $K$, and $L$, with $\mathbf X$ given by this scheme. Based on Lemma~\ref{Lem:matrix}, we have the following result.
\begin{Thm} [Load of Decentralized Sequential  Coded Caching Scheme]
For $N \in \mathbb{N}$ files,  a cache content base of cardinality $K \in \{2,3,\cdots\}$, and $L\in\mathbb N$ users each with cache size $M \in \mathcal M_K$,   we have
\begin{align}
&R_s(M,K,L)\nonumber\\
=&
\small{\begin{cases}
 \lceil L/K \rceil\frac{K(1-M/N)}{1+KM/N}-\frac{K(1-M/N)}{1+KM/N}\prod_{i=0}^{K-\lceil L/K \rceil K+L-1}\frac{K-KM/N-1-i}{K-i}, &L-(\lceil L/K \rceil-1)K +1\leq K(1-M/N)\\
\lceil L /K \rceil \frac{K(1-M/N)}{1+KM/N}, &L-(\lceil L/K \rceil-1)K+1 > K(1-M/N).
\end{cases}}\label{eqn:seq}
\end{align}
Furthermore, for $N \in \mathbb{N}$, $K \in \{2,3,\cdots\}$ and  $L \in \{2,3,\cdots\}$, we have $R_s(M,K,L)$ increases with $K$ for $K\geq L$, and $\arg\min_{K\in \{2,3,\cdots\}}R_s(M,K,L)=L$.
\label{Thm:seq}
\end{Thm}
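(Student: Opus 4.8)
The plan is to read off the column structure induced by the sequential placement, apply Lemma~\ref{Lem:col} column by column to obtain the closed form \eqref{eqn:seq}, and then handle the two monotonicity claims separately. Since user $l$ caches $C_{((l-1)\bmod K)+1}$, with $q\triangleq\lceil L/K\rceil$ one checks directly that $X_{\max}=q$, that $\widehat K_j=K$ for $j=1,\dots,q-1$, and that $\widehat K_q=L-(q-1)K\in\{1,\dots,K\}$, so $R_s(M,K,L)=\sum_{j=1}^{q}r(M,K,\widehat K_j)=(q-1)\,r(M,K,K)+r(M,K,\widehat K_q)$. Because $M>0$ forces $K+1>K(1-M/N)$, Lemma~\ref{Lem:col} gives $r(M,K,K)=\binom{K}{KM/N+1}/\binom{K}{KM/N}=\frac{K(1-M/N)}{1+KM/N}$; for the last column I would substitute the elementary identity $\binom{K-\widehat K_q}{KM/N+1}/\binom{K}{KM/N}=\frac{K(1-M/N)}{1+KM/N}\prod_{i=0}^{\widehat K_q-1}\frac{K-KM/N-1-i}{K-i}$, which follows by writing both sides as products of consecutive integers and cancelling the common tail. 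Observing that $\widehat K_q-1=K-\lceil L/K\rceil K+L-1$, and that the case split in Lemma~\ref{Lem:col} for this column is precisely $\widehat K_q+1\le K(1-M/N)$ versus the reverse, collecting the two contributions reproduces both branches of \eqref{eqn:seq}.

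For the monotonicity in $K$ with $K\ge L$, note first that then $q=1$ and $R_s(M,K,L)=r(M,K,L)$. The obstacle is that \eqref{eqn:seq} naturally writes $r(M,K,L)$ as a product of a factor increasing in $K$ and a factor decreasing in $K$, so a different representation is needed. I would use Pascal's rule in the first branch of Lemma~\ref{Lem:col} to get the recursion $r(M,K,L)=r(M,K,L-1)+\binom{K-L}{KM/N}/\binom{K}{KM/N}$ with $r(M,K,0)=0$, hence
\[
r(M,K,L)=\sum_{\ell=0}^{L-1}\frac{\binom{K-1-\ell}{KM/N}}{\binom{K}{KM/N}}=\sum_{\ell=0}^{L-1}\ \prod_{j=0}^{\ell}\frac{(1-M/N)K-j}{K-j},
\]
with the convention that a term is $0$ when its binomial vanishes (a hockey-stick identity checks that this agrees with the second branch of Lemma~\ref{Lem:col} in the remaining range). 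Keeping the ratio $M/N$ fixed and letting $K$ range over its admissible values (those with $KM/N\in\mathbb N$), each nonzero factor $\frac{(1-M/N)K-j}{K-j}$ is positive and non-decreasing in $K$ (its $K$-derivative has numerator $(M/N)j\ge0$); using that $(1-M/N)K$ is an integer, each term of the sum switches from $0$ to a positive value at a well-defined $K$ and is thereafter non-decreasing, so $r(M,K,L)$ is non-decreasing in $K$, and strictly increasing for $L\ge2$ because of the $\ell=1$ term $(1-M/N)\frac{(1-M/N)K-1}{K-1}$.

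For $\arg\min_{K}R_s(M,K,L)=L$, set $\phi(x)\triangleq\frac{x(1-M/N)}{1+xM/N}$, so $R_s(M,L,L)=\phi(L)$ (this implicitly requires $M\in\mathcal M_L$ for $R_s(M,L,L)$ to be defined, and the minimization runs over admissible $K$). Among admissible $K\ge L$, the previous paragraph shows the minimum is attained uniquely at $K=L$. For admissible $K<L$ there are $q\ge2$ columns with $\widehat K_j\le K$, and the monotonicity in the base just established gives $r(M,K,\widehat K_j)\ge r(M,\widehat K_j,\widehat K_j)=\phi(\widehat K_j)$; since $\phi$ is increasing, strictly concave and $\phi(0)=0$, it is strictly subadditive, so
\[
R_s(M,K,L)=\sum_{j=1}^{q}r(M,K,\widehat K_j)\ \ge\ \sum_{j=1}^{q}\phi(\widehat K_j)\ >\ \phi\Big(\sum_{j=1}^{q}\widehat K_j\Big)=\phi(L)=R_s(M,L,L).
\]
Combining the two ranges, $R_s(M,K,L)>R_s(M,L,L)$ for every admissible $K\ne L$, which is the claim.

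The main obstacle, as flagged, is the monotonicity of $r(M,K,L)$ in $K$: the closed form is a product of an increasing and a decreasing factor, so the sum-of-products representation — together with careful bookkeeping of when individual terms ``turn on'' as $K$ grows and of the integrality of $KM/N$ and $(1-M/N)K$ — is what makes the sign of the increments transparent. Everything else reduces to Lemma~\ref{Lem:col}, Pascal's and hockey-stick identities, and concavity of $x\mapsto x/(1+xM/N)$.
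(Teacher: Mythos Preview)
Your derivation of \eqref{eqn:seq} and of the monotonicity for $K\ge L$ is essentially the paper's proof: the paper also reduces to $R_s=(q-1)r(M,K,K)+r(M,K,\widehat K_q)$ via Lemma~\ref{Lem:col}, and for $K\ge L$ it derives the same sum-of-products representation $\sum_{\ell=0}^{L-1}\prod_{i=0}^{\ell}\bigl(1-\tfrac{M/N}{1-i/K}\bigr)$ (after a case split at $K=\tfrac{L+1}{1-M/N}$ that plays the role of your ``terms turning on'') and notes each factor is increasing in $K$.

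For $K<L$ you take a genuinely different, and cleaner, route. The paper keeps the first $q-1$ columns exact as $(q-1)\phi(K)=\tfrac{(L-\widehat K_q)(1-M/N)}{1+KM/N}$, bounds only the last column by $r(M,K,\widehat K_q)\ge \phi(\widehat K_q)$, and then verifies $\tfrac{(L-\widehat K_q)(1-M/N)}{1+KM/N}+\phi(\widehat K_q)>\phi(L)$ by an explicit algebraic identity. You instead bound every column by $\phi(\widehat K_j)$ (which is equality for $j<q$) and invoke strict subadditivity of the concave function $\phi$ with $\phi(0)=0$ to conclude $\sum_j\phi(\widehat K_j)>\phi(L)$. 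Both arguments hinge on the same key inequality $r(M,K,\widehat K_q)\ge\phi(\widehat K_q)$; your subadditivity step replaces the paper's computation and makes the structure more transparent. Your flag that $M\in\mathcal M_L$ (and more generally admissibility of $K$) is implicitly assumed is well taken---the paper is equally informal on this point.
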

\begin{proof} Please refer to Appendix D.
\end{proof}

We now compare the loads of the two proposed schemes.
\begin{Thm} [Load Comparison of Two Proposed Schemes] \label{Thm:comparison}
For $N \in \mathbb{N}$ files,  a cache content base of cardinality $K \in \{2,3,\cdots\}$, and $L\in\mathbb N$ users each with cache size $M \in\mathcal M_K$,   we have $R_r(M,K,L) = R_s(M,K,L)=1-M/N$ when $L=1$, and $R_r(M,K,L) > R_s(M,K,L)$ when $L \in \{2,3,\cdots\}$. \label{Thm:comp}
\end{Thm}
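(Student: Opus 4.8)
The plan is to work from the closed form in Lemma~\ref{Lem:matrix}, viewing $R(M,K,L,\mathbf X)$ as a symmetric function of the placement vector $\mathbf X$ and showing it is minimized, strictly, at the perfectly balanced placement produced by the sequential scheme. For the degenerate case $L=1$: whichever cache content the single user picks, exactly one $X_k$ equals $1$ and the rest vanish, so $X_{(K)}=1$ and $X_{(k)}=0$ for $k<K$; substituting into \eqref{eqn:matrix} gives $R(M,K,1,\mathbf X)=\binom{K-1}{KM/N}/\binom{K}{KM/N}=1-M/N$ irrespective of the choice, hence $R_r(M,K,1)=R_s(M,K,1)=1-M/N$.

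Now fix $L\ge 2$ and set $t=KM/N$, so $1\le t\le K-1$. Write $X_{[1]}\ge\cdots\ge X_{[K]}$ for the entries of $\mathbf X$ in decreasing order and $S_j(\mathbf X)=\sum_{i=1}^{j}X_{[i]}$ for the partial sums, with $S_K=L$. Rewriting \eqref{eqn:matrix} in decreasing order (substitute $j=K+1-k$) and applying Abel summation, one gets $R(M,K,L,\mathbf X)=\sum_{j=1}^{K-1}w_j\,S_j(\mathbf X)$ with $w_j=\big(\binom{K-j}{t}-\binom{K-j-1}{t}\big)/\binom{K}{t}=\binom{K-j-1}{t-1}/\binom{K}{t}$; thus $w_j>0$ for $1\le j\le K-t$ and $w_j=0$ for $K-t<j\le K-1$, and in particular $w_1>0$. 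Since every $\mathbf X\in\mathcal X_{K,L}$ majorizes the balanced vector $\mathbf X^\star$ whose entries all lie in $\{\lfloor L/K\rfloor,\lceil L/K\rceil\}$, we have $S_j(\mathbf X)\ge S_j(\mathbf X^\star)$ for all $j$, hence $R(M,K,L,\mathbf X)\ge R(M,K,L,\mathbf X^\star)$. The sequential placement of Algorithm~\ref{alg:seq} assigns contents round-robin and therefore realizes exactly $\mathbf X^\star$ (up to a relabeling of contents, immaterial because \eqref{eqn:matrix} is symmetric), so $R_s(M,K,L)=R(M,K,L,\mathbf X^\star)$ equals the minimum of $R(M,K,L,\cdot)$ over $\mathcal X_{K,L}$. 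As $R_r(M,K,L)=\mathbb E_{\mathbf X}[R(M,K,L,\mathbf X)]$ averages values all at least $R_s(M,K,L)$, we obtain $R_r(M,K,L)\ge R_s(M,K,L)$.

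For strictness it suffices to exhibit one placement of positive probability that strictly exceeds the minimum. Take $\mathbf X^\dagger=(L,0,\ldots,0)$, i.e.\ all $L$ users choose the same content; under the random scheme this has probability $K^{-L}>0$. Then $S_1(\mathbf X^\dagger)=L$ while $S_1(\mathbf X^\star)=\lceil L/K\rceil<L$ (as $L\ge 2,\ K\ge 2$), so $R(M,K,L,\mathbf X^\dagger)-R_s(M,K,L)\ge w_1\big(L-\lceil L/K\rceil\big)>0$, and therefore $R_r(M,K,L)-R_s(M,K,L)=\sum_{\mathbf X}\Pr[\mathbf X]\big(R(M,K,L,\mathbf X)-R_s(M,K,L)\big)\ge K^{-L}w_1\big(L-\lceil L/K\rceil\big)>0$.

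The routine parts are the Abel-summation rewriting and the sign check $w_j\ge 0$; the only structural ingredients are (i) that the balanced vector $\mathbf X^\star$ is the majorization-least element of $\mathcal X_{K,L}$ — a standard fact, provable by noting that any unit transfer from a smaller to a larger coordinate does not decrease any partial sum $S_j$ — and (ii) that the round-robin sequential placement produces exactly this balanced vector. I expect the main obstacle to be stating (i) and the induced Schur-monotonicity of $R(M,K,L,\cdot)$ cleanly, not any hard estimate; everything else is bookkeeping. (A more computational alternative is to compare the explicit formulas of Theorems~\ref{Thm:random} and~\ref{Thm:seq} directly, but the majorization argument is shorter and explains why balancing helps.)
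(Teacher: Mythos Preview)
Your proof is correct and takes a genuinely different, more streamlined route than the paper. The paper argues in two stages: first a case-by-case Lemma showing that for any placement with $X_{\max}=2$ and $\widehat K_1<K$ one has $R(M,K,L,\mathbf X)\ge R_s(M,K,L)$, then an explicit iterative ``load-reduction'' algorithm (three auxiliary algorithms) that repeatedly merges or rebalances two columns of the user information matrix until the sequential placement is reached, invoking the two-column lemma at each step. Your approach bypasses all of this by rewriting \eqref{eqn:matrix} via Abel summation as $R(M,K,L,\mathbf X)=\sum_{j=1}^{K-1}w_j S_j(\mathbf X)$ with nonnegative weights $w_j$, which immediately exhibits $R$ as monotone in the majorization order on $\mathcal X_{K,L}$; since the round-robin vector $\mathbf X^\star$ is the majorization-minimal element, the inequality $R(M,K,L,\mathbf X)\ge R_s(M,K,L)$ drops out without any case analysis, and strictness follows from a single witness $\mathbf X^\dagger=(L,0,\dots,0)$. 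What the paper's approach buys is concreteness (no appeal to majorization facts), while yours buys brevity and a structural explanation---it makes transparent that $R(M,K,L,\cdot)$ is Schur-convex, which is really the content of the theorem. Your handling of $L=1$ is the same as the paper's.
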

\begin{proof} Please refer to Appendix E.
\end{proof}

Theorem~\ref{Thm:random} and Theorem~\ref{Thm:seq} show the loads of the two proposed decentralized coded caching schemes for finite $K$, respectively. Theorem~\ref{Thm:comp} further compares the loads  of the two proposed decentralized coded caching schemes for finite $K$.
Note that for finite $K$,
each proposed scheme achieves the same load for all possible requests $\mathbf d$, which is different from  Maddah-Ali--Niesen's decentralized and Shanmugam {\em et al.}'s decentralized schemes. In addition, the partially decentralized sequential coded caching scheme outperforms the decentralized random coded caching scheme. When $L \in \{2,3,\cdots\}$, the minimum (over all $K \in \{2,3,\cdots\}$)  load of the partially decentralized sequential coded caching scheme is achieved at $K=L$.

\subsubsection{Load Comparison with Maddah-Ali--Niesen's and Shanmugam et al.'s Decentralized Schemes}
First, we compare the loads of  the two proposed decentralized schemes with Maddah-Ali--Niesen's decentralized scheme.
Let $\widehat{F}_r(M,K) \triangleq {K \choose KM/N}$, $\widehat{F}_s(M,K) \triangleq {K \choose KM/N}$ and $\widehat{F}_m$ denote the number of  packets per file  (also referred to as the file size) under the proposed decentralized random coded caching scheme, the proposed partially  decentralized sequential coded caching scheme and Maddah-Ali--Niesen's decentralized scheme,  respectively. Let $R_m(M,\widehat{F}_m,L)$ denote the average load under Maddah-Ali--Niesen's decentralized scheme, where the average is taken over random content placement.
By comparing the loads of the two proposed decentralized schemes with the lower bound on the load of  Maddah-Ali--Niesen's decentralized  coded caching scheme given by Theorem~5 of \cite{Allerton14}, we have the following result.
\begin{Thm}[Load Comparison with Maddah-Ali--Niesen's Decentralized Scheme]\label{Thm:vs_ali}
For $N \in \mathbb{N}$ files,  a cache content base of cardinality $K \in \{2,3,\cdots\}$ and cache size $M \in \mathcal M_K$, the following two statements hold.
(i) There exists  $\overline{L}_r(M,K)>0$, such that when $L>\overline{L}_r(M,K)$, we have $R_m(M,\widehat{F}_m,L)>R_r (M,K,L)$, where $\widehat{F}_m=\widehat{F}_r(M,K)$.
(ii) There exists  $\overline{L}_s(M,K)>0$, such that when $L>\overline{L}_s(M,K)$, we have $R_m(M,\widehat{F}_m,L)>R_s (M,K,L)$, where $\widehat{F}_m=\widehat{F}_s(M,K)$.
\end{Thm}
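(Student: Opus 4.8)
The plan is to compare growth rates in $L$. I would first pin down the first-order asymptotics of $R_r(M,K,L)$ and $R_s(M,K,L)$ as $L\to\infty$ with $K$ and $M$ held fixed, then lower-bound $R_m$ by Theorem~5 of \cite{Allerton14} evaluated at the \emph{constant} file size $\widehat F_m=\widehat F_r(M,K)={K\choose KM/N}$, and finally show that this lower bound eventually dominates, taking $\overline L_r(M,K)$ (resp.\ $\overline L_s(M,K)$) to be the crossover point. Write $t=KM/N$. By Lemma~\ref{Lem:matrix}, $R_r(M,K,L)=\frac{1}{{K\choose t}}\sum_{k=t+1}^{K}\mathbb E[X_{(k)}]{k-1\choose t}$, where $X_{(1)}\le\cdots\le X_{(K)}$ are the order statistics of a $\mathrm{Multinomial}(L;1/K,\dots,1/K)$ vector. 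Since $K$ is fixed, a union bound over the $K$ binomial marginals gives $\mathbb E[\max_k|X_k-L/K|]=O(\sqrt L)$, and since $\sum_k(X_{(k)}-L/K)=0$ this forces $\mathbb E[X_{(k)}]=L/K+O(\sqrt L)$ for every $k$. Plugging this in and using the hockey-stick identity $\sum_{k=t+1}^{K}{k-1\choose t}={K\choose t+1}$ together with ${K\choose t+1}/{K\choose t}=\frac{K(1-M/N)}{1+KM/N}$, I obtain $R_r(M,K,L)=\frac{1-M/N}{1+KM/N}L+O(\sqrt L)$. The sequential load is even cleaner: from \eqref{eqn:seq}, using $\lceil L/K\rceil=L/K+O(1)$ and the fact that the subtracted product (when present) lies in $[0,1]$, I get $R_s(M,K,L)=\frac{1-M/N}{1+KM/N}L+O(1)$. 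Hence both proposed schemes have per-user slope $\frac{1-M/N}{1+KM/N}$.

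Next I would invoke Theorem~5 of \cite{Allerton14}: $R_m(M,\widehat F_m,L)\ge\underline R_m(M,\widehat F_m,L)$ for the explicit lower bound $\underline R_m$ of that theorem, and with $\widehat F_m={K\choose KM/N}$ constant this bound is asymptotically linear in $L$, say with slope $c(M,K)$. Everything reduces to the inequality $c(M,K)>\frac{1-M/N}{1+KM/N}$. When $KM/N\ge2$ this is immediate: Theorem~5 implies in particular that the decentralized scheme has load gain at most $2$ once $\widehat F_m\le\frac{N/M}{L}\exp(LM/N)$, and for fixed $\widehat F_m$ this holds for all large $L$ since the right-hand side eventually outgrows any linear function of $L$; hence $c(M,K)\ge\frac{1-M/N}{2}>\frac{1-M/N}{3}\ge\frac{1-M/N}{1+KM/N}$. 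When $KM/N=1$ the two slopes both equal $\frac{1-M/N}{2}$ and the gain-$\le 2$ statement is too weak (indeed \eqref{eqn:seq} shows $R_s(M,K,L)=\frac{1-M/N}{2}L$ whenever $K\mid L$), so here I must read off from the full form of $\underline R_m$ in Theorem~5 of \cite{Allerton14} that its slope strictly exceeds $\frac{1-M/N}{2}$ for every fixed file size. Granting this, $R_m(M,\widehat F_m,L)-R_r(M,K,L)=(c(M,K)-\frac{1-M/N}{1+KM/N})L+O(\sqrt L)$ has positive leading coefficient, so it is positive for all $L$ beyond some finite $\overline L_r(M,K)$, which proves (i). For (ii), the same argument applies verbatim with $R_s$ in place of $R_r$; alternatively, since $R_s(M,K,L)\le R_r(M,K,L)$ by Theorem~\ref{Thm:comp}, any valid $\overline L_r(M,K)$ also serves as $\overline L_s(M,K)$.

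The hard part will be the borderline regime $KM/N=1$ in the step above. There the proposed schemes exactly attain the crude ``gain $\le 2$'' floor, so the whole conclusion rests on showing that the lower bound of Theorem~5 of \cite{Allerton14}, specialized to the fixed file size ${K\choose KM/N}$, has load gain strictly below $2$ and, moreover, misses $2$ by an amount that contributes a positive linear term in $L$; quantifying this margin is the delicate point. A secondary, routine matter is sharpening the $O(\sqrt L)$ moment estimate for the multinomial order statistics if one wants an explicit value of $\overline L_r(M,K)$ rather than just its existence.
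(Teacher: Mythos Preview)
Your overall strategy---upper-bound the proposed schemes linearly in $L$, lower-bound $R_m$ via Theorem~5 of \cite{Allerton14} at the fixed file size ${K\choose KM/N}$, and compare slopes---is exactly what the paper does. The paper's upper bound $R_r^{ub}(M,K,L)=\frac{(L+K\sqrt{L/2})(1-M/N)}{1+KM/N}$ is precisely your $\frac{1-M/N}{1+KM/N}L+O(\sqrt L)$, obtained via the same order-statistic moment bound you sketch.

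Where you go astray is in treating the slope $c(M,K)$ of the Theorem~5 lower bound as an unknown to be bounded indirectly through the ``gain $\le 2$'' corollary. The explicit form quoted in the paper is
\[
\underline R_m(M,\widehat F_m,L)=L\Bigl(1-\tfrac{M}{N}\Bigr)-\widehat F_m\,L^2\,\tfrac{M}{N}\,e^{-2L\frac{M}{N}(1-\frac{M}{N})(1-\frac{1}{L})},
\]
and with $\widehat F_m={K\choose KM/N}$ fixed, the subtracted term is $O(L^2e^{-cL})\to 0$. Hence $c(M,K)=1-M/N$ identically, and the slope comparison reduces to $1-M/N>\frac{1-M/N}{1+KM/N}$, which holds for every $t=KM/N\ge 1$. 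There is no borderline case: the ``hard part'' you flag at $KM/N=1$ is an artifact of the detour through the gain-$\le 2$ statement, which throws away exactly the margin you need. Once you compute the slope directly, the whole proof is two lines, and (ii) follows either by the analogous bound $R_s^{ub}(M,K,L)=\lceil L/K\rceil\frac{K(1-M/N)}{1+KM/N}$ or, as you note, from (i) via Theorem~\ref{Thm:comp}.
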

\begin{proof} Please refer to Appendix F.
\end{proof}

Theorem~\ref{Thm:vs_ali} indicates that,   when the number of users is above a threshold, given the same file size, the  load of each proposed scheme is smaller than that of  Maddah-Ali--Niesen's decentralized scheme. This demonstrates that the two proposed decentralized schemes outperform  Maddah-Ali--Niesen's decentralized scheme in the  finite file size regime, when the number of users is sufficiently large.

Next, we compare  the loads of  the two proposed decentralized schemes with Shanmugam {\em et al.}'s decentralized user grouping coded caching scheme \cite{Allerton14}.
Let  $\widehat{F}_t$ denote the number of  packets per file under  Shanmugam {\em et al.}'s  decentralized scheme. Let $R_t(M,\widehat{F}_t,g, L)$ denote the  average load under Shanmugam {\em et al.}'s  decentralized scheme,  where the average is taken over random content placement  and the system parameter $g \in \mathbb{N}$ satisfies $\frac{L}{\left\lceil\left\lceil\frac{N}{M}\right\rceil 3g\ln\left(\frac{N}{M}\right)\right\rceil} \in \mathbb{N}$.
For purpose of comparison, we need a lower bound on the load and a lower bound on the required file size of  Shanmugam {\em et al.}'s  decentralized scheme, which  are  given by the following lemma.
\begin{Lem}[Lower Bounds on Load and Required File Size of  Shanmugam et al.'s  Decentralized Scheme]\label{Lem:tulino_lb}
For $N \in \mathbb{N}$ files and $L\in\mathbb N$ users each with cache size $M \in (0,N)$,  when $\frac{N}{M} \geq 8$, we have
\begin{align}
R_t(M,\widehat{F}_t,g, L) \geq \frac{L}{g+1}c(M,\widehat{F}_t, g),\label{eqn:R_t}
\end{align}
and
\begin{align}
\widehat{F}_t>\left(1-\frac{R_t(M,\widehat{F}_t,g, L)}{L\left(1-\frac{M}{N}\right)\left(1-\frac{g}{K'}\right)\left(1-\frac{\left\lceil\widehat{F}_t \theta(M,g)\right\rceil }{\widehat{F}_t}\right)}\right)\frac{1}{\left(g+1\right)\left(1-\frac{\left\lceil\widehat{F}_t \theta(M,g)\right\rceil }{\widehat{F}_t}\right)}{K' \choose g},
 \label{eqn:F_t_lb}
\end{align}
where
$c(M,\widehat{F}_t, g) \triangleq \left(1-\frac{g}{K'}\right)\left(1-  \frac{\left\lceil\widehat{F}_t \theta(M,g)\right\rceil }{\widehat{F}_t}\right)\left(1-\frac{M}{N}\right)$,
$K'\triangleq \left\lceil\left\lceil\frac{N}{M}\right\rceil 3g \ln\left(\frac{N}{M}\right)\right\rceil$, $d(M,g)\triangleq \frac{\left\lceil3g\left\lceil\frac{N}{M}\right\rceil\ln\left(\frac{N}{M}\right)\right\rceil}{3g\left\lceil\frac{N}{M}\right\rceil}$, $\delta \triangleq  1-\frac{1}{3d(M,g)}$,
and $\theta(M,g)\triangleq \left(\frac{e^{-\delta}}{(1-\delta)^{1-\delta}}\right)^{\frac{K'}{\left\lceil N/M\right\rceil}}K'\frac{N}{M}$.
\end{Lem}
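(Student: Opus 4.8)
The plan is to go back to the explicit construction of Shanmugam \emph{et al.}'s decentralized user-grouping scheme in Section~V of \cite{Allerton14} and reduce the whole analysis to a single group. In that scheme the $L$ users are partitioned into $L/K'$ groups of $K'$ users each — which is exactly why the divisibility hypothesis $\frac{L}{K'}\in\mathbb N$ is imposed — every user caches each packet of each file independently with probability $M/N$ (the Maddah-Ali--Niesen decentralized placement), and within a group the server runs a centralized-type level-$(g+1)$ coded-multicast step followed by a ``pull-down'' phase that delivers, at reduced coding gain, the packets whose in-group cache multiplicity is too small for the high-gain step. Writing $R_t(M,\widehat F_t,g,L)=\frac{L}{K'}\rho$ with $\rho$ the normalized per-group load, it suffices to analyze one group, for which $\rho\widehat F_t$ equals the number of coded-multicast transmissions plus the number of pull-down transmissions.

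The first substantive step is to control, via concentration, the ``bad'' packets handed to the pull-down phase. For a fixed packet, the number $X$ of the $K'$ users in a group that cache it is $\mathrm{Binomial}(K',M/N)$, so $\mathbb E X=K'\frac{M}{N}\ge K'/\lceil N/M\rceil$; I would apply the multiplicative Chernoff bound $\Pr\!\big[X\le(1-\delta)\mathbb E X\big]\le\big(\frac{e^{-\delta}}{(1-\delta)^{1-\delta}}\big)^{\mathbb E X}$ with the particular choice $\delta=1-\frac{1}{3d(M,g)}$ and then take the appropriate union bound, which yields ``at most a $\theta(M,g)$ fraction of each file, i.e.\ at most $\lceil\widehat F_t\theta(M,g)\rceil$ packets, is set aside for the pull-down phase''. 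This is exactly where $d(M,g)$, $\delta$ and $\theta(M,g)$ come from, and the hypothesis $\frac{N}{M}\ge 8$ is what guarantees $\delta\in(0,1)$ and makes the estimate non-vacuous.

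For the load lower bound \eqref{eqn:R_t} I would then count transmissions within one group. In the worst case of all-distinct demands each of the $K'$ users needs $(1-M/N)\widehat F_t$ uncached packets of its file; the grouping structure lets at most a $(1-g/K')$ fraction of these be delivered in the level-$(g+1)$ coded step, and of those at most a $\big(1-\lceil\widehat F_t\theta(M,g)\rceil/\widehat F_t\big)$ fraction survives the pull-down. Since one coded-multicast transmission discharges at most $g+1$ such packet requests while every pull-down transmission discharges strictly fewer, keeping only the coded contribution (which only decreases the count) gives $\rho\ge\frac{K'(1-g/K')\big(1-\lceil\widehat F_t\theta(M,g)\rceil/\widehat F_t\big)(1-M/N)}{g+1}=\frac{K'}{g+1}c(M,\widehat F_t,g)$; summing over the $L/K'$ groups gives \eqref{eqn:R_t}. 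This is the same ``throw away the less efficient transmissions'' device used for the Maddah-Ali--Niesen lower bound, Theorem~5 of \cite{Allerton14}.

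Finally, for the file-size lower bound \eqref{eqn:F_t_lb} I would combine the per-group transmission decomposition with the combinatorial fact that, inside a group, the distinct level-$(g+1)$ coded messages generated by the centralized-type step are indexed by subsets of the $K'$ cache coordinates, so the number of such messages the scheme can form is tied to $\binom{K'}{g}$. Counting the coded transmissions two ways — once as (per-group load)$\,\times\widehat F_t$ minus the pull-down transmissions, and once through $\binom{K'}{g}$ and the per-type packet counts — and then solving the resulting inequality for $\widehat F_t$ after dividing through by $(g+1)\big(1-\lceil\widehat F_t\theta(M,g)\rceil/\widehat F_t\big)$ and collecting the terms carrying $R_t(M,\widehat F_t,g,L)$ yields \eqref{eqn:F_t_lb}. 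The step I expect to be the main obstacle is precisely this bookkeeping: making the Chernoff/union-bound estimate sharp enough to produce the stated $\theta(M,g)$ rather than a looser constant, and keeping the coded-versus-pull-down split precise enough that the combinatorial factor emerges exactly as $\binom{K'}{g}$ with the displayed coefficient of $R_t(M,\widehat F_t,g,L)$. Once the per-group decomposition is nailed down, both \eqref{eqn:R_t} and \eqref{eqn:F_t_lb} follow by elementary rearrangement.
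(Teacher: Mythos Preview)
Your reduction to a single group and the use of a Chernoff bound to control the pull-down fraction are the right starting moves, but the core of your argument diverges from the paper in a way that leaves a genuine gap.

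First a minor correction: in Shanmugam \emph{et al.}'s scheme the per-packet caching probability is $1/\lceil N/M\rceil$, not $M/N$; this is why the Chernoff exponent in the paper involves $K'/\lceil N/M\rceil$ rather than $K'M/N$, and why $d(M,g)$ has its particular form.

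The substantive gap is in how you obtain \eqref{eqn:R_t}. You argue by transmission counting (``one coded-multicast transmission discharges at most $g+1$ packet requests''), treating $(1-M/N)$ as the fraction of uncached packets and $(1-g/K')$ as some fraction routed to the high-gain step. This is not how the stated constants arise, and such a counting argument will not reproduce them. The paper instead works directly with the explicit per-group load of the scheme,
\[
R_{tj}=\frac{1}{\widehat F_t}\sum_{\mathcal S:|\mathcal S|\le g+1}\ \max_{k\in\mathcal S}|V_{k,\mathcal S\setminus\{k\}}|,
\]
drops all but the $|\mathcal S|=g+1$ terms, and lower-bounds each $\mathbb E[\max_{k\in\mathcal S}|V_{k,\mathcal S\setminus\{k\}}|]$ by conditioning on the event that every requested file has at least $\underline b=\widehat F_t-\lceil\widehat F_t\theta(M,g)\rceil$ packets at level $g$. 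The probability of this event is shown to exceed $1-M/N$ via Markov on the number of sub-level-$g$ packets combined with the Chernoff estimate; that is the origin of the $(1-M/N)$ factor, not a packet-fraction argument. A stochastic-ordering step (monotone coupling of binomials with different $n$) is then needed to pass from the conditional to the case $|\mathcal B_{d_k,g}|=\underline b$ for all $k$, after which $\mathbb E[\max_k|V_k|]\ge\mathbb E[|V_k|]=\underline b/\binom{K'}{g}$. Summing over the $\binom{K'}{g+1}$ subsets gives the factor $(K'-g)/(g+1)$, which after multiplying by $L/K'$ becomes $\frac{L}{g+1}(1-g/K')$. Your proposed derivation does not supply the conditioning/stochastic-ordering step, and without it the expectation of the $\max$ cannot be separated cleanly from the random set sizes.

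For \eqref{eqn:F_t_lb} the discrepancy is larger. Your ``count the coded transmissions two ways'' plan does not produce the displayed inequality; the $\binom{K'}{g}$ on the right and the particular coefficient of $R_t$ come from a \emph{second}, sharper lower bound on $\mathbb E[\max_{k\in\mathcal S}|V_{k,\mathcal S\setminus\{k\}}|\mid\mathbf B_g=\underline{\mathbf b}]$ obtained via conditional Markov and the second-order Bonferroni inequality, yielding the quadratic expression $\frac{(g+1)\underline b}{\binom{K'}{g}}\big(1-\frac{(g+1)\underline b}{\binom{K'}{g}}\big)$. Plugging this back into the sum over $\mathcal S$ and rearranging is what isolates $\widehat F_t$ against $\binom{K'}{g}$. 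A two-way counting of coded transmissions will not generate that quadratic term.
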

\begin{proof} Please refer to Appendix G.
\end{proof}
By comparing the required file sizes  of the two proposed decentralized schemes with the lower bound on the  required  file size of Shanmugam {\em et al.}'s decentralized scheme
given by  \eqref{eqn:F_t_lb}  and using \eqref{eqn:R_t}, we have the following result.
\begin{Thm}[Load Comparison with Shanmugam et al.'s  Decentralized Scheme]\label{Thm:vs_tulino}
For $N \in \mathbb{N}$ files,  a cache content base of cardinality $K \in \{2,3,\cdots\}$ and cache size $M \in \mathcal M_K$, the following two statements hold.
(i) There exists  $q_r>0$ and $\widetilde{L}_r(M,K)>0$, such that when $\frac{N}{M}>q_r$ and  $L>\widetilde{L}_r(M,K)$, for $R_t(M,\widehat{F}_t,g,L)=R_r (M,K,L)$ to hold, we need $\widehat{F}_t>\widehat{F}_r(M,K)$.
(ii) There exists $q_s>0$ and  $\widetilde{L}_s(M,K)>0$, such that when $\frac{N}{M}>q_s$ and $L>\widetilde{L}_s(M,K)$, for $R_t(M,\widehat{F}_t,g,L)=R_s (M,K,L)$ to hold, we need $\widehat{F}_t>\widehat{F}_s(M,K)$.
\end{Thm}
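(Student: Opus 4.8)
The plan is to prove each part by contraposition. Fix $M\in\mathcal M_K$ and set $t\triangleq KM/N\in\{1,\ldots,K-1\}$, so that $N/M=K/t$ and $\widehat{F}_r(M,K)=\widehat{F}_s(M,K)={K\choose t}$ is a fixed constant, independent of $L$ and of Shanmugam {\em et al.}'s parameter $g$. I would assume that, for some admissible $g$, one has $\widehat{F}_t\le{K\choose t}$ and yet $R_t(M,\widehat{F}_t,g,L)=R_r(M,K,L)$ (resp. $=R_s(M,K,L)$), and derive a contradiction whenever $N/M$ and $L$ exceed thresholds $q_r,\widetilde{L}_r(M,K)$ (resp. $q_s,\widetilde{L}_s(M,K)$) to be determined along the way.

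The first ingredient is the asymptotics of the proposed loads. By Theorem~\ref{Thm:random}, $R_r(M,K,L)$ equals the expectation of $\frac{1}{{K\choose t}}\sum_{k=t+1}^{K}X_{(k)}{k-1\choose t}$ over a multinomial vector $\mathbf X$ with $L$ trials and cell probabilities $1/K$; combining the concentration of $\mathbf X$ about its mean $(L/K,\ldots,L/K)$ (fluctuations of order $\sqrt{L}$) with the hockey-stick identity $\sum_{k=t+1}^{K}{k-1\choose t}={K\choose t+1}$, one gets
\[
R_r(M,K,L)=\frac{1-M/N}{1+t}\,L\,\bigl(1+o(1)\bigr),\qquad L\to\infty,
\]
and Theorem~\ref{Thm:seq} (via $\lceil L/K\rceil K\sim L$) yields the same leading term for $R_s$, while Theorem~\ref{Thm:comparison} gives $R_s<R_r$. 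Substituting $R_t=R_r$ (resp. $R_s$) into the file-size bound \eqref{eqn:F_t_lb}, and writing $K'\triangleq\lceil\lceil N/M\rceil 3g\ln(N/M)\rceil$ and $\beta\triangleq 1-\lceil\widehat{F}_t\,\theta(M,g)\rceil/\widehat{F}_t$, the claim reduces to showing that the right-hand side of \eqref{eqn:F_t_lb} exceeds ${K\choose t}$. Since $K'\to\infty$ as $N/M\to\infty$, the binomial ${K'\choose g}$ dwarfs the fixed number ${K\choose t}$ provided $g$ can be forced to be at least $t$ and the two scalar prefactors in \eqref{eqn:F_t_lb} can be kept bounded away from zero.

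Carrying this out, I would first note $g/K'\le\frac{1}{3\lceil N/M\rceil\ln(N/M)}\to0$, and estimate $\theta(M,g)$: it is decreasing in $g$ for $N/M$ large, with $\theta(M,g)\ge 1$ for $g\in\{1,2\}$ once $N/M$ is large — in which case Shanmugam {\em et al.}'s pull-down delivery collapses to uncoded transmission, giving $R_t=L(1-M/N)>R_r$, a contradiction — and with $\theta(M,g)\le\theta(M,3)\to 0$ for $g\ge 3$, so that $q_r$ can be chosen large enough to make $\theta(M,3)<\frac{1}{2(1+t)}$. Hence I may assume $g\ge 3$; then $\beta\ge 1-\theta(M,g)-1/\widehat{F}_t$, which exceeds $\frac{t}{1+t}$ provided also $\widehat{F}_t>2(1+t)$ (the complementary range of bounded $\widehat{F}_t$ is treated below). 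Feeding $R_t=R_r$ into \eqref{eqn:R_t}, i.e. $R_r\ge\frac{L(1-M/N)(1-g/K')\beta}{g+1}$, and using the asymptotics above, gives $g+1\ge(1+t)(1-g/K')\beta\,(1+o(1))^{-1}$; with $\beta>\frac{t}{1+t}$ and $1-g/K'$ close to $1$ this forces $g\ge t$. Since $g\mapsto{K'\choose g}/(g+1)$ is non-decreasing (because $K'(g)$ grows with $g$ while $g\ll K'$), we get ${K'\choose g}/(g+1)\ge{K'\choose t}/(t+1)\ge\bigl(\tfrac{3\ln(N/M)}{2}\bigr)^{t}{K\choose t}/(t+1)$; and since the leading factor of \eqref{eqn:F_t_lb} is $1-\frac{R_r}{L(1-M/N)(1-g/K')\beta}\ge\frac{t}{2(1+t)}-o(1)$, the right-hand side of \eqref{eqn:F_t_lb} is at least $\frac{t}{4(1+t)(t+1)}\bigl(\tfrac{3\ln(N/M)}{2}\bigr)^{t}{K\choose t}$, which exceeds ${K\choose t}$ once $N/M>q_r$ is large enough — contradicting $\widehat{F}_t\le{K\choose t}$. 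Part (ii) is identical with $R_s$ in place of $R_r$, using $R_s<R_r$ where convenient.

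The hard part is the residual cases, which also pin down the exact thresholds. When $\widehat{F}_t\le 2(1+t)$ stays bounded, $\widehat{F}_t$ lies far below the per-group threshold $\frac{N/M}{K'}e^{K'M/N}$, so the "load gain at most $2$" bound of Theorem~5 of \cite{Allerton14} applied to one user group gives $R_t\ge\frac12 L(1-M/N)$, which already exceeds $R_r=\frac{1-M/N}{1+t}L(1+o(1))$ for every $t\ge 2$. The delicate case is $t=1$ (equivalently $M=N/K$), where $\frac{1}{1+t}=\frac12$ sits exactly on the gain-$2$ boundary: there $R_r=\frac12 L(1-M/N)+\Theta(\sqrt{L})$ (and $R_s$ carries its own positive second-order term), so the lower bound $R_t\ge\frac12 L(1-M/N)$ is not enough, and one must instead show, using that $\widehat{F}_t\le{K\choose 1}=K$ is vanishingly small compared to the per-group threshold $\sim(N/M)^{3g}/\ln(N/M)$, that Shanmugam {\em et al.}'s per-group gain stays below $2$ by more than $\Theta(1/\sqrt{L})$ — equivalently, that achieving the load $R_r$ already requires $\widehat{F}_t$ of order $\frac{(N/M)^{3g}}{\ln(N/M)}\ln L\gg K$. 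Pinning down this second-order comparison (via the precise finite-$F$ behaviour of the decentralized scheme in \cite{Allerton14}) and fixing $q_r,q_s,\widetilde{L}_r,\widetilde{L}_s$ accordingly is where the real work lies; everything else is bookkeeping with \eqref{eqn:R_t}–\eqref{eqn:F_t_lb} and the multinomial concentration used above.
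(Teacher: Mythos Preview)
Your approach is quite different from the paper's, and it has real gaps. The paper does not argue by contraposition or case-split on $g$ and $\widehat F_t$: it chains \eqref{eqn:F_t_lb} with ${K'\choose g}\ge(K'/g)^g=\bigl(3\lceil N/M\rceil\,d(M,g)\bigr)^g\ge\bigl(3\lceil N/M\rceil\bigr)^g\bigl(\ln(N/M)\bigr)^g$ and then uses \eqref{eqn:R_t} once more, in the form $g\ge\frac{L\,c(M,\widehat F_t,g)}{R_t}-1$, to replace the exponent $g$ in $\bigl(3\lceil N/M\rceil\bigr)^g$. This yields a single closed-form lower bound \eqref{eqn:file_size_t_lb} on $\widehat F_t$ in terms of $R_t$; setting $R_t=R_r$ (resp.\ $R_s$), dividing by $\widehat F_r={K\choose KM/N}$, and letting $(L,N/M)\to(\infty,\infty)$ gives a ratio tending to $+\infty$. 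No case analysis on $g$, on $\theta$, or on the size of $\widehat F_t$ is needed, and no value of $t$ is special.

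Your argument, by contrast, rests on steps that do not go through. The claim that $\theta(M,g)\ge 1$ for $g\in\{1,2\}$ and $N/M$ large is false: Lemma~\ref{Lem:tulino_lb} already records $\theta(M,g)<1$ for all $g$ once $N/M\ge 8$, and a direct estimate gives $\theta(M,g)\asymp(N/M)^{2-3g}(\ln(N/M))^{g+1}\to 0$ even at $g=1$; so your dismissal of small $g$ has neither the right premise nor the right conclusion (a trivial lower bound on $R_t$ is not the same as $R_t=L(1-M/N)$). In the bounded-$\widehat F_t$ branch you invoke ``Theorem~5 of \cite{Allerton14} applied to one user group'' to get $R_t\ge\tfrac12 L(1-M/N)$, but that theorem lower-bounds Maddah-Ali--Niesen's decentralized scheme, not Shanmugam \textit{et al.}'s user-grouping scheme, so it does not apply. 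And the $t=1$ case you explicitly leave open as ``where the real work lies''; in the paper's route there is no such boundary case at all. The idea you are missing is precisely to substitute the bound on $g$ from \eqref{eqn:R_t} directly into the exponent of $(3\lceil N/M\rceil)^g$, rather than trying to pin $g$ down against $t$---that single substitution is what makes your residual cases disappear.
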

\begin{proof} Please refer to Appendix H.
\end{proof}

We refer to  $\frac{M}{N}$  as  the normalized local cache size. Theorem~\ref{Thm:vs_tulino} indicates that,   when the number of users is above a threshold and the normalized local cache size is below a threshold, to achieve the same load, the  required file size of Shanmugam {\em et al.}'s decentralized scheme is larger than that  of  each proposed scheme.
This demonstrates that the two proposed decentralized schemes outperform  Shanmugam {\em et al.}'s decentralized scheme in the finite file size regime, when the number of users is large and the normalized local cache size is small.

\begin{figure}
\label{fig:simu-ran-gain}
\begin{center}
    \subfigure[\small{$L=9$.}]
  {\resizebox{6cm}{!}{\includegraphics{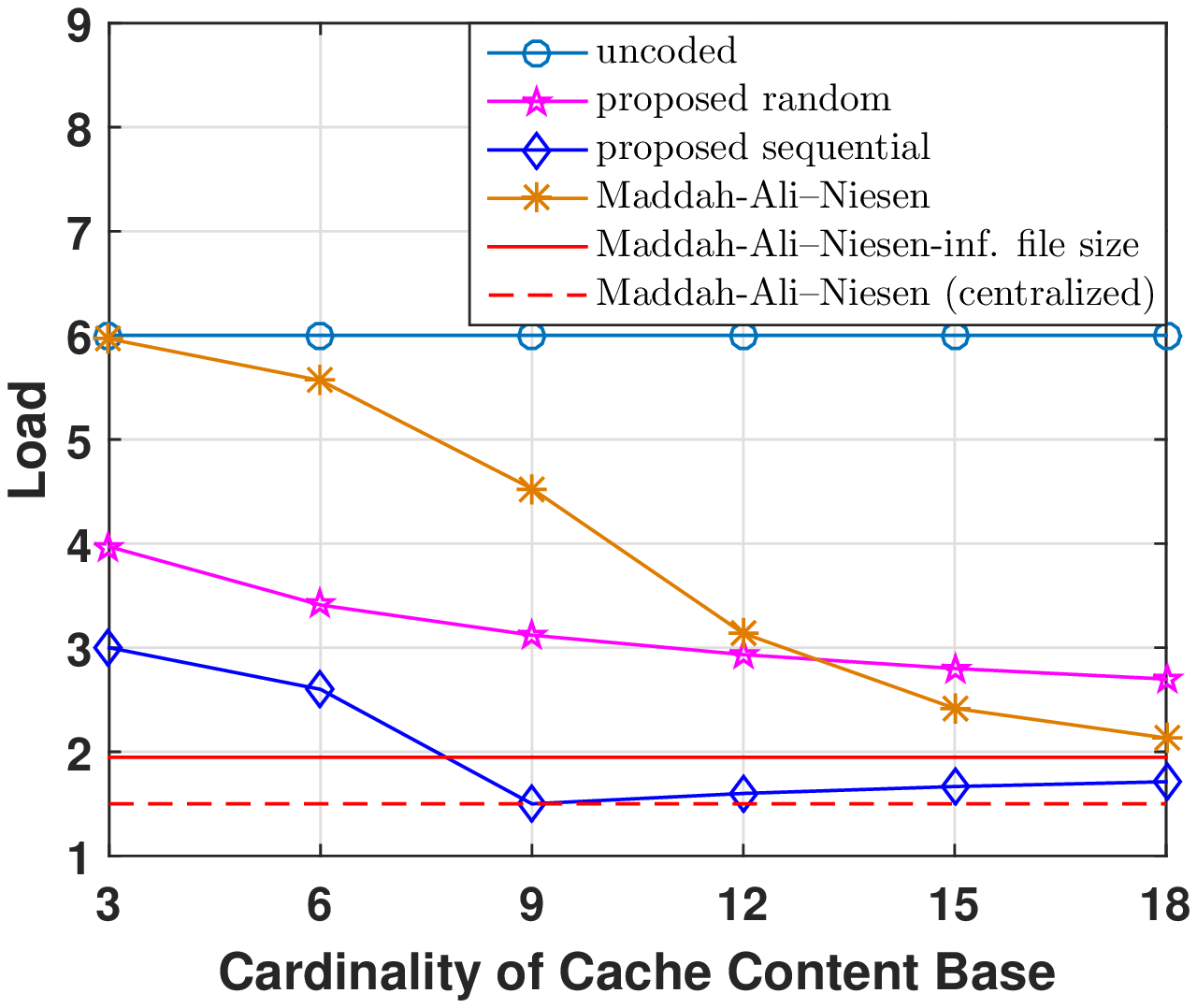}}}
    \subfigure[\small{$L=20$.}]
  {\resizebox{6cm}{!}{\includegraphics{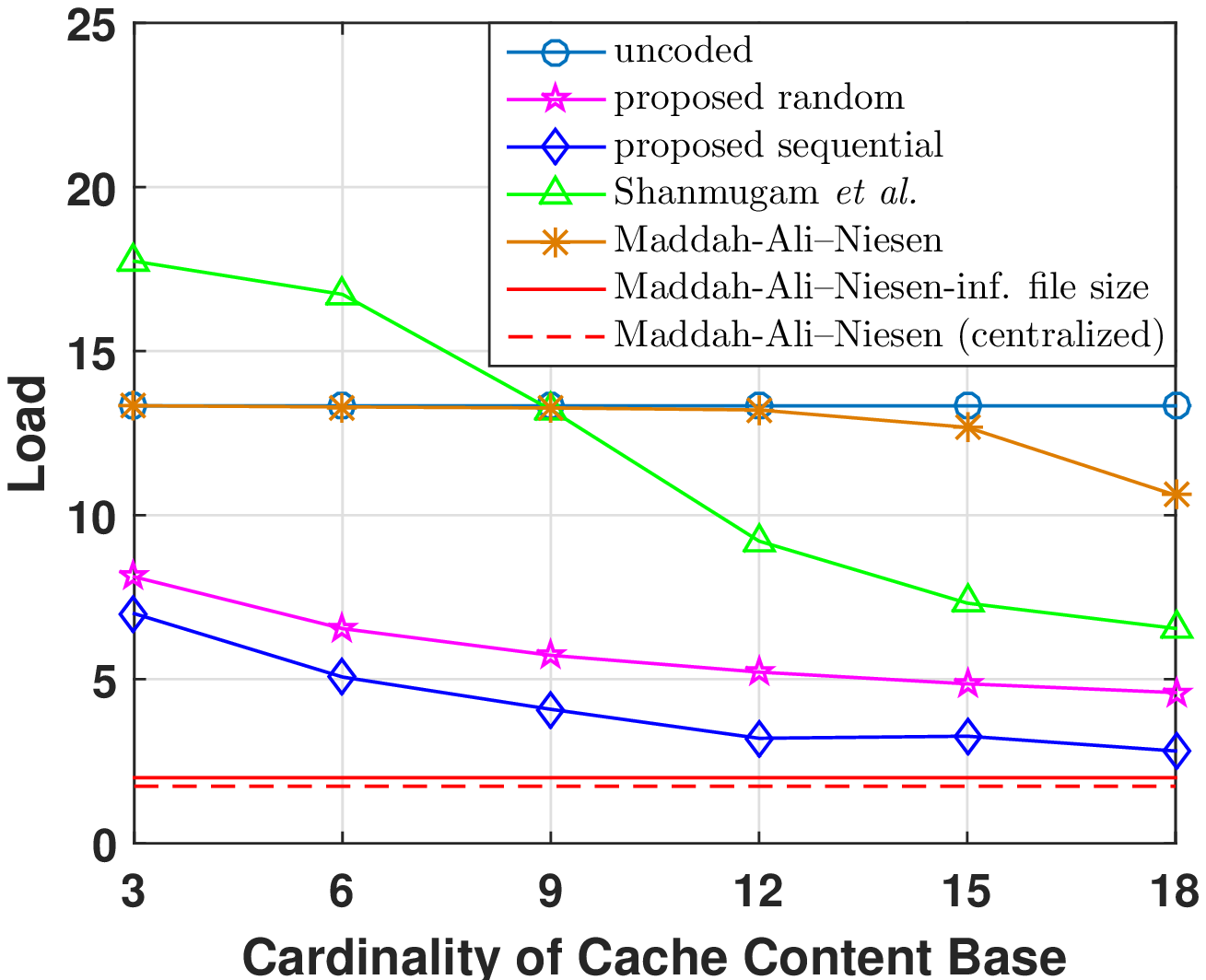}}}
  \subfigure[\small{$L=60$.}]
      {\resizebox{6cm}{!}{\includegraphics{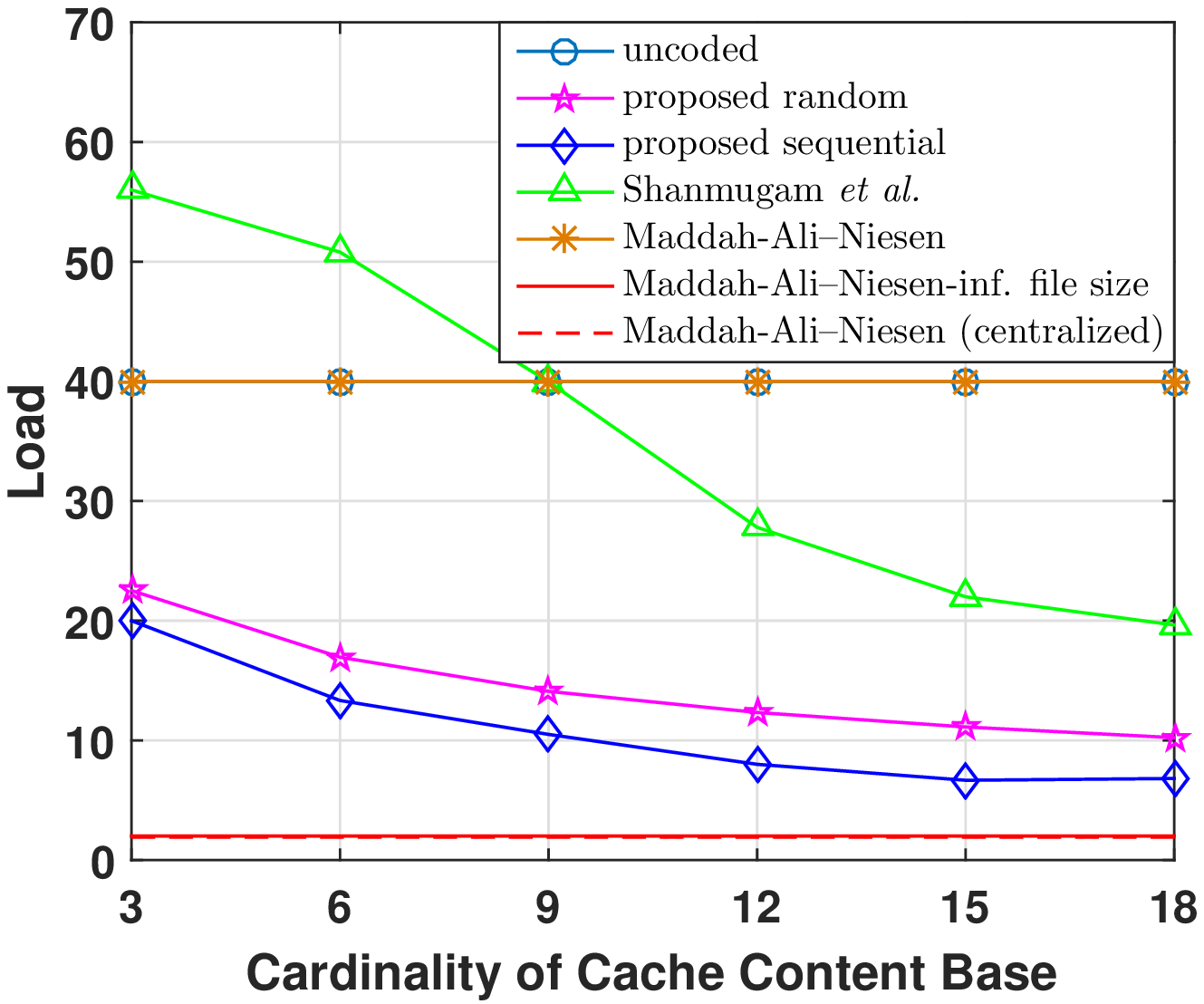}}}
\end{center}
         \caption{\small{Load versus $K$ when  $N=60$ and $M=20$.
         The megenta curve and the brown curve indicate the average loads of the proposed random coded caching scheme and Maddah-Ali--Niesen's decentralized scheme, respectively. The red solid curve indicates the average load of Maddah-Ali--Niesen's decentralized scheme when each file is split into infinite number of packets.  The green curve     indicates the average load of Shanmugam {\em et al.}'s decentralized scheme at $g=2$ (implying $K'=20$).
         The numerical results of the two proposed schemes coincide with the analytical results.
         }}\label{fig:simu}
\end{figure}

\subsubsection{Numerical Results}
Fig.~\ref{fig:simu}  illustrates the average loads  of  the two proposed decentralized coded caching schemes, Maddah-Ali--Niesen's centralized and decentralized  coded caching schemes as well as  Shanmugam {\em et al.}'s decentralized user grouping coded caching  scheme  versus $K$ when $N=60$ and $M= 20 $.
For the proposed decentralized random coded caching scheme, the proposed  partially decentralized  sequential coded caching scheme, Maddah-Ali--Niesen's decentralized scheme and Shanmugam {\em et al.}'s decentralized scheme,  each file is split into ${K \choose KM/N}$   nonoverlapping packets of equal size, while for Maddah-Ali--Niesen's centralized scheme, each file is split into ${L \choose LM/N}$ nonoverlapping packets of equal size. All the  schemes are operated at the level of packets.  In the following, we discuss the observations made from Fig.~\ref{fig:simu}.

First, we compare the loads of these coded caching schemes.
\begin{itemize}
\item {\em   Maddah-Ali--Niesen's centralized coded caching scheme:}   Maddah-Ali--Niesen's centralized scheme achieves the minimum load among all the schemes.
This is because assuming the number of users $L$ in the delivery phase is known in the placement phase,  the  centralized scheme carefully designs the content placement to maximize coded-multicasting opportunities among all users in the delivery phase.
\item {\em  Partially decentralized sequential coded caching scheme:}  
  The proposed sequential coded caching scheme achieves the smallest load  among the four decentralized schemes  in the whole region.
  This is because the sequential  placement procedure  can ensure  more  coded-multicasting opportunities than  the  random placement procedures of the other  (random)  decentralized coded caching schemes.
 \item {\em  Decentralized random coded caching scheme:}
 When $L$ is moderate or large,  the proposed random coded caching scheme achieves  smaller load than Maddah-Ali--Niesen's  and Shanmugam {\em et al.}'s decentralized  schemes, which verifies Theorem~\ref{Thm:vs_ali} and  Theorem~\ref{Thm:vs_tulino}.
  In addition, when  $K$  is  small, the proposed random coded caching scheme achieves smaller load than Maddah-Ali--Niesen's decentralized scheme.
  This is because in these two regimes, the random placement procedures in
Maddah-Ali--Niesen's and Shanmugam {\em et al.}'s
decentralized schemes yield large variance of the lengths of messages involved in the coded multicast XOR operations, leading to a drastic reduction of coded-multicasting opportunities.
\item {\em  Maddah-Ali--Niesen's and Shanmugam et al.'s decentralized coded caching schemes:}   When $K$ is small,  Shanmugam {\em et al.}'s  decentralized scheme achieves larger  load than Maddah-Ali--Niesen's decentralized scheme.  This is because the ``pull down phase'' in Shanmugam {\em et al.}'s decentralized scheme causes  cache memory waste when $K$ is small.  When $K$ is large,  Shanmugam {\em et al.}'s  decentralized scheme achieves  smaller  load than Maddah-Ali--Niesen's decentralized scheme.   This is because  the ``pull down phase'' and the user grouping mechanism in Shanmugam {\em et al.}'s decentralized scheme can provide enough coded-multicasting opportunities when $K$ is large.
\end{itemize}


 Next, we explain the trend of the load change with respect to $K$ for each decentralized coded caching scheme.
\begin{itemize}
\item  {\em Partially decentralized sequential coded caching scheme:} When $K<L$, there are $\lceil L/K\rceil$ columns. Note that coded-multicasting opportunities do not exist for users in different columns. Thus, in this case, the load decreases as $\lceil L/K\rceil$ decreases, as more users can make use of coded-multicasting opportunities. When $K=L \in \{2,3,\cdots\}$, the sequential coded caching scheme reduces to Maddah-Ali--Niesen's centralized scheme,  and coded-multicasting opportunities can be fully exploited, resulting in the minimum load over all $K \in \{2,3,\cdots\}$.
When $K>L$, there is only  one column with $L$ users, some coded-multicasting opportunities are wasted due to lack of users. Thus, in this case, the load increases with $K$, as the waste of coded-multicasting opportunities increases with $K$.
\item  {\em Decentralized random coded caching scheme:} When $K$ increases, the chance that all the users lie in the 1st column increases, and hence more users can make use of coded-multicasting opportunities. On the other hand, when $K$ further increases after reaching $L$, the waste of coded-multicasting opportunities increases due to lack of users. However, overall, when $K$ increases, coded-multicasting opportunities among all users increase, and hence the load decreases.
\item  {\em Maddah-Ali--Niesen's and Shanmugam et al.'s decentralized coded caching schemes:}   When $K$ increases, the variance of the lengths of messages involved in the coded multicast XOR operations decreases,  and hence coded-multicasting opportunities
among all users increase. Thus, when $K$ increases,  the  loads of the two schemes decrease.

\end{itemize}

From the above discussion, we can see that for the partially decentralized sequential coded caching scheme, the design parameter $K$ can be chosen to minimize the average worst-case load in a stochastic network where the number of users $L$ may change randomly according to certain distribution. Here, the average is taken over the random variable $L$. We shall consider the optimal design by optimizing $K$ in future work.

%
\subsection{Asymptotic Load}
Let
\begin{align}
R_{\infty}(M,L) \triangleq (N/M-1)\left(1-(1-M/N)^L\right) \label{eqn:Ali_lim}
\end{align}
denote the limiting load of Maddah-Ali--Niesen's decentralized scheme.
In the following, we study the asymptotic loads of the two proposed schemes, respectively.
\begin{Lem} [Asymptotic Load of Decentralized Random Coded Caching Scheme] \label{Lem:Asymptotic_approximations_R_r}
For $N \in \mathbb{N}$ and $L \in \mathbb{N}$,
we have $\Pr[\widehat K_1=L]\to 1$ as $K\to \infty$, and  when $N$, $M$, and $L$ are fixed, we have
\begin{align}
R_{r,\infty}(M,L)\triangleq& \lim_{K\rightarrow \infty}R_r (M,K,L)=R_{\infty}(M,L),\label{eqn:ran_lim}
\end{align}
where $R_{\infty}(M,L)$ is given by \eqref{eqn:Ali_lim}.
Furthermore, for $N \in \mathbb{N}$ and $L \in \{2,3,\cdots\}$, when $N$, $M$, and $L$ are fixed, we have
\begin{align}
R_r (M,K,L) \leq R_{\infty}(M,L)+\frac{A(M,L)}{K}+o\left(\frac{1}{K}\right),\quad \text{as} \quad  K \to \infty,\label{eqn:R_r___D}
\end{align}
where
\begin{align}
A(M,L) \triangleq \frac{N}{M}\left(\frac{N}{M}-1\right)\left(\left(1-M/N\right)^{L-1}\left(1+\frac{(L+2)(L-1)M}{2N}\right)-1+\frac{L(L-1)M}{2N}\left(\frac{LM}{N}-1\right)\right) \geq 0.\label{eqn:asymp_B}
\end{align}
\end{Lem}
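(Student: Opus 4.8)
The plan is to handle the three assertions in turn; the first two are routine and the $1/K$ bound is the substantive one. For $\Pr[\widehat K_1=L]\to1$: under the placement of Algorithm~\ref{alg:random} the $L$ users pick cache contents i.i.d.\ uniformly on $\{1,\dots,K\}$, and $\widehat K_1$ (the number of distinct contents actually chosen) equals $L$ exactly when all choices are distinct, so $\Pr[\widehat K_1=L]=\prod_{i=1}^{L-1}(1-i/K)\to1$ as $K\to\infty$ with $L$ fixed. For $R_{r,\infty}(M,L)=R_\infty(M,L)$ I would first prove the per-column limit $r(M,K,\ell)\to R_\infty(M,\ell)$ for each fixed $\ell$: with $t=KM/N$, the two branches of Lemma~\ref{Lem:col} combine into $r(M,K,\ell)=\big(\binom{K}{t+1}-\binom{K-\ell}{t+1}\big)/\binom{K}{t}=\tfrac{K-t}{t+1}-\tfrac{K-\ell-t}{t+1}\prod_{i=0}^{t-1}\tfrac{K-\ell-i}{K-i}$ (using $\binom{a}{b}=0$ for $a<b$); then $\tfrac{K-t}{t+1},\tfrac{K-\ell-t}{t+1}\to N/M-1$ and, since $K-t=K(1-M/N)$, $\sum_{i=0}^{t-1}\ln(1-\tfrac{\ell}{K-i})=-\ell(H_K-H_{K-t})+o(1)\to\ell\ln(1-M/N)$ (with $H_n$ the $n$-th harmonic number), so the product tends to $(1-M/N)^\ell$ and $r(M,K,\ell)\to(N/M-1)(1-(1-M/N)^\ell)=R_\infty(M,\ell)$. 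Since $R(M,K,L,\mathbf X)=\sum_j r(M,K,\widehat K_j)$ equals $r(M,K,L)$ on $\{\widehat K_1=L\}$ and always lies in $[0,\,L\cdot N/M]$, we get $|R_r(M,K,L)-r(M,K,L)|\le (L+1)(N/M)\,\Pr[\widehat K_1\neq L]\to0$, hence $R_{r,\infty}(M,L)=\lim_K r(M,K,L)=R_\infty(M,L)$.

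For the $1/K$ bound I would combine a structural upper bound on $R(M,K,L,\mathbf X)$ with a second-order version of the expansion above. Pascal's rule gives $r(M,K,\ell)-r(M,K,\ell-1)=\binom{K-\ell}{t}/\binom{K}{t}$, so $\ell\mapsto r(M,K,\ell)$ is increasing and concave with $r(M,K,0)=0$; in particular $r(M,K,1)=\binom{K-1}{t}/\binom{K}{t}=1-M/N$ and $r(M,K,L-1)+r(M,K,1)-r(M,K,L)\ge0$ (as $\binom{K-L}{t}\le\binom{K-1}{t}$). Writing $C:=L-\widehat K_1$ for the number of ``collisions'', concavity yields $r(M,K,\widehat K_1)\le r(M,K,L)-C\big(r(M,K,L)-r(M,K,L-1)\big)$ and $\sum_{j\ge2}r(M,K,\widehat K_j)\le C\,r(M,K,1)$, so $R(M,K,L,\mathbf X)\le r(M,K,L)+C\big(r(M,K,L-1)+r(M,K,1)-r(M,K,L)\big)$. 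Taking expectations, using $\mathbb E[C]=L-K\big(1-(1-1/K)^L\big)\le\binom{L}{2}/K$ (since $1-Lx+\binom{L}{2}x^2-(1-x)^L$ vanishes at $x=0$ and has nonnegative derivative on $[0,1]$ by Bernoulli's inequality) and the nonnegativity of the bracket, I obtain $R_r(M,K,L)\le r(M,K,L)+\tfrac{L(L-1)}{2K}\big(r(M,K,L-1)+r(M,K,1)-r(M,K,L)\big)$.

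It then remains to insert the refined expansion $r(M,K,\ell)=R_\infty(M,\ell)+B(M,\ell)/K+O(1/K^2)$, which comes from carrying the argument of the first paragraph one order further with the Euler--Maclaurin estimates $\sum_{i=0}^{t-1}(K-i)^{-1}=-\ln(1-M/N)-\tfrac{M/N}{2K(1-M/N)}+O(1/K^2)$ and $\sum_{i=0}^{t-1}(K-i)^{-2}=\tfrac{M/N}{K(1-M/N)}+O(1/K^2)$. Since $r(M,K,L-1)+r(M,K,1)-r(M,K,L)=(1-M/N)-(1-M/N)^L+O(1/K)$ is multiplied by $O(1/K)$, this gives $R_r(M,K,L)\le R_\infty(M,L)+\tfrac{1}{K}\big(B(M,L)+\tfrac{L(L-1)}{2}((1-M/N)-(1-M/N)^L)\big)+o(1/K)$, and the proof finishes with two elementary algebraic facts: first, $B(M,L)+\tfrac{L(L-1)}{2}((1-M/N)-(1-M/N)^L)\le A(M,L)$, the gap being $\tfrac NM(\tfrac NM-1)\tfrac{L(L-1)M/N}{2}\big((1-M/N)^{L-1}-1+(L-1)M/N\big)\ge0$ by Bernoulli; and second, $A(M,L)\ge0$, since $A(M,L)$ is $\tfrac NM(\tfrac NM-1)$ times the sum of $(1-M/N)^{L-1}-1+(L-1)M/N\ge0$ and a function of $M/N$ that vanishes at $0$ and has derivative $L(L-1)(M/N)\big(1-(1-M/N)^{L-2}\big)\ge0$.

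I expect the main obstacle to be the second-order expansion of $r(M,K,\ell)$: because $t=KM/N$ grows linearly with $K$, the product $\prod_{i=0}^{t-1}\tfrac{K-\ell-i}{K-i}$ has a growing number of factors, so one must control $\sum_{i=0}^{t-1}(K-i)^{-1}$ to order $1/K$ and $\sum_{i=0}^{t-1}(K-i)^{-2}$ to order $1/K$; the identity $K-t=K(1-M/N)$ is precisely what keeps every factor bounded away from $0$ and $1$ and makes these estimates uniform.
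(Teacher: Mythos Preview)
Your arguments for $\Pr[\widehat K_1=L]\to1$ and for $R_{r,\infty}=R_\infty$ match the paper's in spirit; the only cosmetic difference is that you use the crude envelope $R\le L\cdot N/M$ instead of the paper's $R\le L(1-M/N)$, which is immaterial for the limit.

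For the $1/K$ bound your route is genuinely different and in fact sharper. The paper simply splits on the event $\{\widehat K_1=L\}$, uses the exact value $r(M,K,L)$ on it, the crude bound $L(1-M/N)$ off it, and expands; this produces exactly $A(M,L)$ as the $1/K$ coefficient. You instead exploit the concavity of $\ell\mapsto r(M,K,\ell)$ to get the pointwise bound $R(M,K,L,\mathbf X)\le r(M,K,L)+C\big(r(M,K,L-1)+r(M,K,1)-r(M,K,L)\big)$ and then bound $\mathbb E[C]$; your resulting coefficient $B(M,L)+\tfrac{L(L-1)}{2}\big((1-M/N)-(1-M/N)^L\big)$ is \emph{strictly} below $A(M,L)$ for $L\ge3$ (the gap you write out is correct and nonnegative by Bernoulli). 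A second technical difference: you represent $r(M,K,\ell)$ via a product with $t=KM/N$ factors and expand through harmonic sums, whereas the paper rewrites it as a product with a \emph{fixed} number $L$ of factors, which makes the $1/K$ expansion a finite Taylor computation and avoids the Euler--Maclaurin step altogether. Both are valid; the paper's is shorter.

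There is, however, a genuine slip in your last sentence. Your proof that $A(M,L)\ge0$ asserts that $g(p,L):=A(M,L)\big/\big(\tfrac NM(\tfrac NM-1)\big)$ decomposes as $\big[(1-p)^{L-1}-1+(L-1)p\big]$ plus a function vanishing at $0$ with derivative $L(L-1)p\big(1-(1-p)^{L-2}\big)$. This is false already for $L=3$: direct computation gives $g(p,3)=5p^3$, while your two pieces are $p^2$ and $2p^3$, which sum to $p^2+2p^3\ne5p^3$; in particular $g(p,3)-p^2=5p^3-p^2$ is negative for small $p$, so the remainder cannot be nondecreasing from $0$. The paper handles $A\ge0$ by a short induction on $L$ (showing $g(p,L+1)-g(p,L)\ge0$ with $g(p,2)=0$), and you can simply substitute that argument here; everything else in your proposal stands.
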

\begin{proof} Please refer to Appendix I.
\end{proof}
\begin{Lem} [Asymptotic load of Decentralized Sequential Coded Caching Scheme] \label{Lem:Asymptotic_approximations_R_s}
For $N \in \mathbb{N}$ and $L \in \mathbb{N}$,  when $N$, $M$, and $L$ are fixed, we have
\begin{align}
R_{s,\infty}(M,L) \triangleq &\lim_{K\rightarrow \infty}R_s(M,K,L)
= R_{\infty}(M,L),\label{eqn:seq_lim}
\end{align}
where $R_{\infty}(M,L)$ is given by \eqref{eqn:Ali_lim}.
Futhermore, for $N \in \mathbb{N}$ and $L \in \{2,3,\cdots\}$,  when $N$, $M$, and $L$ are fixed, we have
\begin{align}
R_s(M,K,L)=R_{\infty}(M,L)+\frac{B(M,L)}{K}+o\left(\frac{1}{K}\right), \quad \text{as} \quad  K \to \infty, \label{eqn:R_s__D}
\end{align}
where
\begin{align}
B(M,L) \triangleq \frac{N}{M}\left(\frac{N}{M}-1\right)\left(\left(1-M/N\right)^{L-1}\left(1+\frac{(L-1)M}{N}\left(1+\frac{LM}{2N}\right)\right)-1\right)<0. \label{eqn:asymp_A}
\end{align}
\end{Lem}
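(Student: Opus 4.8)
The plan is to start from the closed form in Theorem~\ref{Thm:seq} and push it through the limit $K\to\infty$. First I would note that for every $K\ge L$ one has $\lceil L/K\rceil=1$, and then the branch condition $L-(\lceil L/K\rceil-1)K+1\le K(1-M/N)$ in \eqref{eqn:seq} reads $L+1\le K(1-M/N)$, which holds for all $K>(L+1)/(1-M/N)$ (a finite threshold, since $M/N<1$). For such $K$, substituting $\lceil L/K\rceil=1$ into \eqref{eqn:seq} — so that the product's upper index $K-\lceil L/K\rceil K+L-1$ collapses to $L-1$ — gives
\[
R_s(M,K,L)=\frac{K(1-M/N)}{1+KM/N}\left(1-\prod_{i=0}^{L-1}\frac{K(1-M/N)-1-i}{K-i}\right).
\]
Letting $K\to\infty$, the prefactor tends to $N/M-1$ and each of the finitely many factors $\frac{K(1-M/N)-1-i}{K-i}$ tends to $1-M/N$, so the product tends to $(1-M/N)^L$; hence $R_{s,\infty}(M,L)=(N/M-1)(1-(1-M/N)^L)=R_\infty(M,L)$, which is \eqref{eqn:seq_lim}. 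The exchange of limit and finite product needs no justification.

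For the refined statement \eqref{eqn:R_s__D} I would expand to order $1/K$. Writing $m\triangleq M/N$ and using $\log(1+x)=x+O(x^2)$,
\[
\log\prod_{i=0}^{L-1}\frac{K(1-m)-1-i}{K-i}=L\log(1-m)-\frac{1}{K}\left(\frac{1}{1-m}\sum_{i=0}^{L-1}(1+i)-\sum_{i=0}^{L-1}i\right)+O(1/K^{2}),
\]
so, after evaluating the two elementary sums, the product equals $(1-m)^{L}\big(1-\beta/K+O(1/K^{2})\big)$ with $\beta=\frac{L(L+1)}{2(1-m)}-\frac{L(L-1)}{2}$. Combining this with $\frac{K(1-m)}{1+Km}=\frac{1-m}{m}\big(1-\frac{1}{mK}+O(1/K^{2})\big)$ and multiplying out, the constant term is $R_\infty(M,L)$ and the coefficient of $1/K$ is $\frac{1-m}{m}\big((1-m)^{L}\beta-\frac{1}{m}(1-(1-m)^{L})\big)$; a routine simplification — divide by $(1-m)^{L-1}$ and verify the identity $(1-m)(1+m\beta)=1+(L-1)m+\frac{L(L-1)}{2}m^{2}$ — identifies it with $B(M,L)$ as defined in \eqref{eqn:asymp_A}. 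Since $L$ is fixed, only finitely many $O(1/K^{2})$ contributions accumulate, so the remainder is genuinely $o(1/K)$.

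Finally, for $B(M,L)<0$: the prefactor $\frac{N}{M}\big(\frac{N}{M}-1\big)=\frac{1-m}{m^{2}}>0$ for $m\in(0,1)$, so it suffices to prove $h(m)\triangleq(1-m)^{L-1}\big(1+(L-1)m(1+\tfrac{Lm}{2})\big)<1$ on $(0,1)$ for $L\ge2$. Differentiating gives $h'(m)=-\frac{L(L-1)(L+1)}{2}m^{2}(1-m)^{L-2}\le0$, so $h$ is non-increasing on $(0,1)$ with $h(0)=1$, whence $h(m)<1$ there and therefore $B(M,L)<0$. I expect the main obstacle to be the bookkeeping in the $1/K$ expansion — in particular verifying the algebraic identity that makes its $1/K$ coefficient match the stated closed form $B(M,L)$; once that identity is in hand, the negativity follows from the one-line monotonicity argument above.
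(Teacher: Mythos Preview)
Your proposal is correct and follows essentially the same route as the paper: for $K>(L+1)/(1-M/N)$ you reduce \eqref{eqn:seq} to the single-column formula $\frac{K(1-M/N)}{1+KM/N}\bigl(1-\prod_{i=0}^{L-1}\frac{K(1-M/N)-1-i}{K-i}\bigr)$, take the limit termwise for \eqref{eqn:seq_lim}, expand both the prefactor and the product to order $1/K$ for \eqref{eqn:R_s__D}, and prove $B(M,L)<0$ via the derivative $h'(m)=-\tfrac{L(L-1)(L+1)}{2}m^{2}(1-m)^{L-2}$. The paper does exactly this (reusing the expansions it derived for Lemma~\ref{Lem:Asymptotic_approximations_R_r}); your write-up is slightly more explicit in checking the algebraic identity $(1-m)(1+m\beta)=1+(L-1)m+\tfrac{L(L-1)}{2}m^{2}$ that pins down the $1/K$ coefficient as $B(M,L)$.
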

\begin{proof} Please refer to Appendix I.
\end{proof}

\begin{figure}
\begin{center}
  \subfigure[\small{Decentralized random coded caching scheme.}]
  {\resizebox{6.5cm}{!}{\includegraphics{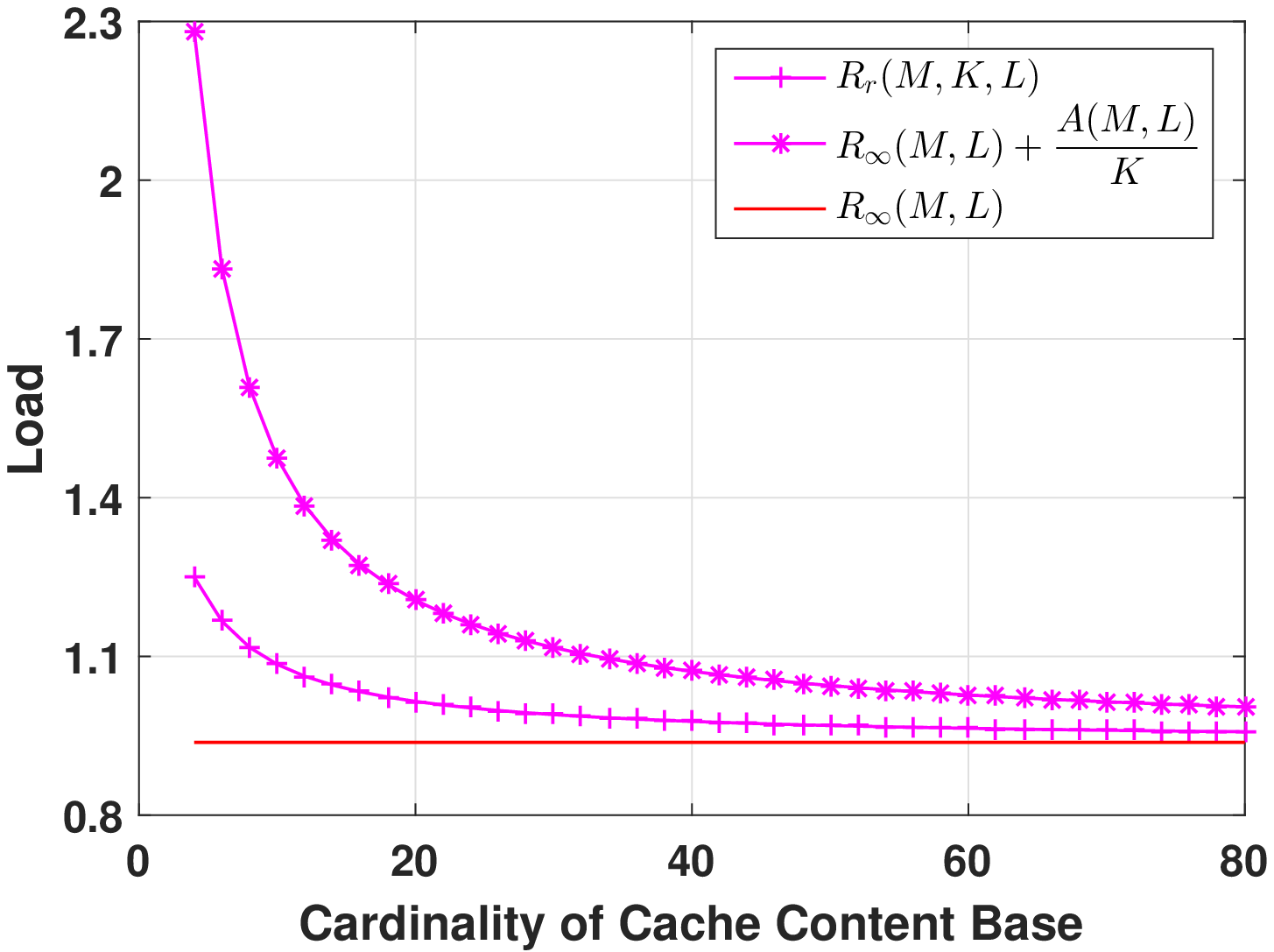}}}\quad
  \subfigure[\small{Decentralized sequential coded caching scheme.}]
  {\resizebox{6.5cm}{!}{\includegraphics{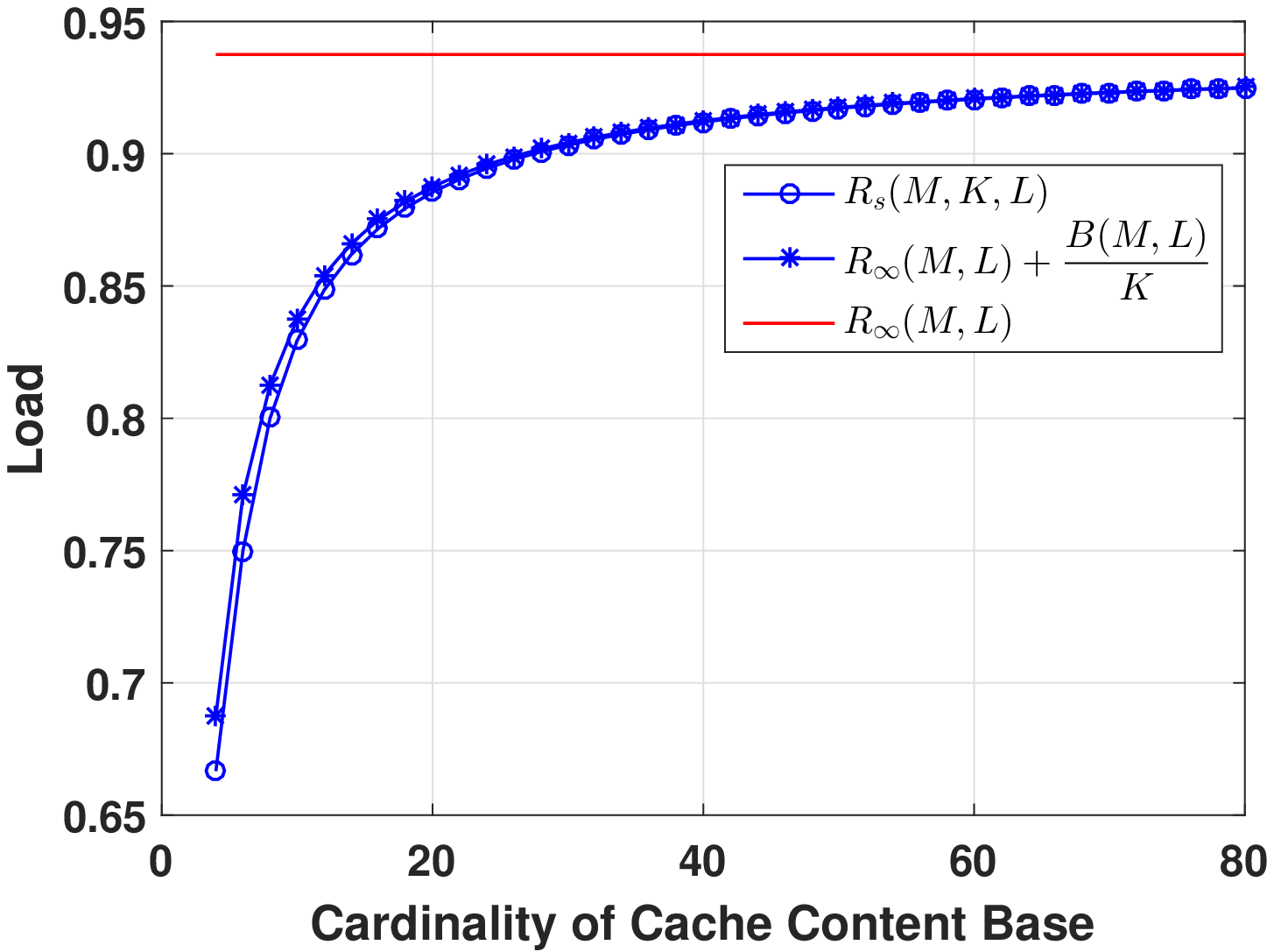}}}
\end{center}
         \caption{\small{Load versus $K$ at $L=4$, $N=4$ and $M=2$.
        Expressions $R_{\infty}(M,L)+\frac{A(M,L)}{K}$ and $R_{\infty}(M,L)+\frac{B(M,L)}{K}$ indicate the dominant term of the upper bound on $R_r (M,K,L)$ and the dominant term of  $R_s (M,K,L)$, respectively.
         }}
\label{fig:simu-load}
\end{figure}

Lemmas~\ref{Lem:Asymptotic_approximations_R_r} and ~\ref{Lem:Asymptotic_approximations_R_s}  show that as $K\to\infty$, the loads of the two proposed schemes converge to the same limiting load as that of Maddah-Ali--Niesen's decentralized scheme, i.e.,
$R_{r,\infty}(M,L) =R_{s,\infty}(M,L) = R_{\infty}(M,L)$.
By Theorem 2 of \cite{Alidecentralized}, we know that no scheme (centralized or decentralized) can improve by more than a constant factor upon  the two proposed   schemes when $K\to \infty$. In other words, Lemmas~\ref{Lem:Asymptotic_approximations_R_r} and \ref{Lem:Asymptotic_approximations_R_s}  imply that the two proposed  schemes attain order-optimal memory-load tradeoff when $K\to \infty$. Furthermore, Lemma \ref{Lem:Asymptotic_approximations_R_r} indicates that the upper bound on $R_r(M,K,L)$ decreases with $K$ for large $K$ (due to $A(M,L) \geq 0$), and $R_r(M,K,L)=R_{\infty}(M,L)+O\left(\frac{1}{K}\right)$ as $K \to \infty$. Lemma \ref{Lem:Asymptotic_approximations_R_s} indicates that $R_s(M,K,L)$ increases with $K$  for large $K$ (due to $B(M,L) < 0$), and $R_s(M,K,L)$ is asymptotically equivalent to $R_{\infty}(M,L)+\frac{B(M,L)}{K}$ as $K \to \infty$. Fig.~\ref{fig:simu-load} verifies  Lemmas~\ref{Lem:Asymptotic_approximations_R_r} and \ref{Lem:Asymptotic_approximations_R_s}.

\section{Load Gain Analysis}\label{sec:gain}
In this section, we first analyze the load gains of the two proposed schemes and characterize the corresponding required file sizes. Then, for each proposed scheme, we   analyze the growth of the load gain with respect to the required file size, when the file size is large.
\subsection{Load Gain}
\subsubsection{Load Gains of Two Proposed Schemes}
Let $R_u(M,L)\triangleq L\left(1-\frac{M}{N}\right)$ denote the load of the uncoded caching scheme \cite{Alicentralized}.  In the following, we study the load gains of the two proposed schemes over the uncoded caching scheme, respectively.

First, we consider the (multiplicative) load gain of the proposed decentralized random coded caching scheme over the uncoded caching scheme, denoted by $g_r(M,K,L) \triangleq \frac{R_u(M,L)}{R_r(M,K,L)}$.
For finite $K$, the relationship between $g_r(M,K,L)$ and $\widehat{F}_r(M,K)$ is summarized in the following theorem.
\begin{Thm}[Load Gain of Decentralized Random  Coded Caching Scheme] \label{Thm:gain_R_r} (i) For $N \in \mathbb{N}$, $K \in \{2,3,\cdots\}$ and $L \in \{2,3,\cdots\}$, we have
$$1 \leq g_r(M,K,L)< 1+\frac{KM}{N}.$$
For $N \in \mathbb{N}$ and  $K \in \{2,3,\cdots\}$, we have
$$\underset{L \to \infty}\lim  g_r(M,K,L)=1+\frac{KM}{N}.$$
(ii) For $N \in \mathbb{N}$, $K \in \{2,3,\cdots\}$ and $L \in \left\{\left\lceil\frac{1}{2}(\frac{N}{M})^2\right\rceil, \left\lceil\frac{1}{2}(\frac{N}{M})^2\right\rceil+1,\cdots\right\}$, we have
$$\left(\frac{N}{M}\right)^{g_r(M,K,L) -1} < \widehat{F}_r(M,K) \leq \left(\frac{N}{M}e\right)^\frac{\left(g_r(M,K,L)-1\right)\sqrt{2L}}{\sqrt{2L}-g_r(M,K,L)N/M}$$ for all
$g_r(M,K,L) \in \left[1, \min \left\{\frac{\sqrt{2L}M}{N},1+\frac{KM}{N}\right\}\right)$.
\end{Thm}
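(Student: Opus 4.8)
The plan is to recast the load of the random scheme so that its statistics become explicit. Starting from Lemma~\ref{Lem:matrix} and regrouping the $\binom{K}{t+1}$ subsets $S\subseteq\mathcal K$ with $|S|=t+1$ according to which cache content in $S$ is stored by the most users (here $t=KM/N$), one verifies the identity
\[
R(M,K,L,\mathbf X)=\binom{K}{t}^{-1}\!\!\sum_{S\subseteq\mathcal K,\,|S|=t+1}\max_{k\in S}X_k .
\]
Under the random placement $\mathbf X$ is a $\mathrm{Multinomial}(L;1/K,\dots,1/K)$ vector, whose cell counts are exchangeable, so taking expectations gives $R_r(M,K,L)=\frac{K-t}{t+1}\,\mathbb E[\max(X_1,\dots,X_{t+1})]$, whence
\[
g_r(M,K,L)=\frac{L(t+1)}{K\,\mathbb E[\max(X_1,\dots,X_{t+1})]},
\]
where $X_1,\dots,X_{t+1}$ are $t+1$ of the cell counts (each marginally $\mathrm{Binomial}(L,1/K)$). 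Every assertion in the theorem will be read off from this expression.

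Part~(i) is then immediate. Since the maximum of $t+1$ nonnegative numbers lies between their average and their sum, $\frac LK\le\mathbb E[\max(X_1,\dots,X_{t+1})]\le\mathbb E[X_1+\dots+X_{t+1}]=\frac{(t+1)L}{K}$; the upper estimate gives $g_r\ge1$ and the lower one gives $g_r\le1+KM/N$, the latter strict for $L\ge2$ because then $\Pr[X_1\neq X_2]>0$ (put all $L$ balls in one cell), so $\mathbb E[\max(X_1,\dots,X_{t+1})]>\frac LK$ strictly. For the limit, couple all values of $L$ on one space via an i.i.d.\ sequence of uniform choices in $\mathcal K$; then $X_i/L\to1/K$ a.s.\ by the strong law, hence $\max(X_1,\dots,X_{t+1})/L\to1/K$ a.s., and bounded convergence (the ratio is in $[0,1]$) gives $\mathbb E[\max(X_1,\dots,X_{t+1})]/L\to1/K$, i.e.\ $g_r(M,K,L)\to1+KM/N$.

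For part~(ii) recall $N/M=K/t$. The first inequality is elementary: by part~(i), $g_r-1<t$, and since $N/M>1$ and $\binom Kt\ge(K/t)^t$, we get $(N/M)^{g_r-1}<(N/M)^t=(K/t)^t\le\binom Kt=\widehat F_r$. For the second inequality I would use the standard bound $\binom Kt\le(eK/t)^t=(eN/M)^t$; since $eN/M>1$ it suffices to prove $t\le\frac{(g_r-1)\sqrt{2L}}{\sqrt{2L}-g_rN/M}$, the denominator being positive precisely under the hypothesis $g_r<\sqrt{2L}M/N$ (the hypothesis $L\ge\lceil\frac12(N/M)^2\rceil$ just makes this range of $g_r$ nonempty). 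Clearing the denominator and using $tN/M=K$, this is equivalent to $g_r\ge\frac{(1+KM/N)\sqrt{2L}}{\sqrt{2L}+K}$, and substituting the formula for $g_r$ it reduces to the single concentration estimate
\[
\mathbb E[\max(X_1,\dots,X_{t+1})]\ \le\ \frac LK+\sqrt{L/2}.
\]
To prove this I would set $\bar X=\frac1{t+1}\sum_{i=1}^{t+1}X_i$; since $\max_iX_i\ge\bar X$ one has $(\max_iX_i-\bar X)^2\le\sum_{i=1}^{t+1}(X_i-\bar X)^2$, so $\mathbb E[\max_iX_i]\le\frac LK+\mathbb E\big[\sqrt{\sum_i(X_i-\bar X)^2}\big]\le\frac LK+\sqrt{\mathbb E\sum_i(X_i-\bar X)^2}$ by Jensen, and a direct second-moment computation (expand $\sum_iX_i^2-\frac1{t+1}(\sum_iX_i)^2$ and use that $\sum_{i\le t+1}X_i$ is $\mathrm{Binomial}(L,(t+1)/K)$) yields the clean identity $\mathbb E\sum_{i=1}^{t+1}(X_i-\bar X)^2=\frac{Lt}{K}=\frac{LM}{N}$. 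Hence $\mathbb E[\max(X_1,\dots,X_{t+1})]\le\frac LK+\sqrt{LM/N}$.

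The hard part is passing from $\sqrt{LM/N}$ to $\sqrt{L/2}$. When $M\le N/2$ this step is free and the argument is complete; in particular it already covers the regime of primary interest, $N/M$ large. When $M>N/2$ the second-moment bound falls just short, and I would instead combine the lower bound $g_r\ge\frac{(1+KM/N)\sqrt L}{\sqrt L+\sqrt{K\cdot KM/N}}$ just obtained with the hypothesis $g_r<\sqrt{2L}M/N$, sharpen the estimate of $\mathbb E[\max(X_1,\dots,X_{t+1})]$ by a Chernoff/union tail bound (which is effective once $K$ is not too small, leaving only finitely many small $(K,t)$ pairs to check directly), and replace $\binom Kt\le(eK/t)^t$ by the tighter $\widehat F_r=\binom Kt=\binom K{K-t}\le\big(eK/(K-t)\big)^{K-t}$ since $K-t$ is now the smaller index, verifying the resulting inequality on the exponent. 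I expect this case split at $M=N/2$, and the bookkeeping in the $M>N/2$ branch, to be the only real obstacle; everything else follows mechanically from the $\max$‑representation, the variance identity, and the two textbook binomial‑coefficient bounds.
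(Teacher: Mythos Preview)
Your max-representation $R(M,K,L,\mathbf X)=\binom{K}{t}^{-1}\sum_{|S|=t+1}\max_{k\in S}X_k$ is correct and genuinely clarifying; combined with exchangeability it makes part~(i) cleaner than the paper's argument, which proves the strict inequality $g_r<1+KM/N$ indirectly by first invoking $R_r>R_s$ (Theorem~\ref{Thm:comparison}) and then $R_s\ge L(1-M/N)/(1+KM/N)$. Your SLLN argument for the limit is likewise more transparent than the paper's squeeze via the bound~\eqref{random___ub}. For the first inequality of part~(ii) you and the paper do the same thing.

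The gap is exactly where you flag it. For the second inequality of part~(ii) you need $\mathbb E[\max(X_1,\dots,X_{t+1})]\le L/K+\sqrt{L/2}$, and your variance identity gives only $L/K+\sqrt{LM/N}$, which fails when $M/N>1/2$. Your proposed rescue (Chernoff tails plus a case analysis on small $K$, plus switching to $\binom{K}{K-t}$) is not carried out and is not obviously sufficient. The paper sidesteps this entirely: rather than a second-moment bound on the subset maximum, it bounds each full-sample order statistic via a result of Gascuel and Caraux (Lemma~\ref{Lem:order_statistics}), namely $\mathbb E[X_{(k)}]\le \mu+\sigma\sqrt{K/(2(K-k+1))}$, and then crudely replaces $\sqrt{K/(2(K-k+1))}$ by $\sqrt{K/2}$. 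With $\mu=L/K$ and $\sigma^2=\tfrac{L}{K}(1-\tfrac1K)$ this yields $\mathbb E[X_{(k)}]\le L/K+\sqrt{L(1-1/K)/2}\le L/K+\sqrt{L/2}$ for every $k$, and summing against $\binom{k-1}{t}$ gives exactly the bound~\eqref{random___ub} used in Appendix~J. In your language this is precisely $\mathbb E[\max(X_1,\dots,X_{t+1})]\le L/K+\sqrt{L/2}$, with no case split at $M=N/2$. So your route is more elementary and in fact sharper when $M\le N/2$ (since $\sqrt{LM/N}<\sqrt{L/2}$ there), but to cover all $M\in\mathcal M_K$ you should replace the deviation-from-sample-mean estimate by the order-statistics inequality.
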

\begin{proof} Please refer to Appendix J.
\end{proof}

Note that $\widehat{F}_r(M,K)$ increases with $K$, and $R_u(M,L)$ does not change with $K$.
In addition, from Fig.~\ref{fig:simu}, we can observe that $R_r(M,K,L)$ decreases with $K$. Thus, we know that  $\widehat{F}_r(M,K)$ increases with $g_r(M,K,L)$.  We can easily verify that the lower bound and the upper bound on $\widehat{F}_r(M,K)$ given in Theorem \ref{Thm:gain_R_r} also increase with $g_r(M,K,L)$, when $L \in \left\{\left\lceil\frac{1}{2}(\frac{N}{M})^2\right\rceil, \left\lceil\frac{1}{2}(\frac{N}{M})^2\right\rceil+1,\cdots\right\}$. Fig. 3 verifies Theorem~\ref{Thm:gain_R_r}.

%

\begin{figure}
\label{fig:simu-ran-gain}
\begin{center}
  {\resizebox{7.5cm}{!}{\includegraphics{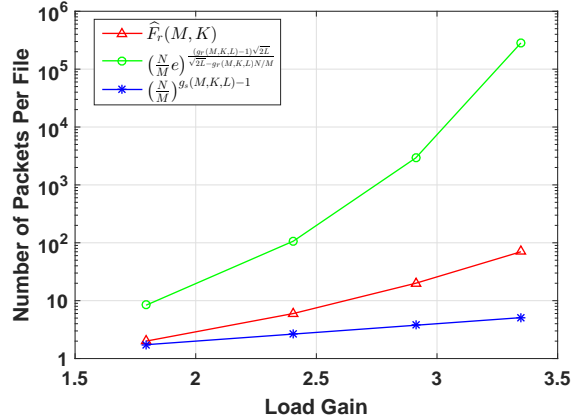}}}
         \caption{\small{Number of packets per file of decentralized random coded caching scheme versus load gain at $N=4$, $M=2$ , $L=48$ and $K=2,4,6,8$.  
         }}
\end{center}
\end{figure}

Next, we consider the (multiplicative) load gain of the proposed partially decentralized sequential coded caching scheme over the uncoded caching scheme, denoted by $g_s(M,K,L) \triangleq \frac{R_u(M,L)}{R_s(M,K,L)}$.
For finite $K$, the relationship between $g_s(M,K,L)$ and $\widehat{F}_s(M,K)$ is summarized in the following theorem.
\begin{Thm}[Load Gain of Decentralized Sequential  Coded Caching Scheme] \label{Thm:gain_R_s}
(i) For $N \in \mathbb{N}$, $K \in \{2,3,\cdots\}$ and  $L \in \{2,3,\cdots\}$, we have  $$\frac{L\frac{M}{N}}{1-\left(1-\frac{M}{N}\right)^L} <g_s(M,K,L) \leq 1+L\frac{M}{N}$$ when $K \geq L$,
and $$1 \leq g_s(M,K,L) \leq 1+K\frac{M}{N}$$ when $K < L$. For $N \in \mathbb{N}$ and  $K \in \{2,3,\cdots\}$, we have
$$\underset{L \to \infty}\lim  g_s(M,K,L)=1+\frac{KM}{N}.$$
(ii) For $N \in \mathbb{N}$,  $K \in \{2,3,\cdots\}$, and $L \in \{2,3,\cdots\}$, we have
$$\left(\frac{N}{M}\right)^{\frac{M/N}{g_s(M,K,L)/L-M/N}} \leq \widehat{F}_s(M,K) < \left(\frac{N}{M}e\right)^{\frac{M/N}{\left(1-(1-M/N)^L\right)g_s(M,K,L)/L-M/N}}$$ for all $g_s(M,K,L)\in(\frac{L\frac{M}{N}}{1-\left(1-\frac{M}{N}\right)^L} ,1+L\frac{M}{N}]$ when $K \geq L$, and we have $$\left(\frac{N}{M}\right)^{g_s(M,K,L) -1} \leq \widehat{F}_s(M,K) \leq \left(\frac{N}{M}e\right)^{g_s(M,K,L) \frac{\lceil L/K\rceil }{L/K}-1}$$ for all $g_s(M,K,L) \in[1,1+K\frac{M}{N}]$ when $K < L$.
\end{Thm}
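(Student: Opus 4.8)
The plan is to reduce the statement to a single closed form for the load plus two elementary binomial estimates. Write $t\triangleq KM/N\in\{1,\dots,K-1\}$ and let $L_0\triangleq L-(\lceil L/K\rceil-1)K\in\{1,\dots,K\}$ be the number of users in the last column. Since $\prod_{i=0}^{L_0-1}\frac{K-t-1-i}{K-i}=\binom{K-t-1}{L_0}\big/\binom{K}{L_0}$ when $L_0\le K-t-1$, while both sides vanish when $L_0\ge K-t$ (which is exactly the second case of Theorem~\ref{Thm:seq}, using $\binom{n}{k}=0$ for $n<k$), the two cases of Theorem~\ref{Thm:seq} merge into
\[
R_s(M,K,L)=\frac{K(1-M/N)}{1+t}\Bigl(\bigl\lceil\tfrac{L}{K}\bigr\rceil-P\Bigr),\qquad P\triangleq\frac{\binom{K-t-1}{L_0}}{\binom{K}{L_0}}\in[0,1),
\]
where $P<1$ because its first factor $\tfrac{K-t-1}{K}<1$ (as $t\ge1$). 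Hence $g_s(M,K,L)=R_u(M,L)/R_s(M,K,L)=\tfrac{L(1+t)}{K(\lceil L/K\rceil-P)}$. Two facts then do the rest: the standard estimates $(n/k)^k\le\binom nk\le(en/k)^k$, which for $n=K$, $k=t$ give $\widehat F_s(M,K)=\binom{K}{t}\in\bigl[(N/M)^{t},((N/M)e)^{t}\bigr]$ (note $N/M=K/t$); and $P<(1-M/N)^{L_0}$, obtained by comparing $P$ with $\bigl(\tfrac{K-t}{K}\bigr)^{L_0}$ factor by factor since $\tfrac{K-t-1-i}{K-i}<\tfrac{K-t}{K}$ for all $i\ge0$ (and trivially when $P=0$).

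For part (i) I would split on $K<L$ versus $K\ge L$. If $K<L$, then $g_s\ge1$ because each of the $\lceil L/K\rceil-1$ full columns has load $\tfrac{K(1-M/N)}{1+t}\le K(1-M/N)$ and the last column has load at most $L_0(1-M/N)$ (coded intra-column delivery of Algorithm~\ref{alg:col} is no worse than naive uncoded delivery, since each transmitted packet-length XOR is useful to at least one user), whence $R_s\le L(1-M/N)=R_u$; and $g_s\le1+t$ is, after cancellation and the identity $K\lceil L/K\rceil=L+K-L_0$, equivalent to $P\le(K-L_0)/K$, i.e.\ to $\binom{K-t-1}{L_0}\le\binom{K-1}{L_0}$ (using $\binom{K-1}{L_0}=\tfrac{K-L_0}{K}\binom{K}{L_0}$), which holds since $\binom{n}{L_0}$ is nondecreasing in $n$ and $t\ge1$. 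If $K\ge L$, then $\lceil L/K\rceil=1$, and at $K=L$ the scheme coincides with Maddah--Ali--Niesen's centralized scheme so $R_s(M,L,L)=\tfrac{L(1-M/N)}{1+LM/N}$; by the monotonicity of $R_s$ in $K$ for $K\ge L$ (Theorem~\ref{Thm:seq}) together with $R_s(M,K,L)\to R_\infty(M,L)$ (Lemma~\ref{Lem:Asymptotic_approximations_R_s}) one gets $\tfrac{L(1-M/N)}{1+LM/N}\le R_s(M,K,L)<R_\infty(M,L)=\tfrac{1-M/N}{M/N}\bigl(1-(1-M/N)^L\bigr)$ for every finite $K\ge L$, and dividing $R_u(M,L)$ by these yields $\tfrac{LM/N}{1-(1-M/N)^L}<g_s(M,K,L)\le1+LM/N$. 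Finally, for fixed $K$, writing $g_s=(1+t)\,\tfrac{L}{K\lceil L/K\rceil}\,\tfrac{1}{1-P/\lceil L/K\rceil}$ and letting $L\to\infty$, the bound $P\in[0,1)$ and $\lceil L/K\rceil\to\infty$, $L/(K\lceil L/K\rceil)\to1$ give $g_s\to1+t=1+KM/N$.

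For part (ii) I would feed these bounds on $g_s$ into $\widehat F_s(M,K)\in[(N/M)^{t},((N/M)e)^{t}]$; since $N/M>1$ and $(N/M)e>1$, it suffices to bound the exponent $t$. For $K\ge L$: from $g_s=\tfrac{L(1+t)}{K(1-P_L)}$ with $P_L\triangleq\binom{K-t-1}{L}/\binom{K}{L}\in[0,1)$ one has $g_s\ge\tfrac{L(1+t)}{K}$, which rearranges to $t\ge\tfrac{M/N}{g_s/L-M/N}$ and gives $(N/M)^{\frac{M/N}{g_s/L-M/N}}\le(N/M)^{t}\le\widehat F_s$; and $P_L<(1-M/N)^L$ gives $g_s<\tfrac{L(1+t)}{K(1-(1-M/N)^L)}$, which rearranges to $t<\tfrac{M/N}{(1-(1-M/N)^L)g_s/L-M/N}$ and gives $\widehat F_s\le((N/M)e)^{t}<((N/M)e)^{\frac{M/N}{(1-(1-M/N)^L)g_s/L-M/N}}$ — both denominators being positive exactly because $g_s$ ranges over $\bigl(\tfrac{LM/N}{1-(1-M/N)^L},1+LM/N\bigr]$ by part (i). For $K<L$: the lower bound is immediate from $\widehat F_s\ge(N/M)^{t}$ and $t\ge g_s-1$ (part (i)), and the upper bound follows from $\widehat F_s\le((N/M)e)^{t}$ together with $g_s\ge\tfrac{L(1+t)}{K\lceil L/K\rceil}$ (since $P\ge0$), which rearranges to $t\le g_s\tfrac{\lceil L/K\rceil}{L/K}-1$.

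I expect the main obstacle to be the opening reduction: collapsing the two cases of Theorem~\ref{Thm:seq} (including the degenerate zero-product case and the $K\ge L$ versus $K<L$ split) into one expression for $g_s$, and, inside it, the two binomial-ratio inequalities $P<(1-M/N)^{L_0}$ and $\binom{K-t-1}{L_0}\le\binom{K-1}{L_0}$, which are precisely what produce the (strict) upper bounds in parts (i) and (ii). Everything downstream is routine algebra with $(n/k)^k\le\binom nk\le(en/k)^k$ and with the monotonicity and limit already established in Theorem~\ref{Thm:seq} and Lemma~\ref{Lem:Asymptotic_approximations_R_s}.
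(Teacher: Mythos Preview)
Your proposal is correct and follows essentially the same route as the paper's Appendix~K: monotonicity from Theorem~\ref{Thm:seq} plus the limit from Lemma~\ref{Lem:Asymptotic_approximations_R_s} for $K\ge L$, the per-column bound $R_s\le L(1-M/N)$ for $g_s\ge1$, the product estimate giving $R_s\ge L(1-M/N)/(1+KM/N)$ for $g_s\le1+KM/N$, and then the sandwich $(N/M)^t\le\binom{K}{t}\le((N/M)e)^t$ to convert bounds on $K$ into bounds on $\widehat F_s$. The only cosmetic difference is that you collapse the two cases of Theorem~\ref{Thm:seq} into the single expression $R_s=\tfrac{K(1-M/N)}{1+t}(\lceil L/K\rceil-P)$ with $P=\binom{K-t-1}{L_0}/\binom{K}{L_0}$, which lets you phrase $g_s\le1+KM/N$ as the neat inequality $\binom{K-t-1}{L_0}\le\binom{K-1}{L_0}$, whereas the paper routes this through the equivalent product bound \eqref{eqn:two_sche_lb} established in Appendix~J.
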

\begin{proof} Please refer to Appendix K.
\end{proof}

Note that $\widehat{F}_s(M,K)$ increases with $K$, and $R_u(M,L)$ does not change with $K$.
In addition, from Theorem~\ref{Thm:seq},  we know that  when $K \geq L$, $R_s(M,K,L)$ increases with $K$. Thus, we know that  $\widehat{F}_s(M,K)$ decreases with $g_s(M,K,L)$ when $K \geq L$. We can easily verify  that the lower bound and  the upper bound on $\widehat{F}_s(M,K)$  given in Theorem~\ref{Thm:gain_R_s} also decrease with $g_s(M,K,L)$, when $K \geq L$.
On the other hand,  when $K < L \in \{2K,3K,4K,\cdots\}$, we have $R_s(M,K,L)=\frac{L(1-M/N)}{1+KM/N}$ decreases with $K$. Thus, we know that $\widehat{F}_s(M,K)$ increases with $g_s(M,K,L)$ when $K < L \in \{2K,3K,4K,\cdots\}$.  We can easily verify  that the lower bound and  the upper bound on $\widehat{F}_s(M,K)$ given in Theorem \ref{Thm:gain_R_s} also  increase with $g_s(M,K,L)$, when $K < L \in \{2K,3K,4K,\cdots\}$.
Fig.\ref{fig:simu-file} verifies Theorem~\ref{Thm:gain_R_s}.

Theorems \ref{Thm:gain_R_r} and  \ref{Thm:gain_R_s} show that, when $L \to \infty$, for given $K$, $M$ and $N$, the load gains of the two proposed schemes  converge to the same
limiting load gain.
This is due to the fact that  the two  proposed coded caching schemes perform similarly when $L$ is large,  as illustrated below.
Recall that under the proposed decentralized random coded caching scheme, $\mathbf X$ follows multinomial distribution. Thus, we have $\mathbb E [X_k]=\frac{L}{K}$ and $\textrm{Var}[X_k]=L\frac{1}{K}(1-\frac{1}{K})$ for all $k \in \mathcal K$.
By Chebyshev's inequality, we have
$\Pr[|X_k-\mathbb E [X_k]| \geq \varepsilon \mathbb E [X_k]] \leq \frac{\textrm{Var}[X_k]}{\varepsilon^2 \mathbb E^2 [X_k]}=\frac{K-1}{\varepsilon^2 L}$
for every constant $\varepsilon>0$.
Thus, we know that $X_k$ concentrates around $\mathbb E [X_k]=\frac{L}{K}$ for all $k \in \mathcal K$, when $L$ is large.
 On the other hand, under the proposed partially decentralized sequential coded caching scheme, we have
\begin{align}
X_k=
\begin{cases}
\lceil L/K\rceil, &k=1,2, \cdots ,K-(\lceil L/K\rceil K-L) \\
\lceil L/K\rceil-1, &k=K-(\lceil L/K\rceil K-L)+1, K-(\lceil L/K\rceil K-L)+2, \cdots, K,
\end{cases} \nonumber
\end{align}
implying
$\lim_{L \to \infty}\frac{X_k}{L/K}=1$, for all $k \in \mathcal K$.
Therefore, for any given $K$, $M$ and $N$, when $L$ is large,  the average loads of the two proposed schemes are the same, implying that the load gains of the two proposed schemes are the same.

We now compare the file size  of the two proposed schemes for any given load gain, as $L \to \infty$. Based on the above result, we know that, for any given $M$ and $N$, to achieve the same load gain, the two proposed schemes have the same $K$, when $L \to \infty$. Thus, for any given $M$ and $N$, to achieve the same load gain, the two proposed schemes have the same  required file size, when $L \to \infty$.
\subsubsection{Load Gain Comparison with Maddah-Ali--Niesen's and Shanmugam et al.'s Decentralized Schemes}
First,
we compare the load gains of the two proposed decentralized schemes with Maddah-Ali--Niesen's decentralized scheme.
Theorem~5 of \cite{Allerton14} shows that to achieve a  load gain larger than $2$, the required file size under Maddah-Ali--Niesen's decentralized scheme  is $\Omega \left(\frac{1}{L}e^{2L\frac{M}{N}(1-\frac{M}{N})}\right)$ as $L \to \infty$, and hence  the required file size goes to infinity when $L \to \infty$.
In contrast, Theorems~\ref{Thm:gain_R_r} and \ref{Thm:gain_R_s}  indicate that, for each proposed scheme,  to achieve the same load gain as Maddah-Ali--Niesen's decentralized scheme, the required  file size  is finite  when $L \to \infty$.
Therefore, to achieve the same load gain, the required file sizes of the two proposed schemes are much smaller than that of Maddah-Ali--Niesen's decentralized scheme, when the number of users is large.

Next, we compare the load gains of the two proposed decentralized schemes with Shanmugam {\em et al.}'s decentralized user
grouping coded caching scheme.
Based on Theorem~\ref{Thm:vs_tulino}, we know that   to achieve the same load, the required file sizes of the two proposed schemes are smaller than that of  Shanmugam {\em et al.}'s decentralized scheme, when the number of users is  large and  the normalized local cache size is small. Therefore,  to achieve the same load gain, the required file sizes of the two proposed schemes are smaller than that of  Shanmugam {\em et al.}'s decentralized scheme, when the number of users is  large and  the normalized local cache size is small.

%

\begin{figure}
\begin{center}
  \subfigure[\small{$K \geq L$ at $K=14,16,18,20,22$.}]
  {\resizebox{6.5cm}{!}{\includegraphics{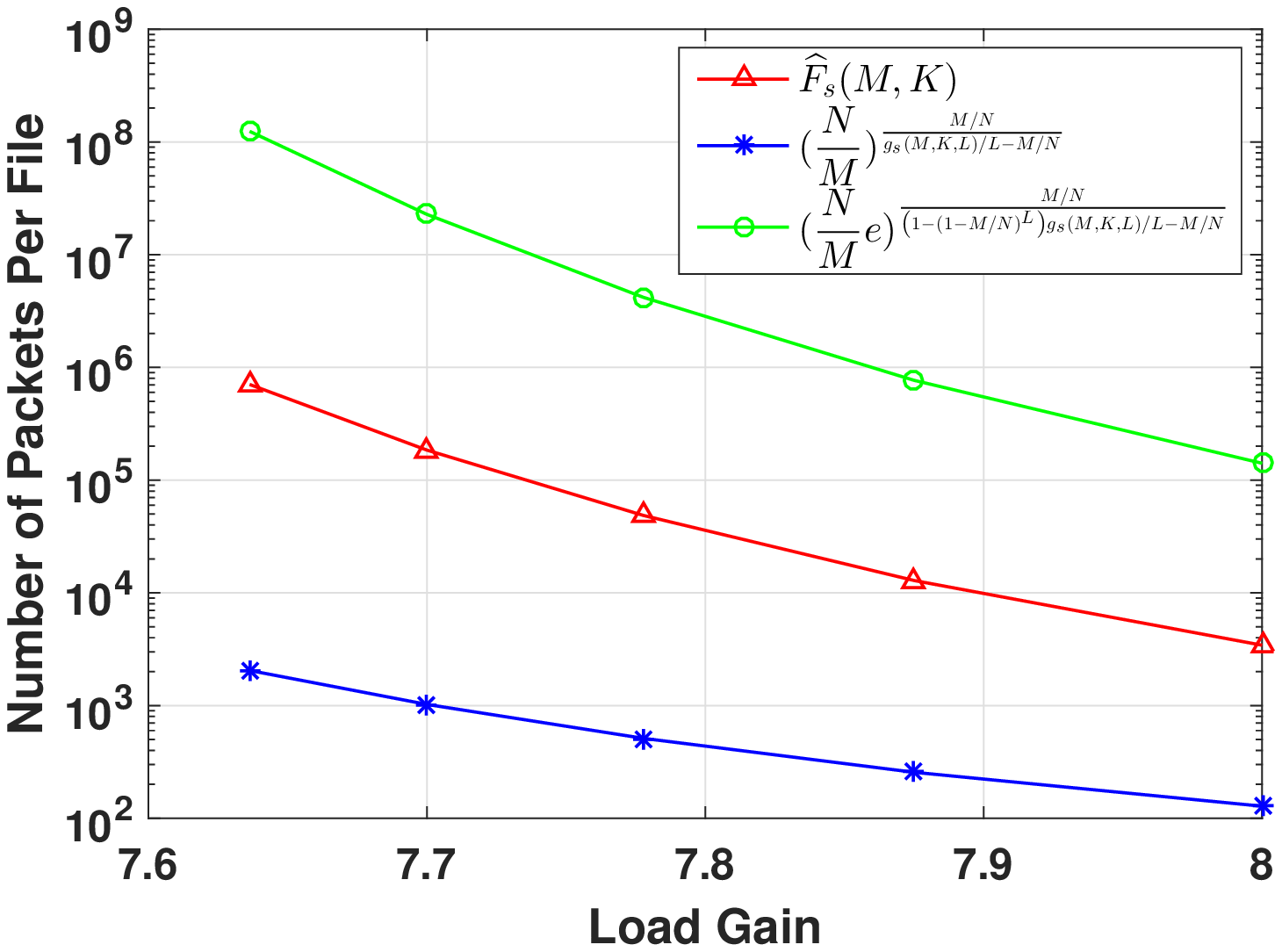}}}
  \quad
  \subfigure[\small{$K < L$ at $K=2,4,6,10,12$.}]
  {\resizebox{6.5cm}{!}{\includegraphics{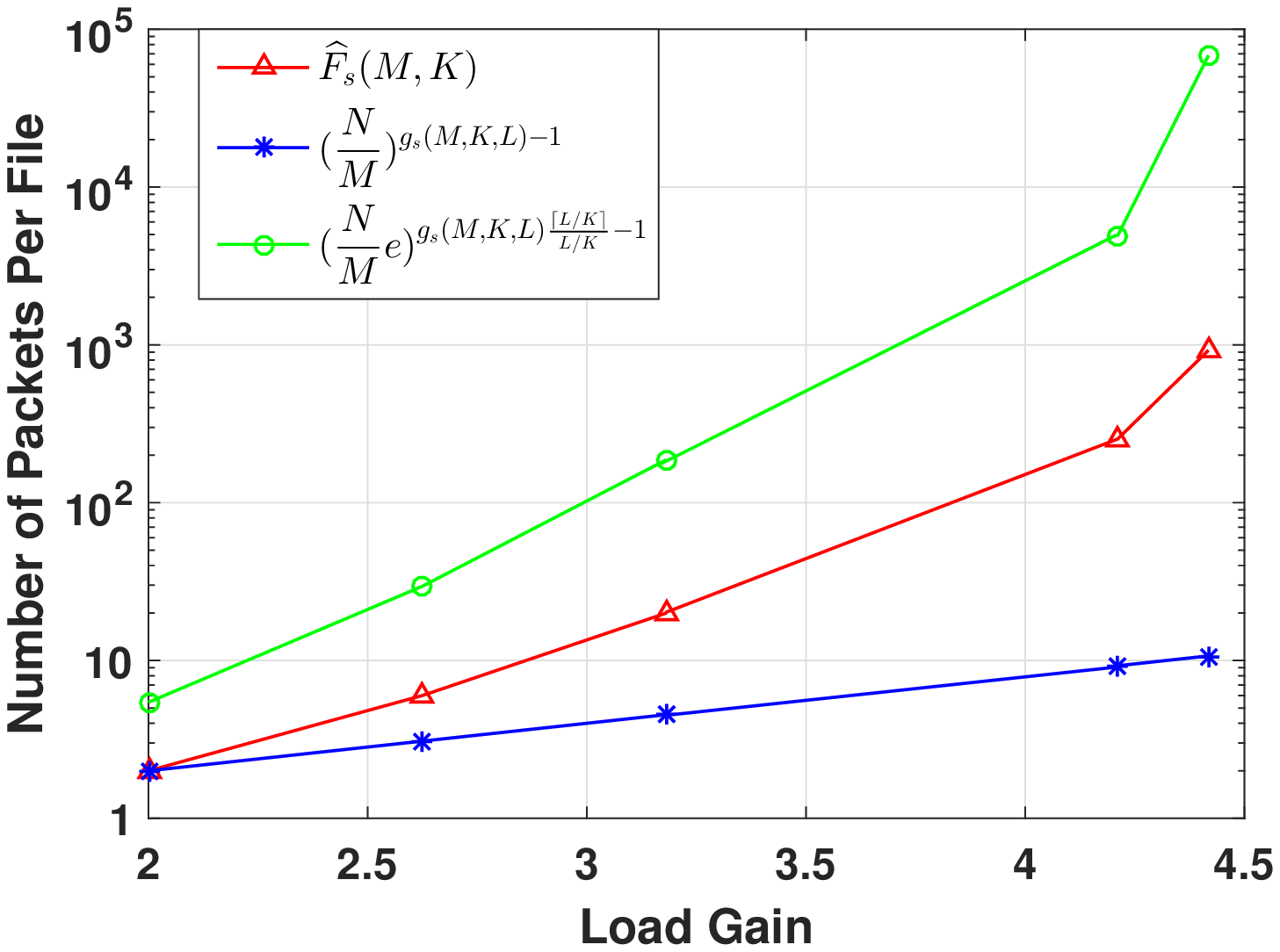}}}
  \end{center}
         \caption{\small{Number of packets per file of decentralized sequential coded caching scheme versus load gain at $N=4$, $M=2$ and $L=14$.
         }}
\label{fig:simu-file}
\end{figure}

\subsection{Asymptotic Load Gain}
Let
\begin{align}
g_{\infty}(M,L) \triangleq \frac{R_u(M,L)}{R_{\infty}(M,L)}=\frac{L\frac{M}{N}}{1-\left(1-\frac{M}{N}\right)^L} \label{eqn:limiting_gain}
\end{align}
denote the limiting load gain of Maddah-Ali--Niesen's decentralized scheme.  Denote $H(p) \triangleq -p \ln p -(1-p) \ln (1-p)$.
In the following, we study the asymptotic load gains of the two proposed schemes, respectively.
\begin{Lem}[Asymptotic Load Gain of Decentralized random  Coded Caching Scheme] \label{Lem:gain_asymp_rand}
For $N \in \mathbb{N}$ and  $L \in \mathbb{N}$,  when $N$, $M$, and $L$ are fixed, we have
\begin{align}
g_{r,\infty}(M,L) \triangleq \lim_{K \to \infty}g_r(M,K,L)=g_{\infty}(M,L), \label{eqn:g_r_lim}
\end{align}
where $g_{\infty}(M,L)$ is given by \eqref{eqn:limiting_gain}.
Furthermore, for $N \in \mathbb{N}$ and  $L \in \{2,3,\cdots\}$,  when $N$, $M$, and $L$ are fixed, we have
\begin{align}
g_r(M,K,L) \geq  g_{\infty}(M,L)\left(1-\frac{A(M,L)H(\frac{M}{N})}{R_{\infty}(M,L)\ln \widehat{F}_r(M,K)}\right)+o\left(\frac{1}{\ln \widehat{F}_r(M,K)}\right), \quad \text{as} \quad \widehat{F}_r(M,K) \to \infty,   \label{eqn:g_r_asymp}
\end{align}
where $R_{\infty}(M,L)$ is given by \eqref{eqn:Ali_lim} and $A(M,L)$ is given by \eqref{eqn:asymp_B}.
\end{Lem}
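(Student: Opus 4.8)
The plan is to derive both parts of the lemma directly from the finite-$K$ load estimates already established in Lemma~\ref{Lem:Asymptotic_approximations_R_r}, by translating statements about $R_r(M,K,L)$ into statements about the gain $g_r(M,K,L)=R_u(M,L)/R_r(M,K,L)$, and then re-expressing the small parameter $1/K$ in terms of $\ln\widehat F_r(M,K)$. Throughout, write $p\triangleq M/N\in(0,1)$ and note that $R_u(M,L)=L(1-M/N)>0$ does not depend on $K$, while $R_\infty(M,L)=(N/M-1)(1-(1-M/N)^L)>0$ for $M\in(0,N)$.

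For \eqref{eqn:g_r_lim}, I would simply combine \eqref{eqn:ran_lim} with continuity: since $R_r(M,K,L)\to R_\infty(M,L)$ as $K\to\infty$ and the map $x\mapsto R_u(M,L)/x$ is continuous and finite at $x=R_\infty(M,L)>0$, we get $g_r(M,K,L)\to R_u(M,L)/R_\infty(M,L)=g_{\infty}(M,L)$, which is \eqref{eqn:limiting_gain}. For \eqref{eqn:g_r_asymp}, I start from the upper bound \eqref{eqn:R_r___D}, namely $R_r(M,K,L)\le R_\infty(M,L)+\tfrac{A(M,L)}{K}+o(1/K)$ with $A(M,L)\ge 0$ (an upper bound on $R_r$ is exactly what is needed, since $g_r$ is decreasing in $R_r$). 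Inverting and expanding $\tfrac{1}{1+u}=1-u+o(u)$ with $u=\tfrac{A(M,L)}{KR_\infty(M,L)}+o(1/K)\to0$ gives
\[
g_r(M,K,L)\ \ge\ g_{\infty}(M,L)\Bigl(1-\tfrac{A(M,L)}{R_\infty(M,L)}\cdot\tfrac1K+o(1/K)\Bigr).
\]

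It then remains to convert $1/K$ into $1/\ln\widehat F_r(M,K)$. Using Stirling's formula on $\widehat F_r(M,K)=\binom{K}{KM/N}$ yields the two-sided estimate $\ln\widehat F_r(M,K)=K\,H(p)+O(\ln K)$ as $K\to\infty$ (through the integers $K$ with $pK\in\mathbb N$), where $H(p)=-p\ln p-(1-p)\ln(1-p)$. Inverting this relation, and using $\ln K/K\to0$, gives $\tfrac1K=\tfrac{H(M/N)}{\ln\widehat F_r(M,K)}+o\bigl(\tfrac1{\ln\widehat F_r(M,K)}\bigr)$. Substituting into the previous display — and observing that $K\to\infty\iff\widehat F_r(M,K)\to\infty$ by monotonicity, and that the leftover $o(1/K)$ is $o\bigl(1/\ln\widehat F_r(M,K)\bigr)$ because $1/K$ and $1/\ln\widehat F_r(M,K)$ are of the same order — produces exactly \eqref{eqn:g_r_asymp}, with $R_\infty(M,L)$ as in \eqref{eqn:Ali_lim} and $A(M,L)$ as in \eqref{eqn:asymp_B}. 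The main obstacle, and the only nonroutine step, is the careful chaining of these asymptotic substitutions: establishing the estimate $\ln\binom{K}{pK}=KH(p)+O(\ln K)$, inverting it to control $1/K$ with the correct error term, and verifying that every collected $o(\cdot)$ term is genuinely $o\bigl(1/\ln\widehat F_r(M,K)\bigr)$; the continuity argument, the single Taylor expansion, and the monotonicity of $K\mapsto\widehat F_r(M,K)$ are all elementary.
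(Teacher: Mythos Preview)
Your proposal is correct and follows essentially the same approach as the paper's own proof: both establish \eqref{eqn:g_r_lim} by passing the limit through the quotient via \eqref{eqn:ran_lim}, and both derive \eqref{eqn:g_r_asymp} by inverting the load upper bound \eqref{eqn:R_r___D} with the first-order expansion $\tfrac{1}{1+u}=1-u+o(u)$ and then replacing $1/K$ by $H(M/N)/\ln\widehat F_r(M,K)+o(1/\ln\widehat F_r(M,K))$ via Stirling's formula applied to $\binom{K}{KM/N}$. Your added remarks on monotonicity and on tracking the $o(\cdot)$ terms are sound and make explicit what the paper leaves implicit.
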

\begin{proof} Please refer to Appendix L.
\end{proof}
\begin{Lem}[Asymptotic Load Gain of Decentralized Sequential  Coded Caching Scheme] \label{Lem:gain_asymp}
For $N \in \mathbb{N}$ and  $L \in \mathbb{N}$,  when $N$, $M$, and $L$ are fixed, we have
\begin{align}
g_{s,\infty}(M,L) \triangleq \lim_{K \to \infty}g_s(M,K,L)=g_{\infty}(M,L), \label{eqn:g_s_lim}
\end{align}
where $g_{\infty}(M,L)$ is given by \eqref{eqn:limiting_gain}. Furthermore, for $N \in \mathbb{N}$ and  $L \in \{2,3,\cdots\}$,  when $N$, $M$, and $L$ are fixed, we have
\begin{align}
g_s(M,K,L) = g_{\infty}(M,L)\left(1-\frac{B(M,L)H(\frac{M}{N})}{R_{\infty}(M,L)\ln \widehat{F}_s(M,K)}\right)+o\left(\frac{1}{\ln \widehat{F}_s(M,K)}\right), \quad \text{as} \quad \widehat{F}_s(M,K) \to \infty, \label{eqn:g_s_asymp}
\end{align}
where $R_{\infty}(M,L)$ is given by \eqref{eqn:Ali_lim} and  $B(M,L)$ is given by \eqref{eqn:asymp_A}.
\end{Lem}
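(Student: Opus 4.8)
The plan is to obtain both assertions directly from the identity $g_s(M,K,L)=R_u(M,L)/R_s(M,K,L)$ together with the asymptotics of $R_s(M,K,L)$ already established in Lemma~\ref{Lem:Asymptotic_approximations_R_s}, and then to re-express the $1/K$ correction in terms of $1/\ln\widehat F_s(M,K)$ using the entropy approximation of the binomial coefficient $\widehat F_s(M,K)=\binom{K}{KM/N}$. This follows the same template as the proof of Lemma~\ref{Lem:gain_asymp_rand} for the random scheme; the only structural differences are that here the expansion of $R_s$ is an exact asymptotic equality rather than an upper bound, and that the correction coefficient satisfies $B(M,L)<0$ instead of $A(M,L)\ge0$, which is why $g_s$ approaches $g_\infty$ from above.

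First, for \eqref{eqn:g_s_lim}: since $R_u(M,L)=L(1-M/N)$ is independent of $K$ and $R_s(M,K,L)\to R_\infty(M,L)>0$ as $K\to\infty$ by \eqref{eqn:seq_lim}, we immediately get $g_s(M,K,L)\to R_u(M,L)/R_\infty(M,L)=g_\infty(M,L)$, which is \eqref{eqn:g_s_lim}.

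For the refined expansion \eqref{eqn:g_s_asymp}, I would substitute $R_s(M,K,L)=R_\infty(M,L)+B(M,L)/K+o(1/K)$ from \eqref{eqn:R_s__D} into $g_s=R_u/R_s$ and expand the reciprocal as a geometric series (legitimate because $R_\infty(M,L)>0$ and the correction is $o(1)$), obtaining
\[
g_s(M,K,L)=g_\infty(M,L)\left(1-\frac{B(M,L)}{R_\infty(M,L)}\cdot\frac{1}{K}\right)+o\left(\frac{1}{K}\right),\qquad K\to\infty.
\]
Then I would invoke Stirling's formula in the form $\ln\binom{K}{pK}=K\,H(p)+O(\ln K)$ with $p=M/N\in(0,1)$ fixed, which gives $\ln\widehat F_s(M,K)=K\,H(M/N)+O(\ln K)$; since $\ln K=o(\ln\widehat F_s(M,K))$, inverting this yields $\frac1K=\frac{H(M/N)}{\ln\widehat F_s(M,K)}+o\left(\frac{1}{\ln\widehat F_s(M,K)}\right)$, and in particular $1/K=\Theta(1/\ln\widehat F_s(M,K))$ so that any $o(1/K)$ term is also $o(1/\ln\widehat F_s(M,K))$. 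Substituting into the display above and absorbing the constant $g_\infty(M,L)$ into the error term produces exactly \eqref{eqn:g_s_asymp}.

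The only point requiring genuine care is this last step: one must check that the entropy estimate $\ln\binom{K}{KM/N}=K\,H(M/N)+O(\ln K)$ holds along the set of $K$ for which $KM/N\in\mathbb N$ (so that $\widehat F_s(M,K)$, and hence the regime $\widehat F_s(M,K)\to\infty$, is well defined and equivalent to $K\to\infty$), and that inverting $\ln\widehat F_s(M,K)=K\,H(M/N)+O(\ln K)$ produces matching little-$o$ orders in $K$ and in $\ln\widehat F_s(M,K)$. Both are routine consequences of Stirling's formula using that $M/N$ is bounded away from $0$ and $1$, so no new idea beyond the argument behind Lemma~\ref{Lem:gain_asymp_rand} is needed; this bookkeeping is the main obstacle only in the sense of being the sole non-immediate part.
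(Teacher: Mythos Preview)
Your proposal is correct and follows essentially the same approach as the paper: take the limit directly from \eqref{eqn:seq_lim} for \eqref{eqn:g_s_lim}, then substitute the expansion \eqref{eqn:R_s__D} into $g_s=R_u/R_s$, use $\frac{1}{1+x}=1-x+o(x)$, and convert $1/K$ to $1/\ln\widehat F_s(M,K)$ via Stirling's formula on $\binom{K}{KM/N}$. Your write-up is in fact slightly tidier than the paper's in explicitly justifying why $o(1/K)$ terms become $o(1/\ln\widehat F_s(M,K))$.
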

\begin{proof} Please refer to Appendix L.
\end{proof}

When the file size is large, Lemmas~\ref{Lem:gain_asymp_rand} and~\ref{Lem:gain_asymp} show the growth of the  load gain with respect to the
required  file size. Lemma \ref{Lem:gain_asymp_rand} indicates that the lower bound on $g_r(M,K,L)$ increases with  $\widehat{F}_r(M,K)$  for large  $\widehat{F}_r(M,K)$  (due to $A(M,L) \geq 0$), and $g_{\infty}(M,L)=g_r(M,K,L)+ O\left(\frac{1}{\ln \widehat{F}_r(M,K)}\right)$ as $\widehat{F}_r(M,K)\to \infty$. Lemma~\ref{Lem:gain_asymp} indicates that $g_s(M,K,L)$ decreases with $\widehat{F}_s(M,K)$  for large  $\widehat{F}_s(M,K)$  (due to $B(M,L) < 0$), and $g_s(M,K,L)$ is asymptotically equivalent to  $g_{\infty}(M,L)\left(1-\frac{B(M,L)H(\frac{M}{N})}{R_{\infty}(M,L)\ln \widehat{F}_s(M,K)}\right)$ as $\widehat{F}_s(M,K)\to\infty$.  Fig.~\ref{fig:simu-gain} verifies  Lemmas~\ref{Lem:gain_asymp_rand} and~\ref{Lem:gain_asymp}.
\begin{figure}
\begin{center}
  \subfigure[\small{Decentralized random coded caching scheme.}]
  {\resizebox{6.5cm}{!}{\includegraphics{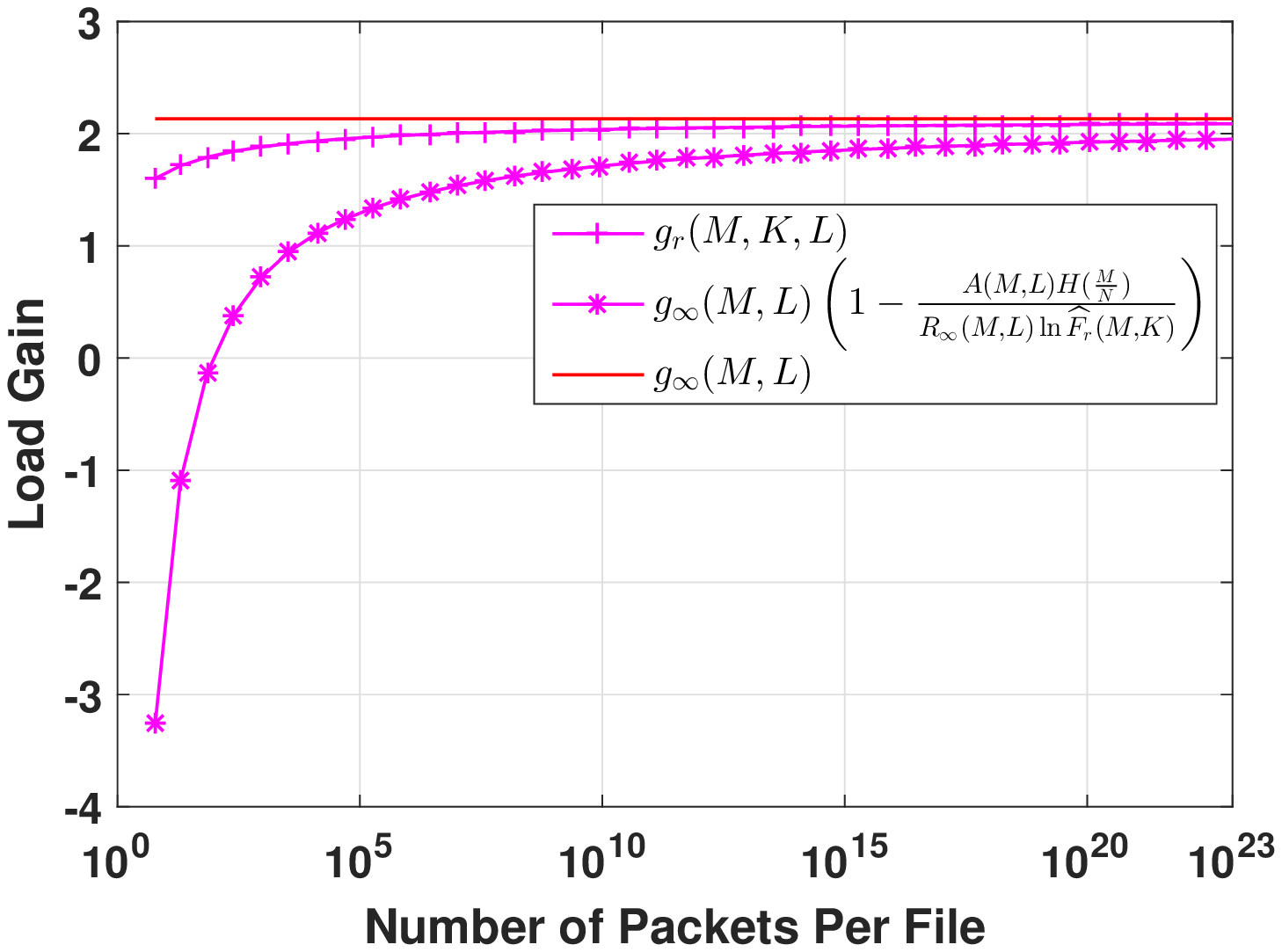}}}
  \quad
  \subfigure[\small{Decentralized sequential coded caching scheme.}]
  {\resizebox{6.5cm}{!}{\includegraphics{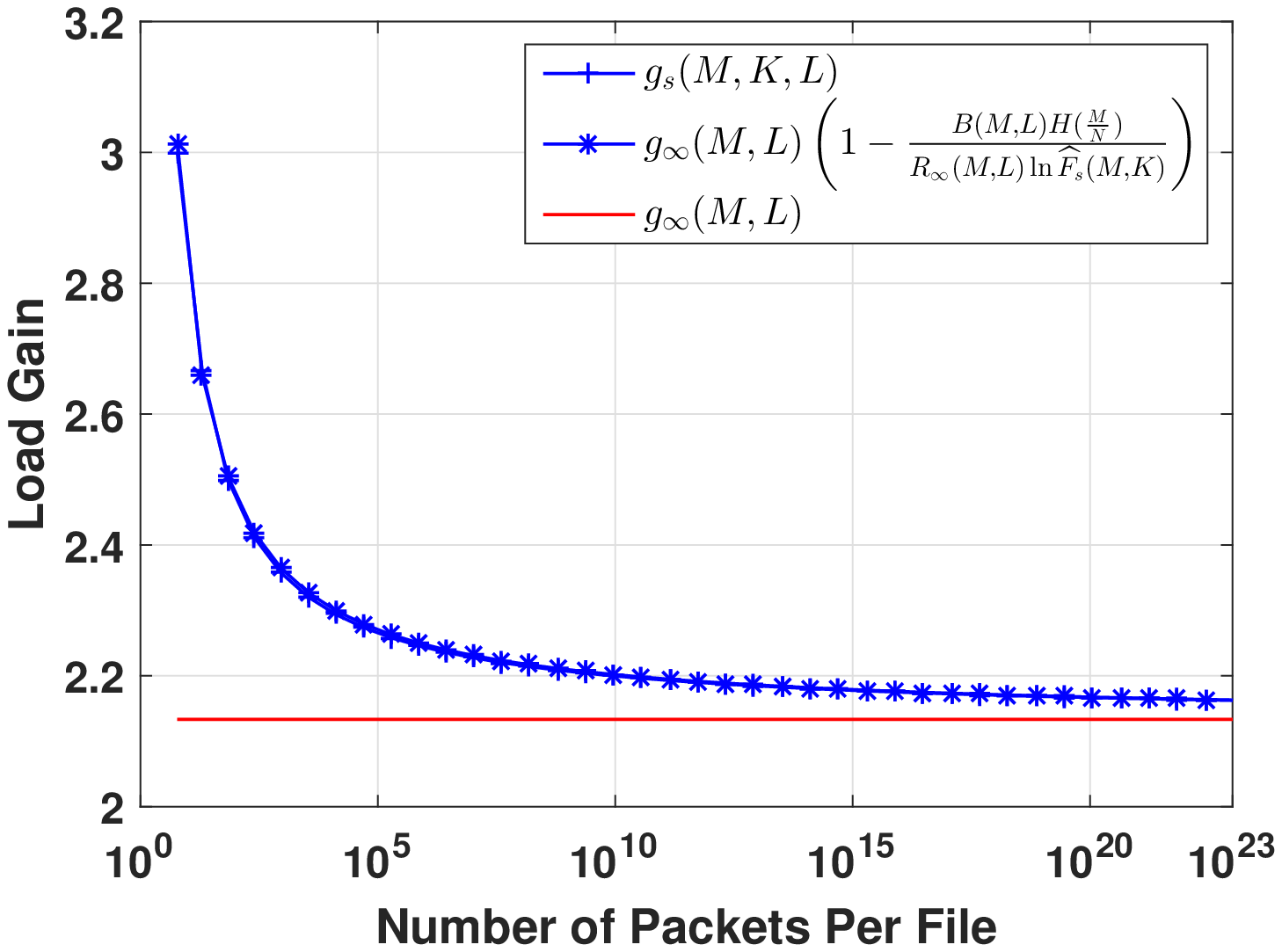}}}
  \end{center}
         \caption{\small{Load gain  versus $K$ at $L=4$, $N=4$, and $M=2$. Expressions $g_{\infty}(M,L)\left(1-\frac{A(M,L)H(\frac{M}{N})}{R_{\infty}(M,L)\ln \widehat{F}_r(M,K)}\right)$ and $g_{\infty}(M,L)\left(1-\frac{B(M,L)H(\frac{M}{N})}{R_{\infty}(M,L)\ln \widehat{F}_s(M,K)}\right)$  indicate the dominant term of the lower bound on $g_r(M,K,L)$ and  the dominant term of $g_s(M,K,L)$, respectively.
         }}
\label{fig:simu-gain}
\end{figure}

\section{Conclusion}
In this paper, we proposed a decentralized random coded caching scheme and a partially decentralized sequential coded caching scheme, both   basing on a cache content base to ensure good coded-multicasting opportunities in content delivery. We characterized the worst-case loads of the two proposed schemes
and showed that the sequential coded caching scheme outperforms the random coded caching scheme in the finite file size regime. We also showed that the two proposed decentralized schemes outperform  Maddah-Ali--Niesen's and Shanmugam {\em et al.}'s decentralized schemes in the finite file size regime, when the number of users is sufficiently large.
Then, we showed that our schemes achieve the same memory-load tradeoff as  Maddah-Ali--Niesen's decentralized scheme when the file size goes to infinity, and hence are also order optimal. On the other hand, we analyzed the  load gains  of the two proposed schemes over the uncoded caching scheme, and characterized the corresponding required file sizes.
For each proposed scheme,  we also analyzed  the growth of
the load gain with respect to the required file size when the file size is large. Numerical results showed
that each proposed scheme outperforms Maddah-Ali--Niesen's and Shanmugam {\em et al.}'s  decentralized
schemes when the file size is limited.



\section*{Appendix A: Proof of Lemma~\ref{Lem:col}}
First, we derive the expression of the total number of coded multicast messages sent by the server for serving the $\widehat{K}_j$ users in the $j$-th column. Consider any $\tau_j$ satisfying $\underline{\tau_j} \leq \tau_j \leq \overline{\tau_j}$. For any subsets $\mathcal S_j^1$ and $\mathcal S_j^2$ of cardinalities $|\mathcal S_j^1|=\tau_j$ and $|\mathcal S_j^2|=t+1-\tau_j$, the server sends one coded multicast message, i.e.,
$\oplus_{s \in \mathcal S_j^1} W_{D_{s,j},(\mathcal S_j^1\setminus \{s\}) \cup \mathcal S_j^2 }$, which is of $\frac {F}{{K \choose t}}$ data units.
Since the number of such $\mathcal S_j^1$ is ${\widehat{K}_j \choose \tau_j}$ and the number of such  $\mathcal S_j^2$ is ${K-\widehat{K}_j \choose t+1-\tau_j}$, for given $\tau_j$, the number of coded muticast message sent by the server for serving the $\widehat{K}_j$ users in the $j$-th column is ${\widehat{K}_j \choose \tau_j} \cdot  {K-\widehat{K}_j \choose t+1-\tau_j}$.
Summing over all $\tau_j$, we can obtain the total number of coded multicast messages sent by the server for serving the $\widehat{K}_j$ users in the $j$-th column, i.e., $\sum_{\tau_j=\underline{\tau_j}}^{\overline{\tau_j}}{\widehat{K}_j \choose \tau_j} \cdot  {K-\widehat{K}_j \choose t+1-\tau_j}$.
Note that this holds for all $\mathbf d \in \mathcal N^L$.
Next, we calculate $\sum_{\tau_j=\underline{\tau_j}}^{\overline{\tau_j}}{\widehat{K}_j \choose \tau_j} \cdot  {K-\widehat{K}_j \choose t+1-\tau_j}$ by considering the following four cases.
\begin{enumerate}
\item
When $K-\widehat{K}_j \geq t+1$ and $\widehat{K}_j \geq t+1$, we have $\underline{\tau_j}=\max\{1,t+1-(K-\widehat{K}_j)\}=1$ and $\overline{\tau_j}=\min \{t+1,\widehat{K}_j\}=t+1$.
Then, we have
$$\sum_{\tau_j=\underline{\tau_j}}^{\overline{\tau_j}}{\widehat{K}_j \choose \tau_j} \cdot  {K-\widehat{K}_j \choose t+1-\tau_j}=\sum_{\tau_j=1}^{t+1}{\widehat{K}_j \choose \tau_j} \cdot  {K-\widehat{K}_j \choose t+1-\tau_j}=\sum_{\tau_j=0}^{t+1}{\widehat{K}_j \choose \tau_j} \cdot  {K-\widehat{K}_j \choose t+1-\tau_j}-{K-\widehat{K}_j \choose t+1}.$$
By Vandermonde identity, when $K-\widehat{K}_j \geq t+1$ and $\widehat{K}_j \geq t+1$, we have
$\sum_{\tau_j=0}^{t+1}{\widehat{K}_j \choose \tau_j} \cdot  {K-\widehat{K}_j \choose t+1-\tau_j}={K \choose t+1}$.
Thus, in this case, we have
$\sum_{\tau_j=\underline{\tau_j}}^{\overline{\tau_j}}{\widehat{K}_j \choose \tau_j} \cdot  {K-\widehat{K}_j \choose t+1-\tau_j}={K \choose t+1}-{K-\widehat{K}_j \choose t+1}$.
\item
When $K-\widehat{K}_j \geq t+1$ and $\widehat{K}_j < t+1$, we have $\underline{\tau_j}=\max\{1,t+1-(K-\widehat{K}_j)\}=1$ and $\overline{\tau_j}=\min \{t+1,\widehat{K}_j\}=\widehat{K}_j$. Then, we have
$$\sum_{\tau_j=\underline{\tau_j}}^{\overline{\tau_j}}{\widehat{K}_j \choose \tau_j} \cdot  {K-\widehat{K}_j \choose t+1-\tau_j}=\sum_{\tau_j=1}^{\widehat{K}_j}{\widehat{K}_j \choose \tau_j} \cdot  {K-\widehat{K}_j \choose t+1-\tau_j}=\sum_{\tau_j=0}^{\widehat{K}_j}{\widehat{K}_j \choose \tau_j} \cdot  {K-\widehat{K}_j \choose t+1-\tau_j}-{K-\widehat{K}_j \choose t+1}.$$
By a special instance of Chu-Vandermonde identity, when $K-\widehat{K}_j \geq t+1$ and $\widehat{K}_j < t+1$, we have  $\sum_{\tau_j=0}^{\widehat{K}_j}{\widehat{K}_j \choose \tau_j} \cdot  {K-\widehat{K}_j \choose t+1-\tau_j}={K \choose t+1}$. Thus, in this case, we have $\sum_{\tau_j=\underline{\tau_j}}^{\overline{\tau_j}}{\widehat{K}_j \choose \tau_j} \cdot  {K-\widehat{K}_j \choose t+1-\tau_j}={K \choose t+1}-{K-\widehat{K}_j \choose t+1}$.
\item
When $K-\widehat{K}_j < t+1$ and $\widehat{K}_j \geq t+1$, we have $\underline{\tau_j}=\max\{1,t+1-(K-\widehat{K}_j)\}=t+1-(K-\widehat{K}_j)$ and $\overline{\tau_j}=\min \{t+1,\widehat{K}_j\}=t+1$.
Then, we have
$$\sum_{\tau_j=\underline{\tau_j}}^{\overline{\tau_j}}{\widehat{K}_j \choose \tau_j} \cdot  {K-\widehat{K}_j \choose t+1-\tau_j}=\sum_{\tau_j=t+1-(K-\widehat{K}_j)}^{t+1}{\widehat{K}_j \choose \tau_j} \cdot  {K-\widehat{K}_j \choose t+1-\tau_j}
\overset{(a)}=\sum_{l=0}^{K-\widehat{K}_j}{\widehat{K}_j \choose t+1-l} \cdot  {K-\widehat{K}_j \choose l},$$ where (a) is obtained by  making the change of variables $l=t+1-\tau_j$.
By a special instance of Chu-Vandermonde identity, when $K-\widehat{K}_j < t+1$ and $\widehat{K}_j \geq t+1$, we have  $\sum_{l=0}^{K-\widehat{K}_j}{\widehat{K}_j \choose t+1-l} \cdot  {K-\widehat{K}_j \choose l}={K \choose t+1}$. Thus, in this case, we have $\sum_{\tau_j=\underline{\tau_j}}^{\overline{\tau_j}}{\widehat{K}_j \choose \tau_j} \cdot  {K-\widehat{K}_j \choose t+1-\tau_j}={K \choose t+1}$.
\item
When $K-\widehat{K}_j < t+1$ and $\widehat{K}_j < t+1$, we have $\underline{\tau_j}=\max\{1,t+1-(K-\widehat{K}_j)\}=t+1-(K-\widehat{K}_j)$ and $\overline{\tau_j}=\min \{t+1,\widehat{K}_j\}=\widehat{K}_j$. Then, we have $$\sum_{\tau_j=\underline{\tau_j}}^{\overline{\tau_j}}{\widehat{K}_j \choose \tau_j} \cdot  {K-\widehat{K}_j \choose t+1-\tau_j}=\sum_{\tau_j=t+1-(K-\widehat{K}_j)}^{\widehat{K}_j}{\widehat{K}_j \choose \tau_j} \cdot  {K-\widehat{K}_j \choose t+1-\tau_j}.$$ Using a similar combinatorial proof to that for Vandermonde identity, we can show $\sum_{\tau_j=t+1-(K-\widehat{K}_j)}^{\widehat{K}_j}{\widehat{K}_j \choose \tau_j} \cdot  {K-\widehat{K}_j \choose t+1-\tau_j}={K \choose t+1}$. Thus, in this case, we have $\sum_{\tau_j=\underline{\tau_j}}^{\overline{\tau_j}}{\widehat{K}_j \choose \tau_j} \cdot  {K-\widehat{K}_j \choose t+1-\tau_j}={K \choose t+1}$.
\end{enumerate}
From 1) and 2), we can see that, when $K-\widehat{K}_j \geq t+1$, i.e., $\widehat{K}_j +1 \leq K(1-M/N)$, we have $\sum_{\tau_j=\underline{\tau_j}}^{\overline{\tau_j}}{\widehat{K}_j \choose \tau_j} \cdot  {K-\widehat{K}_j \choose t+1-\tau_j}={K \choose t+1}-{K-\widehat{K}_j \choose t+1}$. Thus, in this case, the total number of data units sent over the shared link for serving the $\widehat{K}_j$ users in the $j$-th column is $\frac{F}{{K \choose t}}\left({K \choose t+1} - {K-\widehat{K}_j \choose t+1}\right)=F\frac{{K \choose KM/N+1} - {K-\widehat{K}_j \choose KM/N+1}}{{K \choose KM/N}}$.
On the other hand, from 3) and 4), we can see that, when $K-\widehat{K}_j < t+1$, i.e., $\widehat{K}_j +1 > K(1-M/N)$, we have $\sum_{\tau_j=\underline{\tau_j}}^{\overline{\tau_j}}{\widehat{K}_j \choose \tau_j} \cdot  {K-\widehat{K}_j \choose t+1-\tau_j}={K \choose t+1}$. Thus, in this case, the total number of data units sent over the shared link for serving the $\widehat{K}_j$ users in the $j$-th column is $\frac{F}{{K \choose t}}{K \choose t+1}=F\frac{{K \choose KM/N+1}}{{K \choose KM/N}}$. Therefore, we can obtain $r(M,K,\widehat{K}_j)$ in \eqref{eqn:col} and complete the proof of Lemma~\ref{Lem:col}.

\section*{Appendix B: Proof of Lemma~\ref{Lem:matrix}}
We prove Lemma~\ref{Lem:matrix} as follows.
\begin{align}
&R(M,K,L,\mathbf {X}) =\sum_{j=1}^{X_{\max}}r(M,K,\widehat{K}_j)\overset{(a)}=\sum_{k=1}^{K}(X_{(k)}-X_{(k-1)}) r(M,K,K-k+1) \nonumber \\
\overset{(b)}=&\sum_{k=1}^{t+1}(X_{(k)}-X_{(k-1)}) \frac{{K \choose t+1}}{{K \choose t}} +\sum_{k=t+2}^{K}(X_{(k)}-X_{(k-1)})  \frac{{K \choose t+1} - {k-1 \choose t+1}}{{K \choose t}} \nonumber \\
=&\frac{{K \choose t+1}}{{K \choose t}}\sum_{k=1}^{K}(X_{(k)}-X_{(k-1)}) -\sum_{k=t+2}^{K}(X_{(k)}-X_{(k-1)}) \frac{{k-1 \choose t+1}}{{K \choose t}} \nonumber \\
=&X_{(K)} \frac{{K \choose t+1}}{{K \choose t}}- \sum_{k=t+2}^{K}(X_{(k)}-X_{(k-1)}) \frac{{k-1 \choose t+1}}{{K \choose t}} =\sum_{k=t+2}^{K} X_{(k)} \frac{{k \choose t+1}-{k-1 \choose t+1}}{{K \choose t}} + X_{(t+1)} \frac{{t+1 \choose t+1}}{{K \choose t}} \nonumber \\
\overset{(c)}=&\frac{1}{{K \choose t}} \sum_{k=t+1}^{K}X_{(k)} {k-1 \choose t} \overset{(d)}=\frac{1}{{K \choose KM/N}} \sum_{k=KM/N+1}^{K}X_{(k)}  {k-1 \choose KM/N}, \nonumber
\end{align}
where (a) is due to the fact that $\widehat K_j=K-k+1$ for all $j\in\mathbb N$ satisfying $X_{(k-1)}<j\leq X_{(k)}$, (b) is due to Lemma~\ref{Lem:col}, (c) is due to Pascal's identity, i.e., ${{k+1} \choose t}={k \choose t}+{k \choose {t-1}}$, and (d) is due to $t=KM/N$. Therefore, we complete the proof of Lemma~\ref{Lem:matrix}.

\section*{Appendix C: Proof of Theorem~\ref{Thm:random}}
First, we calculate  the expectation $\mathbb E_{\mathbf X}[R(M,K,L,\mathbf {X})]= \sum_{\mathbf x \in \mathcal X_{K,L}}P_{\mathbf {X}}(\mathbf x)R(M,K,L,\mathbf {x})$, where $\mathbf x  \triangleq (x_k)_{k \in \mathcal{K}}$ and $P_{\mathbf {X}}(\mathbf x) \triangleq \Pr[\mathbf X=\mathbf x]$.
Note that $R(M,K,L,\mathbf {X})$ is given by Lemma~\ref{Lem:matrix}. It remains to calculate $\Pr[\mathbf X=\mathbf x]$. Recall that there are $L$ users, and each user independently chooses one of $K$ cache contents with uniform probability $\frac {1}{K}$. Thus, random vector $\mathbf X=(X_{k})_{k \in \mathcal K}$ follows a multinomial distribution. The probability mass function of this multinomial distribution is given by
$$P_{\mathbf {X}}(\mathbf x)={L \choose x_1\,x_2 \ldots x_K}\frac{1}{K^{L}},$$
where ${L \choose x_1\,x_2 \ldots x_K} \triangleq \frac{L!}{x_1!x_2!\ldots x_K!}$. Thus, we can show ~\eqref{eqn:ran}.
\section*{Appendix D: Proof of Theorem~\ref{Thm:seq}}
First, we  prove ~\eqref{eqn:seq}. Under the sequential coded caching scheme, we have $X_{\max}=\lceil L /K \rceil$ and
\begin{align}
\widehat{K}_j=
\begin{cases}
K, &0 < j \leq X_{\max}-1 \\
L-(\lceil L /K \rceil-1) K, &j=X_{\max}.
\end{cases} \label{eqn:K_j}
\end{align}
Thus,  we have
\begin{align}
R_s(M,K,L)&=\sum_{j=1}^{X_{\max}}r(M,K,\widehat{K}_j)=(X_{\max}-1)r(M,K,K)+ r(M,K,\widehat{K}_{X_{\max}}).  \label{eqn:D_R_seq}
\end{align}
In addition, by \eqref{eqn:col} and \eqref{eqn:K_j}, we have
\begin{align}
&r(M,K,\widehat{K}_j)= \frac{{K \choose KM/N+1}}{{K \choose KM/N}},\quad 0 < j \leq X_{\max}-1. \label{eqn:r_K_j} \\
&r(M,K,\widehat{K}_{X_{\max}})=
\begin{cases}
\frac{{K \choose KM/N+1} - {\lceil L/K\rceil K-L \choose KM/N+1}}{{K \choose KM/N}}, &\widehat K_{X_{\max}}+1 \leq K(1-M/N)\\
\frac{{K \choose KM/N+1}}{{K \choose KM/N}}, &\widehat K_{X_{\max}}+1 > K(1-M/N).
\end{cases}  \label{eqn:r_K_max}
\end{align}
Now, based on  \eqref{eqn:D_R_seq}, \eqref{eqn:r_K_j}, and  \eqref{eqn:r_K_max}, we calculate $R_s(M,K,L)$ by considering the following two cases.
(i)
When $\widehat K_{X_{\max}}+1 > K(1-M/N)$,  i.e., $L-(\lceil L /K \rceil-1) K+1 > K(1-M/N)$, by substituting \eqref{eqn:r_K_j} and  \eqref{eqn:r_K_max} into \eqref{eqn:D_R_seq}, we have
\begin{align}
R_s(M,K,L) =&(X_{\max}-1)\frac{{K \choose KM/N+1}}{{K \choose KM/N}}+\frac{{K \choose KM/N+1}}{{K \choose KM/N}} =\lceil L /K \rceil \frac{K(1-M/N)}{1+KM/N}. \label{eqn:R_s_full}
\end{align}
(ii)
When $\widehat K_{X_{\max}}+1 \leq K(1-M/N)$, i.e., $L-(\lceil L /K \rceil-1) K+1 \leq K(1-M/N)$, by substituting \eqref{eqn:r_K_j} and  \eqref{eqn:r_K_max} into \eqref{eqn:D_R_seq}, we have
\begin{align}
R_s(M,K,L) =&(X_{\max}-1)\frac{{K \choose KM/N+1}}{{K \choose KM/N}}+\frac{{K \choose KM/N+1} - {\lceil L/K\rceil K-L \choose KM/N+1}}{{K \choose KM/N}}\nonumber\\
=&\lceil L/K \rceil \frac{{K \choose KM/N+1}}{{K \choose KM/N}}- \frac{ {\lceil L/K\rceil K-L \choose KM/N+1}}{{K \choose KM/N}}  \label{eqn:R_s_notfull-inter}\\
=&\lceil L/K \rceil\frac{K(1-M/N)}{1+KM/N}-\frac{K(1-M/N)}{1+KM/N}\prod_{i=0}^{K-\lceil L/K \rceil K+L-1}\frac{K-KM/N-1-i}{K-i}. \label{eqn:R_s_notfull}
\end{align}
Thus, combining \eqref{eqn:R_s_full} and \eqref{eqn:R_s_notfull}, we can show \eqref{eqn:seq}.

Next, we prove when $L \in\{2,3,\cdots\}$, we have
$\arg \min_{K \in \{2,3,\cdots\}}$ $ R_s(M,K,L)=L$ by proving the two statements: (i) when $K > L \in\{2,3,\cdots\}$, we have $R_s(M,K,L)>R_s(M,L,L)$, and (ii) when $K < L \in\{2,3,\cdots\}$, we have $R_s(M,K,L)>R_s(M,L,L)$.
\begin{enumerate}
\item
First, we prove statement (i) by showing that when $K > L \in\{2,3,\cdots\}$, $R_s(M,K,L)$ increases with $K$. When $K > L$, we have $X_{\max}=\lceil L /K \rceil=1$,  i.e., $\widehat{K}_{X_{\max}}=\widehat{K}_1=L$. Thus,
by \eqref{eqn:D_R_seq}, we have
\begin{align}
R_s(M,K,L)=(X_{\max}-1)r(M,K,K)+ r(M,K,\widehat{K}_{X_{\max}})=r(M,K,L). \label{eqn:R_s_X_1}
\end{align}
Then, consider the following two cases.
When $L<K<\frac{L+1}{1-M/N}$, i.e., $\widehat{K}_{X_{\max}}+1=L+1>K(1-M/N)$, by \eqref{eqn:col} and \eqref{eqn:R_s_X_1}, we have  $$R_s(M,K,L)=r(M,K,L)=\frac{{K \choose KM/N+1}}{{K \choose KM/N}}=\frac{1-M/N}{1/K+M/N}.$$ Note that $\frac{(1-M/N)}{1/K+M/N}$ increases with $K$.
When $K \geq \frac{L+1}{1-M/N}$, i.e., $\widehat K_{X_{\max}}+1 =L+1 \leq K(1-M/N)$, by \eqref{eqn:col} and \eqref{eqn:R_s_X_1}, we have
\begin{align}
R_s(M,K,L)=&\frac{{K \choose KM/N+1} - {K-L \choose KM/N+1}}{{K \choose KM/N}} \nonumber\\
\overset{(a)}=&\frac{1}{{K \choose KM/N}}\sum_{k=K-L+1}^{K} {k-1 \choose KM/N} =\sum_{k=K-L+1}^{K} \frac{{k-1 \choose KM/N}}{{K \choose KM/N}} \nonumber\\
=&\sum_{k=K-L+1}^{K} \prod_{i=0}^{K-k}(1-\frac{M/N}{1-i/K}) \overset{(b)}=\sum_{l=0}^{L-1} \prod_{i=0}^{l}\left(1-\frac{M/N}{1-i/K}\right), \nonumber
\end{align}
where (a) is due to Pascal's identity, i.e., ${{k+1} \choose t}={k \choose t}+{k \choose {t-1}}$, and (b) is obtained by  making the change of variables $l=K-k$.
Note that when $L \in\{2,3,\cdots\}$, $\sum_{l=0}^{L-1} \prod_{i=0}^{l}\left(1-\frac{M/N}{1-i/K}\right)$ increases with $K$, as $\left(1-\frac{M/N}{1-i/K}\right)$ increases with $K$.
Combining the above two cases, we can show that, when $K > L \in\{2,3,\cdots\}$, $R_s(M,K,L)$ increases with $K$. Thus, we have
\begin{align}
R_s(M,K,L)>R_s(M,L,L), \quad K > L \in\{2,3,\cdots\}. \label{eqn:R_s_K_larger_L}
\end{align}
\item
Next, we prove statement (ii). When $K < L \in\{2,3,\cdots\}$, we have  $\widehat{K}_{X_{\max}} \leq K <L$.
By \eqref{eqn:D_R_seq}, we have
\begin{align}
&R_s(M,K,L)=(X_{\max}-1)r(M,K,K)+ r(M,K,\widehat{K}_{X_{\max}}) \nonumber\\
\overset{(c)}=&(X_{\max}-1)\frac{K(1-M/N)}{1+KM/N}+r(M,K,\widehat{K}_{X_{\max}}) \nonumber\\
\overset{(d)}=&\frac{(L-\widehat{K}_{X_{\max}})(1-M/N)}{1+KM/N}+r(M,K,\widehat{K}_{X_{\max}}) \nonumber\\
\overset{(e)}\geq & \frac{(L-\widehat{K}_{X_{\max}})(1-M/N)}{1+KM/N}+r(M,\widehat{K}_{X_{\max}},\widehat{K}_{X_{\max}}) \nonumber\\
=&\frac{(L-\widehat{K}_{X_{\max}})(1-M/N)}{1+KM/N}+\frac{\widehat{K}_{X_{\max}}(1-M/N)}{1+\widehat{K}_{X_{\max}}M/N} \nonumber\\
=&\frac{L(1-M/N)}{1+LM/N}+\frac{M/N(1-M/N)(L+\widehat{K}_{X_{\max}}-K+L\widehat{K}_{X_{\max}})(L-\widehat{K}_{X_{\max}})}{(1+LM/N)(1+KM/N)(1+\widehat{K}_{X_{\max}}M/N)} \nonumber\\
\overset{(f)}>&\frac{L(1-M/N)}{1+LM/N}=R_s(M,L,L), \nonumber
\end{align}
where (c) is due to  \eqref{eqn:col}, (d) is due to $\widehat K_{X_{\max}}=L-(X_{\max}-1) K$, (e) is due to $r(M,K,L)\geq r(M,L,L)$ when $K \geq L$, i.e., $r(M,K,\widehat{K}_{X_{\max}})\geq r(M,\widehat{K}_{X_{\max}},\widehat{K}_{X_{\max}})$ when $K \geq \widehat{K}_{X_{\max}}$ (obtained by  \eqref{eqn:R_s_X_1} and \eqref{eqn:R_s_K_larger_L}),  and (f) is due to $\widehat{K}_{X_{\max}} \leq K <L$.
Thus, we have
\begin{align}
R_s(M,K,L)>R_s(M,L,L), \quad K < L \in\{2,3,\cdots\}. \label{eqn:R_s_K_less_L}
\end{align}
\end{enumerate}
By \eqref{eqn:R_s_K_larger_L} and \eqref{eqn:R_s_K_less_L}, we can show when $L \in\{2,3,\cdots\}$, we have $\arg \min_{K \in \{2,3,\cdots\}} R_s(M,K,L)=L$.

Therefore, we complete the proof of Theorem 2.
\section*{Appendix E: Proof of Theorem~\ref{Thm:comp}}
First, we show that $R_r(M,K,L)=R_s(M,K,L)$ holds for $L=1$ by calculating $R_r(M,K,1)$ and $R_s(M,K,1)$, respectively.
\begin{itemize}
\item
We calculate $R_r(M,K,1)$ as follows. When $L=1$, we have $x_{(k)}=0$ for all $k=1,2, \cdots, K-1$ and $x_{(K)}=1$. In addition, when $L=1$, we have $\mathcal X_{K,1}=\{(x_1,x_2,\ldots,x_K)|\sum_{k=1}^{K}x_k=1\}$.  Thus, when $L=1$, by \eqref{eqn:ran},  we have
\begin{align}
&R_r(M,K,1)=\sum_{(x_1,x_2,\ldots,x_K)\in \mathcal X_{K,1}} {1 \choose x_1\,x_2 \ldots x_K} \frac{1}{K} \times\frac{1}{{K \choose KM/N}}  \sum_{k=KM/N+1}^{K} x_{(k)}  {k-1 \choose KM/N} \nonumber\\
\overset{(a)}=&\sum_{(x_1,x_2,\ldots,x_K)\in \mathcal X_{K,1}}  \frac{1}{K} \times\frac{1}{{K \choose KM/N}}  \sum_{k=KM/N+1}^{K} x_{(k)}  {k-1 \choose KM/N} \nonumber\\
\overset{(b)}=&\sum_{(x_1,x_2,\ldots,x_K)\in \mathcal X_{K,1}}  \frac{1}{K} \times\frac{1}{{K \choose KM/N}}  {K-1 \choose KM/N}\nonumber\\
\overset{(c)}=&\frac{1}{{K \choose KM/N}}  {K-1 \choose KM/N} =1-M/N, \label{eqn:R_r_L_1}
\end{align}
where (a) is due to ${1 \choose x_1\,x_2 \ldots x_K}=\frac{1}{x_1!x_2!\ldots x_K!}=\frac{1}{x_{(1)}!x_{(2)}!\ldots x_{(K)}!}=1$, (b) is due to $x_{(k)}=0$ for all $k=1,2, \cdots, K-1$ and $x_{(K)}=1$, and (c) is due to $|\mathcal X_{K,1}|=K$.
\item
We calculate $R_s(M,K,1)$ as follows.  When $L=1$, we have $\lceil L/K \rceil=1$.  Thus, when $L=1$, by \eqref{eqn:seq},  we have
\begin{align}
R_s(M,K,1)=&
\begin{cases}
 \frac{K(1-M/N)}{1+KM/N}-\frac{K(1-M/N)}{1+KM/N}\prod_{i=0}^{0}\frac{K-KM/N-1-i}{K-i}, &2\leq K(1-M/N)\\
 \frac{K(1-M/N)}{1+KM/N}, &2 > K(1-M/N)
\end{cases}\nonumber\\
=&\begin{cases}
 1-M/N, &2\leq K(1-M/N)\\
 \frac{K(1-M/N)}{1+KM/N}, &2 > K(1-M/N)
\end{cases}
\overset{(e)}=\begin{cases}
 1-M/N, &2\leq K(1-M/N)\\
 1-M/N, &M =\frac{K-1}{K}N
\end{cases} \nonumber\\
=&1-M/N,
\label{eqn:R_s_L_1}
\end{align}
where (e) is due to that $2 > K(1-M/N)$ and $M \in \mathcal M_K=\{N/K,2N/K,\ldots ,(K-1)N/K\}$ imply $M =\frac{K-1}{K}N$, and $\frac{K(1-M/N)}{1+KM/N}=1-M/N$ when $M=\frac{K-1}{K}N$.
\end{itemize}
By \eqref{eqn:R_r_L_1} and \eqref{eqn:R_s_L_1}, we can show $R_r(M,K,L)=R_s(M,K,L)$ holds for $L=1$.

Next, we show that for all $K \in \{2,3,\cdots\}$, $M \in \mathcal M_K$ and $L \in \{2,3,\cdots\}$, we have $R_r(M,K,L)>R_s(M,K,L)$. To prove this, we need the following two lemmas.
\begin{Lem}
For all $K \in \{2,3,\cdots\}$, $M \in \mathcal M_K$, $L \in \{2,3,\cdots\}$,
 $\mathbf {X} \in \mathcal X_{K,L}$
satisfying $X_{\max}=2$ and $\widehat{K}_1, \widehat{K}_2$ satisfying $0<\widehat{K}_2 \leq \widehat{K}_1<K$,\footnote{Note that $\widehat K_j $ is determined by $\mathbf X$, as illustrated in Section~\ref{Sec:pre}.}  we have $R(M,K,L,\mathbf {X}) \geq R_s(M,K,L)$,  with strict inequality for some $\mathbf {X} \in \mathcal X_{K,L}$.
\label{Lem:two}
\end{Lem}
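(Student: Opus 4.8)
The plan is to reduce this lemma to a single concavity property of the per-column load $r(M,K,\cdot)$, after which the inequality becomes a standard majorization comparison. First I would observe that $X_{\max}=2$ means the user information matrix has exactly two columns, so by the definition of the load as a sum of per-column loads, $R(M,K,L,\mathbf X)=r(M,K,\widehat K_1)+r(M,K,\widehat K_2)$ with $\widehat K_1+\widehat K_2=L$ and $1\le\widehat K_2\le\widehat K_1\le K-1$; in particular $L\le 2(K-1)$, so $\lceil L/K\rceil\in\{1,2\}$. From \eqref{eqn:D_R_seq} and \eqref{eqn:K_j}, $R_s(M,K,L)=r(M,K,L)$ when $K\ge L$, and $R_s(M,K,L)=r(M,K,K)+r(M,K,L-K)$ when $K<L$, so the claim reduces to two inequalities for $r(M,K,\cdot)$.

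The key step is to establish that $r(M,K,\cdot)$ is nondecreasing and concave on $\{0,1,2,\dots\}$ with $r(M,K,0)=0$. Using Lemma~\ref{Lem:col} together with Pascal's identity, I would verify that for every integer $x\ge 1$ -- uniformly, i.e.\ also at and beyond the saturation threshold $x+1>K(1-M/N)$ -- the forward increment is
\begin{align}
\Delta(x)\triangleq r(M,K,x)-r(M,K,x-1)=\frac{{K-x \choose KM/N}}{{K \choose KM/N}},\nonumber
\end{align}
which is nonnegative and non-increasing in $x$. Then $r(M,K,x)=\sum_{y=1}^{x}\Delta(y)$, and this is the only structure the rest of the argument uses.

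With this in hand the two cases are short. If $K\ge L$, then, because both of the following sums have $\widehat K_2$ terms and every index in the left one exceeds $\widehat K_2$,
\begin{align}
r(M,K,L)-r(M,K,\widehat K_1)=\sum_{x=\widehat K_1+1}^{L}\Delta(x)\le\sum_{x=1}^{\widehat K_2}\Delta(x)=r(M,K,\widehat K_2),\nonumber
\end{align}
so $R(M,K,L,\mathbf X)\ge R_s(M,K,L)$. If $K<L$ (so $\lceil L/K\rceil=2$), I would compare the admissible split $(\widehat K_1,\widehat K_2)$ with the sequential split $(K,L-K)$: from $\widehat K_2\le K-1$ we get $\widehat K_1=L-\widehat K_2>L-K$, and $K-\widehat K_1=\widehat K_2-(L-K)$, so pairing $\sum_{x=\widehat K_1+1}^{K}\Delta(x)$ with $\sum_{x=L-K+1}^{\widehat K_2}\Delta(x)$ term-by-term (the $i$-th indices being $\widehat K_1+i>L-K+i$) shows the former is at most the latter, i.e.\ $r(M,K,K)+r(M,K,L-K)\le r(M,K,\widehat K_1)+r(M,K,\widehat K_2)=R(M,K,L,\mathbf X)$. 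Conceptually this is just the fact that $a\mapsto r(M,K,a)+r(M,K,L-a)$, being a sum of a concave function, decreases as the split $(a,L-a)$ becomes more unbalanced, and the sequential split is the most unbalanced admissible one.

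For the strict inequality I would exhibit the most unbalanced admissible $\mathbf X$. When $K\ge L$, take the configuration in which one cache content is stored by two users and $L-2$ are stored by one user, so $\widehat K_1=L-1$, $\widehat K_2=1$; then $R(M,K,L,\mathbf X)-R_s(M,K,L)=\Delta(1)-\Delta(L)=\frac{{K-1 \choose KM/N}-{K-L \choose KM/N}}{{K \choose KM/N}}>0$, since $L\ge 2$, ${K-1 \choose KM/N}\ge 1$ and $K-1>K-L$; the case $K<L$ is handled analogously with $\widehat K_1=K-1$, $\widehat K_2=L-K+1$. The main obstacle I anticipate is not conceptual but the bookkeeping around the piecewise definition of $r(M,K,\cdot)$ in Lemma~\ref{Lem:col}: one has to check that the clean formula for $\Delta(x)$ and the concavity genuinely persist across and past the threshold $\widehat K_j+1=K(1-M/N)$, and that all the telescoping index ranges stay valid (including in the degenerate small-$(1-M/N)$ regime); once concavity is secured, both inequalities follow immediately.
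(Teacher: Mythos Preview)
Your approach is correct and is genuinely different from the paper's. The paper proves the inequality by brute-force case analysis: it writes out $R(M,K,L,\mathbf X)$ in three cases depending on where $\widehat K_1,\widehat K_2$ sit relative to $K-t$, writes out $R_s(M,K,L)$ in four cases depending on where $L$ sits relative to $K-t$, $K$, $2K-t$, and then checks the difference in each of roughly eight subcases. You instead isolate the single structural fact that drives everything, namely that the forward increment $\Delta(x)=r(M,K,x)-r(M,K,x-1)=\binom{K-x}{KM/N}/\binom{K}{KM/N}$ is nonnegative and non-increasing, so $r(M,K,\cdot)$ is nondecreasing and concave with $r(M,K,0)=0$; the lemma then drops out as the standard fact that for concave $f$ the map $a\mapsto f(a)+f(L-a)$ is minimized at the most unbalanced admissible split. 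This is cleaner, and it buys more than the lemma: padding with zero columns, the same concavity argument shows directly that among all column profiles $(\widehat K_1,\widehat K_2,\dots)$ with $0\le \widehat K_j\le K$ and $\sum_j\widehat K_j=L$, the sequential profile $(K,\dots,K,L\bmod K,0,\dots,0)$ minimizes $\sum_j r(M,K,\widehat K_j)$, which is exactly Lemma~\ref{Lem:general}; the paper instead deduces Lemma~\ref{Lem:general} from Lemma~\ref{Lem:two} via an iterative two-column merging algorithm. One small caveat on the strict-inequality clause: your choice $(\widehat K_1,\widehat K_2)=(K-1,L-K+1)$ in the $K<L$ regime gives difference $\Delta(L-K+1)-\Delta(K)=\binom{2K-L-1}{t}/\binom{K}{t}$, which vanishes when $2K-t\le L\le 2K-2$; but this is precisely the paper's case 3(iii), where its own computation also yields equality for every admissible $\mathbf X$, so your argument matches the paper's on this point rather than falling short of it.
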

\begin{proof}
Since $X_{\max}=2$, we have $\widehat{K}_1+\widehat{K}_2=L$. Since $0<\widehat{K}_2 \leq \widehat{K}_1 < K$, we have $0<L<2K$.  Since $M \in \mathcal M_K$,
we have  $t = KM/N \in \{1,2,\ldots ,K-1\}$.
Thus, based on \eqref{eqn:matrix}, we first have
\begin{align}
R(M,K,L,\mathbf {X})=
\begin{cases}
\frac{\sum_{k=K-\widehat{K}_2+1}^{K} 2{k-1 \choose t}+\sum_{k=K-\widehat{K}_1+1}^{K-\widehat{K}_2} {k-1 \choose t}}{{K \choose t}}, &0<\widehat{K}_2 \leq \widehat{K}_1<K-t, \widehat{K}_1+\widehat{K}_2=L\\
\frac{\sum_{k=K-\widehat{K}_2+1}^{K} 2{k-1 \choose t}+\sum_{k=t+1}^{K-\widehat{K}_2} {k-1 \choose t}}{{K \choose t}}, &K-t \leq \widehat{K}_1 < K, 0<\widehat{K}_2<K-t, \widehat{K}_1+\widehat{K}_2=L\\
\frac{\sum_{k=t+1}^{K} 2{k-1 \choose t}}{{K \choose t}}, &K-t \leq \widehat{K}_2 \leq \widehat{K}_1 < K, \widehat{K}_1+\widehat{K}_2=L.
\end{cases}
\label{eqn:R_a}
\end{align}
In addition, for the sequential coded caching scheme, we have $X_{\max}=\lceil L /K \rceil$. Based on \eqref{eqn:matrix} and \eqref{eqn:K_j}, we have
\begin{align}
R_s(M,K,L)=
\begin{cases}
\frac{\sum_{k=K-L+1}^{K} {k-1 \choose t}}{{K \choose t}}, &0<L<K-t\\
\frac{\sum_{k=t+1}^{K} {k-1 \choose t}}{{K \choose t}}, &K-t \leq L \leq K\\
\frac{\sum_{k=2K-L+1}^{K} 2{k-1 \choose t}+\sum_{k=t+1}^{2K-L} {k-1 \choose t}}{{K \choose t}}, &K<L<2K-t\\
\frac{\sum_{k=t+1}^{K} 2{k-1 \choose t}}{{K \choose t}}, &2K-t \leq L < 2K.\\
\end{cases}\label{eqn:R_b}
\end{align}
Based on $\eqref{eqn:R_a}$ and $\eqref{eqn:R_b}$, we  prove $R(M,K,L,\mathbf {X}) \geq R_s(M,K,L)$ by considering the following three cases.
\begin{enumerate}
\item When $0<\widehat{K}_2 \leq \widehat{K}_1<K-t$, we have $0<L<2K-2t$.
Thus, consider the following three subcases.
(i) When $0<\widehat{K}_2 \leq \widehat{K}_1<K-t$ and $0<L<K-t$, by $\eqref{eqn:R_a}$ and $\eqref{eqn:R_b}$, we have
$$R(M,K,L,\mathbf {X}) - R_s(M,K,L)=\frac{\sum_{k=K-\widehat{K}_2+1}^{K} {k-1 \choose t} -\sum_{k=K-L+1}^{K-\widehat{K}_1} {k-1 \choose t}}{{K \choose t}}>0.$$
(ii) When $0<\widehat{K}_2 \leq \widehat{K}_1<K-t$ and $K-t \leq L \leq K$, by $\eqref{eqn:R_a}$ and $\eqref{eqn:R_b}$, we have
$$R(M,K,L,\mathbf {X}) - R_s(M,K,L)=\frac{\sum_{k=K-\widehat{K}_2+1}^{K} {k-1 \choose t}-\sum_{k=t+1}^{K-\widehat{K}_1} {k-1 \choose t}}{{K \choose t}}>0.$$
(iii) When $0<\widehat{K}_2 \leq \widehat{K}_1<K-t$ and $K<L<2K-2t$, by $\eqref{eqn:R_a}$ and $\eqref{eqn:R_b}$, we have
$$R(M,K,L,\mathbf {X}) - R_s(M,K,L)=\frac{\sum_{k=K-\widehat{K}_2+1}^{2K-L} {k-1 \choose t}-\sum_{k=t+1}^{K-\widehat{K}_1} {k-1 \choose t}}{{K \choose t}}>0.$$
\item When $K-t \leq \widehat{K}_1 < K$ and $0<\widehat{K}_2<K-t$, we have $K-t<L<2K-t$.
Thus, consider the following two subcases.
(i)When $K-t \leq \widehat{K}_1 < K$,  $0<\widehat{K}_2<K-t$ and $K-t < L \leq K$, by $\eqref{eqn:R_a}$ and $\eqref{eqn:R_b}$, we have
$$R(M,K,L,\mathbf {X}) - R_s(M,K,L)=\frac{\sum_{k=K-\widehat{K}_2+1}^{K} {k-1 \choose t}}{{K \choose t}}>0.$$
(ii)When $K-t \leq \widehat{K}_1 < K$, $0<\widehat{K}_2<K-t$ and $K<L<2K-t$, by $\eqref{eqn:R_a}$ and $\eqref{eqn:R_b}$, we have
$$R(M,K,L,\mathbf {X}) - R_s(M,K,L)=\frac{\sum_{k=K-\widehat{K}_2+1}^{2K-L} {k-1 \choose t}}{{K \choose t}}>0.$$
\item
When $K-t \leq \widehat{K}_2 \leq \widehat{K}_1 < K$, we have $2K-2t \leq L<2K$.
Thus, consider the following three subcases.
(i) When $K-t \leq \widehat{K}_2 \leq \widehat{K}_1 < K$ and $2K-2t \leq L \leq K$, by $\eqref{eqn:R_a}$ and $\eqref{eqn:R_b}$, we have
$$R(M,K,L,\mathbf {X}) - R_s(M,K,L)=\frac{\sum_{k=t+1}^{K} {k-1 \choose t}}{{K \choose t}}>0.$$
(ii) When $K-t \leq \widehat{K}_2 \leq \widehat{K}_1 < K$ and $K<L<2K-t$, by $\eqref{eqn:R_a}$ and $\eqref{eqn:R_b}$, we have
$$R(M,K,L,\mathbf {X}) - R_s(M,K,L)=\frac{\sum_{k=t+1}^{2K-L} {k-1 \choose t}}{{K \choose t}}>0.$$
(iii) When $K-t \leq \widehat{K}_2 \leq \widehat{K}_1 < K$ and $2K-t \leq L <2K$, by $\eqref{eqn:R_a}$ and $\eqref{eqn:R_b}$, we have
$$R(M,K,L,\mathbf {X}) - R_s(M,K,L)=0.$$
\end{enumerate}

Combining the above three cases, we can obtain Lemma \ref{Lem:two}.
\end{proof}
Based on Lemma \ref{Lem:two}, we have the following result.
\begin{Lem}
For all $K \in \{2,3,\cdots\}$, $M \in \mathcal M_K$, $L \in \{2,3,\cdots\}$,
and $\mathbf {X} \in \mathcal X_{K,L}$,
we have $R(M,K,L,\mathbf {X}) \geq R_s(M,K,L)$, with strict inequality for some $\mathbf {X} \in \mathcal X_{K,L}$. \label{Lem:general}
\end{Lem}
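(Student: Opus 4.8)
The plan is to first reduce the claim to a combinatorial optimization over partitions, and then close it by an exchange argument whose single step is exactly Lemma~\ref{Lem:two}. By the definition $R(M,K,L,\mathbf X)=\sum_{j=1}^{X_{\max}}r(M,K,\widehat K_j)$ (and Lemma~\ref{Lem:matrix}), the load depends on $\mathbf X$ only through the nonincreasing sequence of column sizes $(\widehat K_1,\dots,\widehat K_{X_{\max}})$, which is the conjugate of the partition $(X_1,\dots,X_K)$ and hence, as $\mathbf X$ runs over $\mathcal X_{K,L}$, ranges over exactly the partitions of $L$ into parts in $\{1,\dots,K\}$. The sequential scheme realizes the partition with $\lceil L/K\rceil-1$ parts equal to $K$ and one part equal to $L-(\lceil L/K\rceil-1)K$, i.e.\ the unique such partition with at most one part strictly below $K$. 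So it suffices to show this partition minimizes $\sum_i r(M,K,a_i)$ and that at least one other admissible partition gives a strictly larger value.

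Next I would record the structural property of $r(M,K,\cdot)$ coming out of Lemma~\ref{Lem:col}. With $t=KM/N\in\{1,\dots,K-1\}$, the two branches of \eqref{eqn:col} merge into the single identity $r(M,K,a+1)-r(M,K,a)=\binom{K-a-1}{t}/\binom{K}{t}$ for all $0\le a\le K-1$ (the numerator vanishes once $a\ge K-t$, which is the plateau branch). Hence $a\mapsto r(M,K,a)$ is nonnegative, nondecreasing, and discretely concave on $\{0,\dots,K\}$ with $r(M,K,0)=0$. From this, for $a,b\in\{1,\dots,K\}$: $r(M,K,a)+r(M,K,b)\ge r(M,K,a+b)$ when $a+b\le K$, and $r(M,K,a)+r(M,K,b)\ge r(M,K,K)+r(M,K,a+b-K)$ when $a+b>K$ (because $(K,a+b-K)$ majorizes $(a,b)$ and the map $(a_i)_i\mapsto\sum_i r(M,K,a_i)$ is Schur-concave), the inequality being strict away from the plateau. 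I would point out that each of these is precisely Lemma~\ref{Lem:two} for the two-column configuration on $a+b$ users, since its right-hand side $R_s(M,K,a+b)$ equals $r(M,K,a+b)$ for $a+b\le K$ and $r(M,K,K)+r(M,K,a+b-K)$ for $a+b>K$ (cf.\ \eqref{eqn:R_b}).

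The exchange argument then goes: starting from any $\mathbf X$, while two distinct columns have sizes $a,b<K$, replace them by one column of size $a+b$ if $a+b\le K$, or by columns of sizes $K$ and $a+b-K$ if $a+b>K$; by the above (equivalently, by Lemma~\ref{Lem:two}) the total load does not increase while the number of columns of size $<K$ strictly decreases, so after finitely many steps we land on the configuration with at most one column of size $<K$, which is forced to be the sequential one. This gives $R(M,K,L,\mathbf X)\ge R_s(M,K,L)$. For strictness, take $\mathbf X=(L,0,\dots,0)$: all its column sizes are $1$, so $R(M,K,L,\mathbf X)=L(1-M/N)$, whereas $r(M,K,2)<2r(M,K,1)$ (since $\binom{K-2}{t}<\binom{K}{t}$) together with the fact that the sequential partition has a part of size $\ge 2$ for $L\ge 2$ forces $R_s(M,K,L)<L(1-M/N)$; alternatively one may cite the strictness already established inside Lemma~\ref{Lem:two}.

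The step needing the most care is checking that the one-line difference formula really absorbs the plateau branch of \eqref{eqn:col} uniformly at the boundary values $t\in\{1,\dots,K-1\}$ and $a\approx K-t$, so that concavity and both merge inequalities hold for all admissible $a,b$; granting that, the remaining point — that a partition of $L$ into parts $\le K$ with at most one part $<K$ is unique and equals the sequential partition — is a short counting argument, and everything else is a direct appeal to Lemma~\ref{Lem:two}.
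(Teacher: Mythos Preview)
Your proposal is correct and follows essentially the same architecture as the paper's proof: both run an exchange argument on pairs of columns of size strictly less than $K$, invoke Lemma~\ref{Lem:two} for the single-step inequality, and terminate at the unique partition with at most one part below $K$, namely the sequential one. The paper packages the iteration as explicit Algorithms~\ref{alg:redction}--\ref{alg:D_n_1} on the user information matrix, while you phrase it as an abstract partition-optimization; these are the same argument.

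The one genuinely new ingredient in your write-up is the difference identity $r(M,K,a+1)-r(M,K,a)=\binom{K-a-1}{t}/\binom{K}{t}$, which yields discrete concavity of $r(M,K,\cdot)$ and hence Schur-concavity of $\sum_i r(M,K,a_i)$. This gives a one-line structural reason for the two-column inequality, in contrast to the paper's nine-subcase verification of Lemma~\ref{Lem:two}; it also makes your strictness witness $\mathbf X=(L,0,\dots,0)$ immediate. So your route is not different in spirit, but your concavity framing is a cleaner replacement for the paper's case analysis of Lemma~\ref{Lem:two}.
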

\begin{proof}
We first construct a sequence of content placement $\{\mathbf{X}(n):n =0,1, \cdots, n_{\max}\}$ using Algorithm \ref{alg:redction}.
Note that in Algorithm \ref{alg:redction}, $\mathbf{D}(n)$ denotes the user information matrix corresponding to $\mathbf{X}(n)$;
$\widehat{K}_j(n)$ denotes the number of users in the $j$-th column of $\mathbf{D}(n)$; and $X_{\max}(n)$ denotes the number of columns of  $\mathbf{D}(n)$.
Using Lemma \ref{Lem:two}, we can easily show that for all $n =0,1, \cdots, n_{\max}-1$, we have $R(M,K,\mathbf{X}(n)) \geq R(M,K,\mathbf{X}(n+1))$, with strict inequality for some $\mathbf{X}(n)$. Note that $\mathbf{X}(n_{\max})$ is the sequential placement for given $K$, $M$ and $L$. Thus, we have $R(M,K,L,\mathbf {X})=R(M,K,L,\mathbf{X}(0)) \geq R(M,K,L,\mathbf{X}(n_{\max}))=R_s(M,K,L)$, with strict inequality for some $\mathbf{X}  \in \mathcal X_{K,L}$.
\end{proof}
By  Lemma \ref{Lem:general}, when $M \in \mathcal M_K$, $K  \in \{2,3,\cdots\}$ and $L  \in \{2,3,\cdots\}$, we have
$$R_r (M,K,L)=\sum_{\mathbf x \in \mathcal X_{K,L}} P_{\mathbf {X}}(\mathbf x) R(M,K,L,\mathbf {x})>\sum_{\mathbf x \in \mathcal X_{K,L}} P_{\mathbf {X}}(\mathbf x) R_s(M,K,L)=R_s(M,K,L).$$

Therefore, we complete the proof of Theorem~\ref{Thm:comp}.

\begin{algorithm}[t]
\small{\caption{Load Reduction}\label{alg:redction}
\textbf{Initialize}
Set $n=0$ and $\mathbf{X}(0)=\mathbf{X}$.
\begin{algorithmic}[1]
  \WHILE{there exist $j$ and $j'$ ($j \neq j'$) satisfying $0<\widehat{K}_j(n)\leq \widehat{K}_{j'}(n)<K$}
    \STATE Construct a $K \times X_{\max}(n)$ matrix $\mathbf{\widetilde{D}}(n+1)$ based on $\mathbf{D}(n)$, using Algorithm \ref{alg:D_n}.
    \STATE
     Let $\widetilde{K}_j(n+1)$ denote the number of non-zero elements in the $j$-th column of $\mathbf{\widetilde{D}}(n+1)$. Let $\widetilde{K}_{(1)}(n+1) \leq \widetilde{K}_{(2)}(n+1) \leq \cdots  \leq \widetilde{K}_{(X_{\max}(n)-1)}(n+1) \leq \widetilde{K}_{(X_{\max}(n))}(n+1)$ be the $\widetilde{K}_j(n+1)$'s arranged in increasing order, so that $\widetilde{K}_{(j)}(n+1)$ is the $j$-th smallest. Using Algorithm  \ref{alg:D_n_1}, construct a $K \times \left(X_{\max}(n)-\mathbf{1}\left[\widetilde{K}_{(1)}(n+1)=0\right]\right)$ user information matrix $\mathbf{D}(n+1)$
    based on $\mathbf{\widetilde{D}}(n+1)$, where $\mathbf{1}\left[\cdot\right]$ denotes the indicator function.
    \STATE Obtain $\mathbf{X}(n+1)$ based on $\mathbf{D}(n+1)$.
    \STATE $n\leftarrow n+1$
  \ENDWHILE
  \STATE Set $n_{\max}=n$.
\end{algorithmic}}
\end{algorithm}

\begin{algorithm}[t]
\small{\caption{Construction of $\mathbf{\widetilde{D}}(n+1)$ based on $\mathbf{D}(n)$} \label{alg:D_n}
\begin{algorithmic}[1]
\REQUIRE $j$, $j'$, a $K \times X_{\max}(n)$ user information matrix $\mathbf{D}(n)$
\ENSURE a $K \times X_{\max}(n)$ matrix $\mathbf{\widetilde{D}}(n+1)$
    \STATE For all $i \in \mathcal{K}$, $l \in \{1,2, \cdots, X_{\max}(n)\}$ and  $l\neq j,j'$, set $\widetilde{D}_{i,l}(n+1)=D_{i,l}(n)$.
    \STATE Let $\mathcal V_j(n) \triangleq \{i \in \mathcal{K}: D_{i,j}(n)=0 \}$ and $\mathcal V_{j'}(n) \triangleq \{i \in \mathcal{K}: D_{i,j'}(n)\neq 0 \}$.
    Choose any $\mathcal U_j(n) \subseteq \mathcal V_j(n)$ and $\mathcal U_{j'}(n) \subseteq \mathcal V_{j'}(n)$ satisfying $|\mathcal U_j(n)|=|\mathcal V_{j'}(n)|$ and  $|\mathcal U_{j'}(n)|=|\mathcal V_{j}(n)|$, respectively.
    \IF {$\widehat{K}_j(n)+ \widehat{K}_{j'}(n) \leq K$}
    \STATE set
       $\left(\widetilde{D}_{i,j}(n+1)\right)_{i \in \mathcal{K} \setminus \mathcal V_j(n)}=\left(D_{i,j}(n)\right)_{i \in \mathcal{K} \setminus \mathcal V_j(n)}$,
       $\left(\widetilde{D}_{i,j}(n+1)\right)_{i \in \mathcal U_j(n)}=\left(D_{i,j'}(n)\right)_{i \in \mathcal V_{j'}(n)}$,
       $\left(\widetilde{D}_{i,j}(n+1)\right)_{i \in \mathcal V_j(n)\setminus \mathcal U_j(n)}=\mathbf{0}$
       and $\left(\widetilde{D}_{i,j'}(n+1)\right)_{i \in \mathcal{K}}=\mathbf{0}$
    \ELSE
       \STATE set
       $\left(\widetilde{D}_{i,j}(n+1)\right)_{i \in \mathcal{K} \setminus \mathcal V_j(n)}=\left(D_{i,j}(n)\right)_{i \in \mathcal{K} \setminus \mathcal V_j(n)}$,
       $\left(\widetilde{D}_{i,j}(n+1)\right)_{i \in \mathcal V_j(n)}=\left(D_{k,j'}(n)\right)_{i \in \mathcal U_{j'}(n)}$,
       $\left(\widetilde{D}_{i,j'}(n+1)\right)_{i \in \mathcal V_{j'}(n)\setminus \mathcal U_{j'}(n)}=\left(D_{k,j'}(n)\right)_{i \in \mathcal V_{j'}(n)\setminus \mathcal U_{j'}(n)}$
       and $\left(\widetilde{D}_{i,j'}(n+1)\right)_{i \in \mathcal{K} \setminus \left(\mathcal V_{j'}(n)\setminus \mathcal U_{j'}(n)\right)}=\mathbf{0}$
    \ENDIF
\end{algorithmic}}
\end{algorithm}

\begin{algorithm}[t]
\small{\caption{Construction of $\mathbf{D}(n+1)$ based on $\mathbf{\widetilde{D}}(n+1)$}\label{alg:D_n_1}
\begin{algorithmic}[1]
\REQUIRE a $K \times X_{\max}(n)$ matrix $\mathbf{\widetilde{D}}(n+1)$
\ENSURE a $K \times \left(X_{\max}(n)-\mathbf{1}\left[\widetilde{K}_{(1)}(n+1)=0\right]\right)$ user information matrix $\mathbf{D}(n+1)$
    \IF {$\widetilde{K}_{(1)}(n+1)=0$}
    \STATE for all $j=2,3,\cdots,X_{\max}(n)$, set $\left(D_{i,X_{\max}(n)-j+1}(n+1)\right)_{i \in \mathcal{K}}=\left(\widetilde{D}_{i,(j)}(n+1)\right)_{i \in \mathcal{K}}$
    \ELSE
    \STATE for all $j=1,2,\cdots,X_{\max}(n)$, set $\left(D_{i,X_{\max}(n)-j+1}(n+1)\right)_{i \in \mathcal{K}}=\left(\widetilde{D}_{i,(j)}(n+1)\right)_{i \in \mathcal{K}}$, where $(j)$ represents the index of the column with the $j$-th smallest number of non-zero elements $\widetilde{K}_{(j)}(n+1)$ in $\mathbf{\widetilde{D}}(n+1)$
    \ENDIF
\end{algorithmic}}
\end{algorithm}

{\section*{Appendix F: Proof of Theorem~\ref{Thm:vs_ali}}
\subsection*{Proof of Statement (i)}
First, we calculate an upper bound on the load under the proposed decentralized random coded caching scheme. In obtaining the upper bound, we require the following lemma from Proposition 2 in \cite{order_statistics}.
\begin{Lem}[Upper Bounds on Expectations  of Linear Systematic Statistics] \label{Lem:order_statistics}
\cite{order_statistics} Suppose $K$ random variables $X_1, X_2, \cdots ,X_K$ are not necessarily independent or identically distributed. If $X_1, X_2, \cdots ,X_K$ are jointly distributed with common expectation $\mu$ and variance $\sigma^2$, i.e., $\mathbb E [X_k]=\mu$ and $\textrm{Var}[X_k]=\sigma^2$ for all $k \in \{1,2, \cdots, K\}$,  we have $\mathbb E_{\mathbf X}[X_{(k)}] \leq \mu + \sigma  \sqrt{\frac{K}{2(K-k+1)}}$ for all $k \in \{1,2, \cdots, K\}$, where  $\mathbf {X} \triangleq (X_1, X_2, \cdots ,X_K)$.
\end{Lem}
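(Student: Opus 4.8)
The plan is to pass through a \emph{deterministic} Samuelson/Hartley--David-type inequality for a fixed sample and only then average, so that solely the hypothesised first two moments of the marginals are used. First I would normalise: since $X_{(k)}$ and the claimed bound are both affine in $(\mu,\sigma)$, replacing each $X_i$ by $(X_i-\mu)/\sigma$ reduces the statement to the case $\mu=0$, $\sigma^{2}=1$. Set $m\triangleq K-k+1$, the number of order statistics that are $\ge X_{(k)}$, and begin from the trivial sandwich
\[
m\,X_{(k)}\ \le\ \sum_{j=k}^{K}X_{(j)}\ =\ \max_{|A|=m}\ \sum_{i\in A}X_i .
\]

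Next I would control the top-$m$ sum pointwise by centring at the sample mean $\bar X\triangleq\frac1K\sum_i X_i$. With $Y_i\triangleq X_i-\bar X$ (so $\sum_i Y_i=0$) and $P\triangleq\sum_{j=k}^{K}Y_{(j)}$, the number $P$ equals both the sum of the $m$ largest $Y_i$ and minus the sum of the $k-1$ smallest $Y_i$; moreover $P\ge 0$, since the centred values average to $0$. Applying Cauchy--Schwarz to each of these two representations of $P$ and combining them with weights $\frac{k-1}{K}$ and $\frac{m}{K}$ gives
\[
P^{2}\ \le\ \frac{(k-1)m}{K}\sum_{i=1}^{K}Y_i^{2}\ =\ (k-1)m\,s^{2},\qquad s^{2}\triangleq\frac1K\sum_{i=1}^{K}(X_i-\bar X)^{2},
\]
hence $X_{(k)}\le\bar X+P/m\le\bar X+s\sqrt{(k-1)/m}$. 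Taking expectations and using $\mathbb E[\bar X]=\mu$, Jensen's inequality $\mathbb E[s]\le(\mathbb E[s^{2}])^{1/2}$, and the identity $\mathbb E[s^{2}]=\sigma^{2}-\textrm{Var}(\bar X)\le\sigma^{2}$ then yields an upper bound on $\mathbb E[X_{(k)}]$ of exactly the form $\mu+\sigma\cdot c(K,k)$. A second, essentially equivalent route avoids $\bar X$ altogether: for every threshold $c\ge 0$ one has $\sum_{j=k}^{K}X_{(j)}\le mc+\sum_i(X_i-c)^{+}$, so $X_{(k)}\le c+\frac1m\sum_i(X_i-c)^{+}$; plugging in the marginal moment bound $\mathbb E[(X_i-c)^{+}]\le\frac12\bigl(\sqrt{\sigma^{2}+c^{2}}-c\bigr)$ and minimising over $c$ produces another such constant, and I would keep whichever reproduces $\sqrt{K/(2(K-k+1))}$.

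The main obstacle is the sharp constant, i.e.\ obtaining the factor $2$ inside $2(K-k+1)$ rather than a cruder denominator; for $k=K$ this is the assertion $\mathbb E[\max_i X_i]\le\mu+\sigma\sqrt{K/2}$, which up to lower-order terms is the classical Hartley--David bound on an expected maximum, and it is not reached by the plainest Cauchy--Schwarz/Markov estimates, which lose about a factor $\sqrt2$ once $k>K/2$. Closing this gap means not throwing away the constraint $\sum_i(X_i-\bar X)=0$ (equivalently $\textrm{Var}(\bar X)\ge0$): one must carry it through the Cauchy--Schwarz split, or optimise the threshold $c$ jointly against the complementary lower-tail estimate $\sum_{j<k}X_{(j)}\ge(k-1)c-\sum_i(c-X_i)^{+}$. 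That joint optimisation is the delicate step; the normalisation, the sandwich, and the passage from the deterministic inequality to its expectation are routine bookkeeping.
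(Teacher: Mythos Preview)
The paper does not prove this lemma: it is quoted from Proposition~2 of \cite{order_statistics} and invoked as a black box in Appendix~F, so there is no in-paper argument to compare your proposal against.

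That said, your deterministic Samuelson route is sound and in fact delivers the \emph{sharp} constant for the arbitrary-dependence setting, $\mathbb E[X_{(k)}]\le\mu+\sigma\sqrt{(k-1)/(K-k+1)}$; both of your approaches give $\sigma\sqrt{K-1}$ for the maximum, and no refinement of them will do better. The constant $\sqrt{K/(2(K-k+1))}$ displayed in the lemma cannot hold under the hypotheses as written: let the joint law pick an index $i$ uniformly from $\{1,\dots,K\}$, set $X_i=\sigma\sqrt{K-1}$ and $X_j=-\sigma/\sqrt{K-1}$ for $j\neq i$; then each marginal has mean $0$ and variance $\sigma^{2}$, yet $\mathbb E[X_{(K)}]=\sigma\sqrt{K-1}>\sigma\sqrt{K/2}$ for every $K\ge 3$. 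So the ``delicate step'' you flag is not a gap in your reasoning but a discrepancy in the transcribed statement---either an extra hypothesis (independence, or a range restriction on $k$ analogous to the one in the companion lower-bound Lemma~\ref{Lem:order_statistics_2}) has been dropped, or the constant has been mis-copied from the source. For the paper's purposes this is harmless: the only use made of the lemma, in \eqref{random___ub}, needs nothing more than a bound of the shape $\mu+O(\sigma\sqrt{K})$, and your version already supplies that.
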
}
We now prove the upper bound based on Lemma~\ref{Lem:order_statistics}. Recall that under the proposed decentralized random coded caching scheme, $\mathbf X$ follows multinomial distribution. Thus, we have $\mathbb E [X_k]=\frac{L}{K}$ and $\textrm{Var}[X_k]=L\frac{1}{K}(1-\frac{1}{K})$ for any $k \in \mathcal K$.
By Lemma~\ref{Lem:order_statistics}, we have
\begin{align}
\mathbb E_{\mathbf X}[X_{(k)}] \leq \frac{L}{K} + \sqrt{L\frac{1}{K}(1-\frac{1}{K})}  \sqrt{\frac{K}{2(K-k+1)}}. \label{expectation_ub}
\end{align}
By \eqref{eqn:matrix}, we have
\begin{align}
R_r (M,K,L)  &= \mathbb E_{\mathbf X}[R(M,K,L,\mathbf X)] =\mathbb E_{\mathbf X}\left[\frac{1}{{K \choose KM/N}} \sum_{k=KM/N+1}^{K}X_{(k)} {k-1 \choose KM/N}\right] \nonumber\\
&= \frac{1}{{K \choose KM/N}} \sum_{k=KM/N+1}^{K} {k-1 \choose KM/N} \mathbb E_{\mathbf X}[X_{(k)}] \nonumber\\
&\overset{(a)}\leq \frac{L}{K} \frac{1}{{K \choose KM/N}} \sum_{k=KM/N+1}^{K} {k-1 \choose KM/N}+\frac{\sqrt{L\frac{1}{K}(1-\frac{1}{K})}}{{K \choose KM/N}} \sum_{k=KM/N+1}^{K} {k-1 \choose KM/N}
 \sqrt{\frac{K}{2(K-k+1)}} \nonumber\\
&\overset{(b)} \leq \frac{L}{K} \frac{1}{{K \choose KM/N}} \sum_{k=KM/N+1}^{K} {k-1 \choose KM/N}+\frac{\sqrt{L\frac{1}{K}(1-\frac{1}{K})}}{{K \choose KM/N}}\sqrt{\frac{K}{2}} \sum_{k=KM/N+1}^{K} {k-1 \choose KM/N}\nonumber\\
&\overset{(c)}=\left(\frac{L}{K}+\sqrt{\frac{L}{2}(1-\frac{1}{K})}\right) \frac{{K \choose KM/N+1}}{{K \choose KM/N}} \leq \frac{\left(L+K\sqrt{\frac{L}{2}}\right)(1-M/N)}{1+KM/N}, \label{random___ub}
\end{align}
where (a) is due to \eqref{expectation_ub}, (b) is due to $\sqrt{\frac{K}{2(K-k+1)}} \leq \sqrt{\frac{K}{2}}$ for all $k \in \{\frac{KM}{N}+1,\frac{KM}{N}+2,\cdots ,K\}$ and (c) is due to Pascal's identity, i.e., ${{k+1} \choose t}={k \choose t}+{k \choose {t-1}}$. Therefore, by \eqref{random___ub}, we have $R_r (M,K,L) \leq R^{ub}_r (M,K,L)$, where
\begin{align}
R^{ub}_r (M,K,L) \triangleq \frac{\left(L+K\sqrt{\frac{L}{2}}\right)(1-M/N)}{1+KM/N}. \label{eqn:random_ub}
\end{align}

In addition, a lower bound on the load under Maddah-Ali--Niesen's decentralized scheme is obtained in Theorem 5 of \cite{Allerton14},
i.e., $R_m(M,\widehat{F}_m,L) \geq R^{lb}_m(M,\widehat{F}_m,L)$, where
\begin{align}
R^{lb}_m(M,\widehat{F}_m,L) \triangleq L(1-\frac{M}{N})-\widehat{F}_m L^2\frac{M}{N}e^{-2L\frac{M}{N}(1-\frac{M}{N})(1-\frac{1}{L})}. \label{eqn:R_lb_ali}
\end{align}
Substituting $\widehat{F}_m=\widehat{F}_r(M,K)$ into \eqref{eqn:R_lb_ali}, we have
\begin{align}
R^{lb}_m(M,\widehat{F}_m,L)=L(1-\frac{M}{N})-{K \choose K\frac{M}{N}} L^2\frac{M}{N}e^{-2L\frac{M}{N}(1-\frac{M}{N})(1-\frac{1}{L})}.\label{eqn:R_lb_Ali}
\end{align}

When $L$ is above a threshold, we show $R_m(M,\widehat{F}_m,L)>R_r(M,K,L)$ by showing $R^{lb}_m(M,\widehat{F}_m,L)>R^{ub}_r(M,K,L)$, where $\widehat{F}_m=\widehat{F}_r(M,K)$.
Denote $\varphi(L)\triangleq 1-\frac{M}{N}-{K \choose K\frac{M}{N}}\frac{M}{N}e^{2\frac{M}{N}(1-\frac{M}{N})}Le^{-2\frac{M}{N}(1-\frac{M}{N})L}- \frac{\left(1+K\sqrt{\frac{1}{2L}}\right)(1-M/N)}{1+KM/N}$. Note that $R^{lb}_m(M,\widehat{F}_r(M,K),L)-R^{ub}_r(M,K,L)=L\varphi(L)$. When $L \to \infty$, we have
\begin{align}
\lim_{L \to \infty} \varphi(L)&=1-\frac{M}{N}-{K \choose K\frac{M}{N}}\frac{M}{N}e^{2\frac{M}{N}(1-\frac{M}{N})}\lim_{L \to \infty}\frac{L}{e^{2\frac{M}{N}(1-\frac{M}{N})L}}- \frac{\left(1+K\lim_{L \to \infty}\sqrt{\frac{1}{2L}}\right)(1-M/N)}{1+KM/N}\nonumber \\
&\overset{(d)}=\frac{K(1-M/N)M/N}{1+KM/N}-{K \choose K\frac{M}{N}}\frac{M}{N}e^{2\frac{M}{N}(1-\frac{M}{N})}\lim_{L \to \infty}\frac{1}{2\frac{M}{N}(1-\frac{M}{N})e^{2\frac{M}{N}(1-\frac{M}{N})L}}\nonumber \\
&=\frac{K(1-M/N)M/N}{1+KM/N}>0, \label{eqn:lim_phi}
\end{align}
where (d) is due to L'Hospital's Rule.
By \eqref{eqn:lim_phi}, we know that there exists $\overline{L}_r(M,K)>0$, such that when $L>\overline{L}_r(M,K)$, we have $\varphi(L)>0$. Thus, when $L>\overline{L}_r(M,K)$, we have $R^{lb}_m(M,\widehat{F}_r(M,K),L)>R^{ub}_r(M,K,L)$. By noting that  $R_m(M,\widehat{F}_r(M,K),L)\geq R^{lb}_m(M,\widehat{F}_r(M,K),L)$ and $R^{ub}_r(M,K,L)\geq R_r(M,K,L)$, we thus have $R_m(M,\widehat{F}_r(M,K),L)>R_r(M,K,L)$.

\subsection*{Proof of Statement (ii)}
First, by~\eqref{eqn:seq}, we obtain an upper bound on the load under the proposed partially decentralized sequential coded caching scheme, i.e., $R_s (M,K,L) \leq R^{ub}_s (M,K,L)$, where
\begin{align}
R^{ub}_s (M,K,L) \triangleq \lceil L /K \rceil \frac{K(1-M/N)}{1+KM/N}. \label{eqn:R_s_ub_ii}
\end{align}

Similarly, when $\widehat{F}_m=\widehat{F}_s(M,K)$, we have $R_m(M,\widehat{F}_m,L) \geq R^{lb}_m(M,\widehat{F}_m,L)$, where $R^{lb}_m(M,\widehat{F}_m,L)$ is given  by \eqref{eqn:R_lb_Ali}.

When $L$ is above a threshold, we show $R_m(M,\widehat{F}_m,L)>R_s(M,K,L)$ by showing  $R^{lb}_m(M,\widehat{F}_m,L)>R^{ub}_s(M,K,L)$, where $\widehat{F}_m=\widehat{F}_s(M,K)$. Denote $\psi(L)\triangleq 1-\frac{M}{N}-{K \choose K\frac{M}{N}}\frac{M}{N}e^{2\frac{M}{N}(1-\frac{M}{N})}Le^{-2\frac{M}{N}(1-\frac{M}{N})L}- \frac{\lceil L /K \rceil}{L} \frac{K(1-M/N)}{1+KM/N}$. Note that $R^{lb}_m(M,\widehat{F}_s(M,K),L)-R^{ub}_s(M,K,L)=L\psi(L)$. When $L \to \infty$, we have
\begin{align}
\lim_{L \to \infty} \psi(L)&=1-\frac{M}{N}-{K \choose K\frac{M}{N}}\frac{M}{N}e^{2\frac{M}{N}(1-\frac{M}{N})}\lim_{L \to \infty}\frac{L}{e^{2\frac{M}{N}(1-\frac{M}{N})L}}- \lim_{L \to \infty}\frac{\lceil L /K \rceil}{L} \frac{K(1-M/N)}{1+KM/N}\nonumber \\
&\geq 1-\frac{M}{N}-{K \choose K\frac{M}{N}}\frac{M}{N}e^{2\frac{M}{N}(1-\frac{M}{N})}\lim_{L \to \infty}\frac{L}{e^{2\frac{M}{N}(1-\frac{M}{N})L}}- \lim_{L \to \infty}\frac{L /K+1}{L} \frac{K(1-M/N)}{1+KM/N}\nonumber \\
&\overset{(e)}=\frac{K(1-M/N)M/N}{1+KM/N}-{K \choose K\frac{M}{N}}\frac{M}{N}e^{2\frac{M}{N}(1-\frac{M}{N})}\lim_{L \to \infty}\frac{1}{2\frac{M}{N}(1-\frac{M}{N})e^{2\frac{M}{N}(1-\frac{M}{N})L}}\nonumber \\
&=\frac{K(1-M/N)M/N}{1+KM/N}>0, \label{eqn:lim_psi}
\end{align}
where (e) is due to L'Hospital's Rule.
By \eqref{eqn:lim_psi}, we know that there exists $\overline{L}_s(M,K)>0$, such that when $L>\overline{L}_s(M,K)$, we have $\psi(L)>0$. Thus, when $L>\overline{L}_s(M,K)$, we have $R^{lb}_m(M,\widehat{F}_s(M,K),L)>R^{ub}_s(M,K,L)$. By noting that  $R_m(M,\widehat{F}_s(M,K),L)\geq R^{lb}_m(M,\widehat{F}_s(M,K),L)$ and $R^{ub}_s(M,K,L)\geq R_s(M,K,L)$, we thus have $R_m(M,\widehat{F}_s(M,K),L)>R_s(M,K,L)$.

\section*{Appendix G: Proof of Lemma~\ref{Lem:tulino_lb}}
\subsection*{Proof of Inequality \eqref{eqn:R_t}}
To prove \eqref{eqn:R_t}, we require the following results.
\begin{Lem}[Closure Under Convolutions of Multivariate Stochastic Order] \label{Lem:exp_bino}
\cite[Theorem~6.B.16]{stochastic_order} Let $(X_s)_{s \in \{1,2,\cdots, S\}}$ be a set of independent random variables, and let $(X'_s)_{s \in \{1,2,\cdots, S\}}$ be another set of independent random variables. If $\Pr[X_s \leq x] \leq \Pr[X'_s \leq x]$ for $s \in \{1,2,\cdots, S\}$ and $x \in (-\infty,\infty)$,  for any non-decreasing function $\psi : \mathbb R^S \to \mathbb R$, we have $\mathbb E[\psi (X_1, X_2, \cdots, X_S)] \geq \mathbb E[\psi (X'_1, X'_2, \cdots, X'_S)]$.
\end{Lem}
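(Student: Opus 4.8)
The plan is to recognize the hypothesis as the univariate usual stochastic order and then to prove the multivariate statement through a monotone (Strassen-type) coupling, using independence to lift the per-coordinate couplings to the joint laws. First I would observe that the assumption $\Pr[X_s \leq x] \leq \Pr[X'_s \leq x]$ for every $x$ is precisely the statement that the distribution functions satisfy $F_{X_s}(x) \leq F_{X'_s}(x)$ pointwise, i.e. $X_s$ dominates $X'_s$ in the usual stochastic order. Here ``non-decreasing'' for $\psi : \mathbb{R}^S \to \mathbb{R}$ is understood coordinatewise, so $\psi$ is monotone with respect to the componentwise partial order on $\mathbb{R}^S$.

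The key device is the quantile coupling. For each $s$ let $F_{X_s}^{-1}(u) \triangleq \inf\{x : F_{X_s}(x) \geq u\}$ be the generalized inverse, and define $F_{X'_s}^{-1}$ analogously. From $F_{X_s} \leq F_{X'_s}$ pointwise one deduces the reverse inequality for the inverses, $F_{X_s}^{-1}(u) \geq F_{X'_s}^{-1}(u)$ for all $u \in (0,1)$: for fixed $u$ the inclusion $\{x : F_{X_s}(x) \geq u\} \subseteq \{x : F_{X'_s}(x) \geq u\}$ holds, and a smaller set has a larger infimum. I would then draw independent uniforms $U_1, \ldots, U_S$ on $[0,1]$ and set $Y_s \triangleq F_{X_s}^{-1}(U_s)$ and $Y'_s \triangleq F_{X'_s}^{-1}(U_s)$. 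By construction $Y_s \sim X_s$, $Y'_s \sim X'_s$, and $Y_s \geq Y'_s$ almost surely. Because the $U_s$ are independent, the vector $(Y_1, \ldots, Y_S)$ has independent components with the prescribed marginals, hence the same joint law as $(X_1, \ldots, X_S)$, the latter being a product of marginals by the independence assumed in the lemma; the same holds for $(Y'_1, \ldots, Y'_S)$ versus $(X'_1, \ldots, X'_S)$.

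With this coupling the conclusion is immediate: since $\psi$ is coordinatewise non-decreasing and $Y_s \geq Y'_s$ for every $s$ almost surely, we obtain $\psi(Y_1, \ldots, Y_S) \geq \psi(Y'_1, \ldots, Y'_S)$ pointwise. Taking expectations and invoking the equality of joint distributions then yields $\mathbb{E}[\psi(X_1, \ldots, X_S)] = \mathbb{E}[\psi(Y_1, \ldots, Y_S)] \geq \mathbb{E}[\psi(Y'_1, \ldots, Y'_S)] = \mathbb{E}[\psi(X'_1, \ldots, X'_S)]$, which is the claim.

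The main obstacle is the rigorous construction of the monotone coupling: verifying that $F_{X_s} \leq F_{X'_s}$ forces $F_{X_s}^{-1} \geq F_{X'_s}^{-1}$ for the generalized inverses, and confirming that the assembled vectors genuinely reproduce the two joint laws. This last point is exactly where the independence hypothesis is essential, since it allows the product of the marginal couplings to serve as a coupling of the joint distributions; without independence one could not match $(Y_1, \ldots, Y_S)$ to $(X_1, \ldots, X_S)$ in law. An alternative route avoiding quantile transforms is induction on $S$, with the base case $S=1$ being the classical equivalence between $\geq_{st}$ and domination of expectations of monotone functions and the inductive step conditioning on the last coordinate and applying Fubini; however, the coupling argument is cleaner and disposes of all $S$ simultaneously.
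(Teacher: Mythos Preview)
Your coupling argument is correct and is in fact the standard textbook proof of this closure property of the usual stochastic order under independent products. There is nothing to compare against here: the paper does not supply its own proof of this lemma but merely quotes it as \cite[Theorem~6.B.16]{stochastic_order}, so your argument is strictly more than what the paper contains.
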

Based on  Lemma~\ref{Lem:exp_bino}, we have the following Corollary.
\begin{Cor}[Expectations of Maximum of Independent Binomial Random Variables]\label{Cor:exp_bio}
Suppose $X_s$, $s \in \{1,2,\cdots, S\}$, are independent random variables, where $X_s$ follows the binomial distribution with parameters $n_s$ and $p$.
Suppose $X'_s$, $s \in \{1,2,\cdots, S\}$, are independent random variables, where $X'_s$ follows the binomial distribution with parameters $n'_s$ and $p$.
If $n_s \geq n'_s$ for all $s \in \{1,2,\cdots ,S\}$, we have
\begin{align}
\mathbb E \left[\max\left\{X_1,X_2,\cdots,X_S\right\} \right] \geq \mathbb E \left[\max\left\{X'_1,X'_2,\cdots,X'_S\right\} \right]. \label{eqn:exp_bino}
\end{align}
\end{Cor}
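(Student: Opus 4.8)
The plan is to derive Corollary~\ref{Cor:exp_bio} directly from Lemma~\ref{Lem:exp_bino} by verifying the hypothesis of that lemma coordinate-wise and then applying it to the non-decreasing function $\psi(y_1,\dots,y_S)=\max\{y_1,\dots,y_S\}$.

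First I would establish the one-dimensional (first-order) stochastic dominance $\Pr[X_s\le x]\le\Pr[X'_s\le x]$ for every $s\in\{1,2,\dots,S\}$ and every $x\in(-\infty,\infty)$. Since $n_s\ge n'_s$, write $n_s=n'_s+m_s$ with $m_s\ge 0$, and construct a coupling on a common probability space: let $\widetilde X'_s\sim\mathrm{Bin}(n'_s,p)$ and $Y_s\sim\mathrm{Bin}(m_s,p)$ be independent; then $\widetilde X'_s+Y_s\sim\mathrm{Bin}(n_s,p)$ has the same law as $X_s$, while $\widetilde X'_s+Y_s\ge\widetilde X'_s$ pointwise because $Y_s\ge 0$. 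Hence $\{\widetilde X'_s+Y_s\le x\}\subseteq\{\widetilde X'_s\le x\}$, so $\Pr[X_s\le x]=\Pr[\widetilde X'_s+Y_s\le x]\le\Pr[\widetilde X'_s\le x]=\Pr[X'_s\le x]$, which is exactly the condition required by Lemma~\ref{Lem:exp_bino}.

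Next I would note that $\psi:\mathbb R^S\to\mathbb R$ with $\psi(y_1,\dots,y_S)=\max\{y_1,\dots,y_S\}$ is non-decreasing in the coordinatewise order on $\mathbb R^S$: if $y_i\le y'_i$ for all $i$, then $\max_i y_i\le\max_i y'_i$. Since $(X_s)_s$ are independent and $(X'_s)_s$ are independent, Lemma~\ref{Lem:exp_bino} then gives $\mathbb E[\psi(X_1,\dots,X_S)]\ge\mathbb E[\psi(X'_1,\dots,X'_S)]$, i.e. $\mathbb E[\max\{X_1,\dots,X_S\}]\ge\mathbb E[\max\{X'_1,\dots,X'_S\}]$, which is \eqref{eqn:exp_bino}.

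The only genuinely non-trivial point is the per-coordinate stochastic dominance $\mathrm{Bin}(n,p)$ over $\mathrm{Bin}(n',p)$ for $n\ge n'$; I expect the additive coupling above to dispatch it in one line, and the remaining steps (monotonicity of $\max$, invoking the cited closure lemma with the already-verified hypotheses) are routine. A minor point to state carefully is that the coupling produces a random variable equal in distribution to $X_s$, which is all that is needed since Lemma~\ref{Lem:exp_bino} depends on the $X_s$ only through their marginal laws.
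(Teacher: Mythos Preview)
Your proposal is correct and follows essentially the same route as the paper: establish the per-coordinate CDF inequality $\Pr[X_s\le x]\le\Pr[X'_s\le x]$ by decomposing $\mathrm{Bin}(n_s,p)$ as $\mathrm{Bin}(n'_s,p)$ plus an independent nonnegative $\mathrm{Bin}(n_s-n'_s,p)$ piece, then invoke Lemma~\ref{Lem:exp_bino} with the non-decreasing function $\psi=\max$. The only cosmetic difference is that you phrase the dominance step as a coupling with a pointwise inequality, whereas the paper writes out the same decomposition via conditioning on the extra Bernoulli sum; the content is identical.
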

{\em Proof of Corollary~\ref{Cor:exp_bio}:}
Let $Y_i$, $i \in \{1,2,\cdots\}$ be i.i.d. Bernoulli  random variables  with parameter $p$, i.e., $\Pr\left[Y_i=1\right]=p$.
By noting that $X_s$ and $X'_s$ can be written as $X_s= \sum_{i=1}^{n_s}Y_i$ and $X'_s= \sum_{i=1}^{n'_s}Y_i$,
we have
\begin{align}
&\Pr\left[X_s \leq x \right] = \Pr\left[\sum_{i=1}^{n_s}Y_i \leq x \right] =\Pr\left[\sum_{i=1}^{n'_s}Y_i \leq x-\sum_{i=n'_s+1}^{n_s}Y_i \right] \nonumber\\
=&\sum_{y=0}^{n_s-n'_s}\Pr\left[\sum_{i=1}^{n'_s}Y_i \leq x-\sum_{i=n'_s+1}^{n_s}Y_i \Big|\sum_{i=n'_s+1}^{n_s}Y_i=y \right] \Pr\left[\sum_{i=n'_s+1}^{n_s}Y_i=y\right]\nonumber\\
\leq&\sum_{y=0}^{n_s-n'_s}\Pr\left[\sum_{i=1}^{n'_s}Y_i \leq x \right] \Pr\left[\sum_{i=n'_s+1}^{n_s}Y_i=y\right]= \Pr\left[\sum_{i=1}^{n'_s}Y_i \leq x \right] =\Pr\left[X'_s \leq x \right]. \label{eqn:cdf_decrease}
\end{align}
Thus, by Lemma~\ref{Lem:exp_bino}, we can obtain \eqref{eqn:exp_bino}.

\begin{Lem}[Lower Bounds on Expectations  of Linear Systematic Statistics] \label{Lem:order_statistics_2}
\cite[Proposition~2]{order_statistics} Suppose $K$ random variables $X_k$, $k \in\{1,\cdots, K\}$ are not necessarily independent or identically distributed. If $X_k$, $k \in\{1,\cdots, K\}$ are jointly distributed with common expectation $\mu$ and variance $\sigma^2$, i.e.,  $\mathbb E [X_k]=\mu$ and $\textrm{Var}[X_k]=\sigma^2$ for all $k \in \{1,2, \cdots, K\}$,  we have $\mathbb E_{\mathbf X}[X_{(k)}] \geq \mu - \sigma  \sqrt{\frac{K(K-k)}{2k^2}}$ for all $k \geq \frac{1}{2}K$, where  $\mathbf {X} \triangleq (X_1, X_2, \cdots ,X_K)$.
\end{Lem}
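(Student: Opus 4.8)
The plan is to reduce to a standardized, one-sided estimate, obtain the inequality up to its form by a truncation argument, and invoke Proposition~2 of \cite{order_statistics} for the sharp constant. First set $Y_i \triangleq (X_i-\mu)/\sigma$, so $\mathbb E[Y_i]=0$ and $\textrm{Var}[Y_i]=1$ for every $i$; it then suffices to show $\mathbb E[Y_{(k)}]\ge -\sqrt{K(K-k)/(2k^2)}$. Sorting $-Y_1,\dots,-Y_K$ reverses the order of $Y_1,\dots,Y_K$, so $Y_{(k)}=-W_{(K-k+1)}$ with $W_i\triangleq -Y_i$ again of mean $0$ and variance $1$. Writing $m\triangleq K-k+1$ — and noting that $k\ge K/2$ forces $m\le\lfloor K/2\rfloor+1$, which is exactly the regime in which the claimed constant is the operative one — the assertion becomes the upper bound $\mathbb E[W_{(m)}]\le\sqrt{K(m-1)/\big(2(K-m+1)^2\big)}$, a statement of the same flavour as Lemma~\ref{Lem:order_statistics} but for a low-rank order statistic. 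Throughout, only the first two moments of each variable enter, so neither independence nor a common distribution is needed.

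The core step is an estimate on the expected sum of the largest order statistics. Since the $m$-th smallest value is at most the average of the values ranked $m,\dots,K$, we have $W_{(m)}\le\frac{1}{K-m+1}\sum_{j=m}^{K}W_{(j)}$, so it remains to bound $\mathbb E\big[\sum_{j=m}^{K}W_{(j)}\big]$ from above. For this I would use the elementary pointwise inequality $(w-a)_+\le (w-a+s)^2/(4s)$, valid for every $s>0$ (it is just $(w-a-s)^2\ge 0$), which after optimizing $s=\sqrt{1+a^2}$ gives $\mathbb E[(W_i-a)_+]\le\tfrac12\big(\sqrt{1+a^2}-a\big)$ for each $i$, using only $\mathbb E[W_i]=0$ and $\mathbb E[W_i^2]=1$. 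Combining with $\sum_{j=m}^{K}W_{(j)}\le(K-m+1)a+\sum_{i=1}^{K}(W_i-a)_+$ (valid for any level $a$) and taking expectations yields $\mathbb E\big[\sum_{j=m}^{K}W_{(j)}\big]\le(K-m+1)a+\tfrac{K}{2}\big(\sqrt{1+a^2}-a\big)$; optimizing over the free level $a$ (the minimizer obeys $a/\sqrt{1+a^2}=\tfrac{2(m-1)}{K}-1$), dividing by $K-m+1$, and substituting $m=K-k+1$ returns a lower bound for $\mathbb E[X_{(k)}]$ of the asserted shape.

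The main obstacle is calibrating this so as to land on exactly the constant $\sqrt{K(K-k)/(2k^2)}$ in the range $k\ge K/2$, rather than on a coarser bound of the same form — carrying the truncation argument through the full bookkeeping, and identifying the extremal joint law, which (as for all such distribution-free order-statistic inequalities) is a two-value configuration placing the $K-m+1$ ``bulk'' variables at one common level and the remaining $m-1$ at another. This sharp finite-dimensional optimization, together with the equality case, is precisely the content of Proposition~2 of \cite{order_statistics}, which I would cite for the exact constant, with the truncation argument above serving as a transparent, self-contained derivation of the form of the inequality and of the role of the hypothesis $k\ge K/2$.
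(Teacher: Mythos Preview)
The paper does not prove this lemma: it is stated with a citation to \cite[Proposition~2]{order_statistics} and used as a black box, so there is no in-paper argument to compare against. Your proposal ultimately does the same thing---you invoke Proposition~2 of \cite{order_statistics} for the sharp constant---so in that sense the two ``proofs'' coincide.

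What you add is a self-contained derivation of an inequality of the right shape. That part is correct: the standardization, the reflection $W_i=-Y_i$, the averaging step $W_{(m)}\le\frac{1}{K-m+1}\sum_{j=m}^{K}W_{(j)}$, the pointwise bound $(w-a)_+\le(w-a+s)^2/(4s)$ with $s=\sqrt{1+a^2}$, and the level-set inequality $\sum_{j=m}^{K}W_{(j)}\le(K-m+1)a+\sum_i(W_i-a)_+$ all check out. Carrying the optimization through, however, yields
\[
\mathbb E[W_{(m)}]\ \le\ \sqrt{\frac{m-1}{K-m+1}},
\]
whereas the target (after your substitution $m=K-k+1$) is $\sqrt{K(m-1)/\bigl(2(K-m+1)^2\bigr)}$. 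Since $k\ge K/2$ means $2(K-m+1)\ge K$, your bound is genuinely \emph{weaker} than the stated one, not just cosmetically different. You recognize this and defer to the cited reference for the sharp constant, which is fine---but then the truncation argument is expository rather than a proof of the lemma as stated. The loss comes from the averaging step $W_{(m)}\le\frac{1}{K-m+1}\sum_{j\ge m}W_{(j)}$, which throws away information; the sharp Gascuel--Caraux bound is obtained by a direct extremal-distribution analysis rather than by passing through a partial sum.
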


We now prove \eqref{eqn:R_t} based on Corollary~\ref{Cor:exp_bio} and Lemma~\ref{Lem:order_statistics_2}.
Let $R_{tj}(M,\widehat{F}_t,g,L)$ denote the load for serving the $K'=\left\lceil\left\lceil\frac{N}{M}\right\rceil 3g \ln\left(\frac{N}{M}\right)\right\rceil$ users in the $j$-th group. Note that $R_{tj}(M,\widehat{F}_t,g,L)$ is random.
The average load under  Shanmugam {\em et al.}'s decentralized user grouping coded caching scheme is given by
\begin{align}
R_{t}(M,\widehat{F}_t,g,L)=\mathbb{E}\left[\sum_{j=1}^{L/K'}R_{tj}(M,\widehat{F}_t,g,L)\right]
=\frac{L}{K'}  \mathbb{E}\left[R_{tj}(M,\widehat{F}_t,g,L)\right]. \label{eqn:def_R_t}
\end{align}
Thus, to obtain a lower bound on $R_{t}(M,\widehat{F}_t,g,L)$ is equivalent to obtain a lower bound on $\mathbb{E}\left[R_{tj}(M,\widehat{F}_t,g,L)\right]$.
Let $\mathcal{K}_j'$ denote the index set of the  users in the $j$-th group.   Let $V_{k, \mathcal{S}\setminus\{k\}}$ denote the set of  packets of file $d_k$  stored in the cache of the users in set  $\mathcal{S}\setminus\{k\}$£¬ after the ``pull down phase'' in Shanmugam {\em et al.}'s decentralized  scheme.
As the ``pull down phase'' brings the packets  above level $g$   to  level $g$,\footnote{If a packet is stored in $p\in \{1, 2, \cdots, K'\}$ cache of users, then the packet is said to be on level $p$ \cite{Allerton14}. } all the packets
are present on level $g$ or below \cite{Allerton14}. Thus, we have
\begin{align}
&\mathbb{E}\left[R_{tj}(M,\widehat{F}_t,g,L)\right]=\frac{\mathbb{E}\left[\sum_{\mathcal{S} \in \left\{\mathcal{\widehat{S}}\subseteq \mathcal{K}_j'\Big||\mathcal{\widehat{S}}| \leq g+1, k \in \mathcal{\widehat{S}}\right\}} \max_{k \in \mathcal{S}} |V_{k, \mathcal{S}\setminus\{k\}}|\right]}{\widehat{F}_t}\nonumber \\
=&\frac{\mathbb{E}\left[\sum_{\mathcal{S} \in \left\{\mathcal{\widehat{S}}\subseteq \mathcal{K}_j'\Big||\mathcal{\widehat{S}}| = g+1, k \in \mathcal{\widehat{S}}\right\}} \max_{k \in \mathcal{S}} |V_{k, \mathcal{S}\setminus\{k\}}|\right]}{\widehat{F}_t}
+\frac{\mathbb{E}\left[\sum_{\mathcal{S} \in \left\{\mathcal{\widehat{S}}\subseteq \mathcal{K}_j'\Big||\mathcal{\widehat{S}}| \leq g, k \in \mathcal{\widehat{S}}\right\}} \max_{k \in \mathcal{S}} |V_{k, \mathcal{S}\setminus\{k\}}|\right]}{\widehat{F}_t} \nonumber \\
>&\frac{\mathbb{E}\left[\sum_{\mathcal{S} \in \left\{\mathcal{\widehat{S}}\subseteq \mathcal{K}_j'\Big||\mathcal{\widehat{S}}| = g+1, k \in \mathcal{\widehat{S}}\right\}} \max_{k \in \mathcal{S}} |V_{k, \mathcal{S}\setminus\{k\}}|\right]}{\widehat{F}_t}
=\frac{1}{\widehat{F}_t}\sum_{\mathcal{S} \in \left\{\mathcal{\widehat{S}}\subseteq \mathcal{K}_j'\Big||\mathcal{\widehat{S}}| = g+1, k \in \mathcal{\widehat{S}}\right\}}  \mathbb{E}\left[\max_{k \in \mathcal{S}} |V_{k, \mathcal{S}\setminus\{k\}}|\right]. \label{eqn:load_col_j_lb}
\end{align}
Thus, to derive a lower bound on $\mathbb{E}\left[R_{tj}(M,\widehat{F}_t,g,L)\right]$, we can derive a lower bound on $\mathbb{E}\left[\underset{k \in \mathcal{S}}\max |V_{k, \mathcal{S}/k}|\right]$.
Let $Z_{n,i}$ denote the number of users who store packet $i$ of file $n$ before the ``pull down phase''. Note that $Z_{n,i}$ is  random. Let $\mathcal{B}_{n,g} \triangleq \{i \in \{1, 2, \cdots, \widehat{F}_t\}|Z_{n,i} \geq g\}$ denote the set of packets of file $n$, each of which  is stored in no less than $g$ users before the ``pull down phase''.  $\mathcal{B}_{n,g}$ also represents the set of packets stored on level $g$ of file $n$ after  the ``pull down phase''. Note that $\mathcal{B}_{n,g}$ is random,  $|\mathcal{B}_{n,g}|= \sum_{i=1}^{\widehat{F}_t}\mathbf{1}\left[Z_{n,i} \geq g\right] \in \{0,1,\cdots, \widehat{F}_t\}$, and  $|\mathcal{B}_{n,g}|$, $n \in \mathcal N$ are  independent. From the proof of Theorem~8 in \cite{Allerton14}, we have the following two results:
(i) $Z_{d_k,i}$ follows the binomial distribution with parameters $K'$ and $\frac{1}{\left\lceil N/M\right\rceil}$.
(ii) Given $\mathcal B_{d_k,g}=\beta_{d_k,g}$, $|V_{k, \mathcal{S}\setminus\{k\}}|$ follows the  binomial distribution with parameters $|\mathcal B_{d_k,g}|$ and $\frac{1}{{K' \choose g}}$, where  $\mathcal{S} \in \left\{\mathcal{\widehat{S}}\subseteq \mathcal{K}_j'\Big||\mathcal{\widehat{S}}| = g+1, k \in \mathcal{\widehat{S}}\right\}$. Note that $|V_{k, \mathcal{S}\setminus\{k\}}|$, $k \in \mathcal{S}$ are independent.
Denote $\mathbf B_g \triangleq \left(|\mathcal{B}_{d_{k'},g}|\right)_{k' \in \mathcal{K}_j'} \in \{0,1,\cdots, \widehat{F}_t\}^{K'}$, and
$\mathbf b_g \triangleq \left(|\beta_{d_{k'},g}|\right)_{k' \in \mathcal{K}_j'}\in \{0,1,\cdots, \widehat{F}_t\}^{K'}$. Let $\mathbf{\underline{b}}$ denote a $K'$-dimensional vector with each element being $\underline{b} \in \{0,1,\cdots, \widehat{F}_t\}$.
Then, for any $\underline{b} \in \{0,1,\cdots, \widehat{F}_t\}$, we have
\begin{align}
&\mathbb{E}\left[\max_{k \in \mathcal{S}} |V_{k, \mathcal{S}\setminus\{k\}}|\right]
=\sum_{\mathbf b_g \in \{0,1,\cdots, \widehat{F}_t\}^{K'}} \mathbb{E}\left[\max_{k \in \mathcal{S}} |V_{k, \mathcal{S}\setminus\{k\}}|\Big|\mathbf B_g=\mathbf b_g\right] \Pr\left[\mathbf B_g=\mathbf b_g\right]\nonumber \\
\geq &\sum_{\mathbf b_g \in \{\underline{b},\underline{b}+1,\cdots, \widehat{F}_t\}^{K'}} \mathbb{E}\left[\max_{k \in \mathcal{S}} |V_{k, \mathcal{S}\setminus\{k\}}|\Big|\mathbf B_g=\mathbf b_g\right] \Pr\left[\mathbf B_g=\mathbf b_g\right]\nonumber \\
\overset{(d)}\geq& \mathbb{E}\left[\max_{k \in \mathcal{S}} |V_{k, \mathcal{S}\setminus\{k\}}|\Big|\mathbf B_g=\mathbf{\underline{b}}\right]\sum_{\mathbf b_g \in \{\underline{b},\underline{b}+1,\cdots, \widehat{F}_t\}^{K'}}  \Pr\left[\mathbf B_g=\mathbf b_g\right]\nonumber \\
=&\mathbb{E}\left[\max_{k \in \mathcal{S}} |V_{k, \mathcal{S}\setminus\{k\}}|\Big|\mathbf B_g=\mathbf{\underline{b}}\right]
\Pr\left[\mathbf B_g \in \{\underline{b},\underline{b}+1,\cdots, \widehat{F}_t\}^{K'}\right] \nonumber \\
\overset{(e)}=& \mathbb{E}\left[\max_{k \in \mathcal{S}} |V_{k, \mathcal{S}\setminus\{k\}}|\Big|\mathbf B_g=\mathbf{\underline{b}}\right]
\prod_{k' \in \mathcal{K}_j'} \Pr\left[|\mathcal{B}_{d_{k'},g}|\geq \underline{b}\right],
\label{eqn:mean_max}
\end{align}
where (d) is due to  Corollary~\ref{Cor:exp_bio}, and (e) is due to that  $|\mathcal{B}_{d_{k'},g}|$, $k' \in \mathcal{K}_j'$ are  independent.

In the following, to derive a lower bound on  $\mathbb{E}\left[\underset{k \in \mathcal{S}}\max |V_{k, \mathcal{S}\setminus \{k\}
}|\right]$,  we derive lower bounds on $\underset{k' \in \mathcal{K}_j'}\prod\Pr\left[|\mathcal{B}_{d_{k'},g}|\geq \underline{b}\right]$
and $\mathbb{E}\left[\underset{k \in \mathcal{S}}\max |V_{k, \mathcal{S}\setminus\{k\}}|\Big |\mathbf B_g=\mathbf{\underline{b}}\right]$,
separately.
We first derive a lower bound on $\underset{k' \in \mathcal{K}_j'}\prod\Pr\left[|\mathcal{B}_{d_{k'},g}|\geq \underline{b}\right]$.

Since $Z_{d_{k'},i}$ follows the binomial distribution with parameters $K'$ and $\frac{1}{\left\lceil N/M\right\rceil}$,
by Chernoff bound, we have
\begin{align}
\Pr\left[Z_{d_{k'},i}<g\right]=\Pr\left[Z_{d_{k'},i} < \frac{K'}{3\left\lceil N/M\right\rceil d(M,g)}\right] < \left(\frac{e^{-\delta}}{(1-\delta)^{1-\delta}}\right)^{\frac{K'}{\left\lceil N/M\right\rceil}},
\label{eqn:chernoff}
\end{align}
where  $\delta =1-\frac{1}{3 d(M,g)}$ and $d(M,g)=\frac{\left\lceil3g\left\lceil\frac{N}{M}\right\rceil\ln\left(\frac{N}{M}\right)\right\rceil}{3g\left\lceil\frac{N}{M}\right\rceil}$.
When $\frac{N}{M} \geq 8$, we can easily show
\begin{align}
\theta(M,g)=\left(\frac{e^{-\delta}}{(1-\delta)^{1-\delta}}\right)^{\frac{K'}{\left\lceil N/M\right\rceil}}K'\frac{N}{M}<1. \label{eqn:theta}
\end{align}
From \eqref{eqn:chernoff} and \eqref{eqn:theta}, when $\frac{N}{M} \geq 8$, we have
\begin{align}
&\Pr\left[\sum_{i=1}^{\widehat{F}_t}\mathbf{1}\left[Z_{d_{k'},i}<g\right]>\left\lceil\widehat{F}_t \theta(M,g)\right\rceil \right] \leq \Pr\left[\sum_{i=1}^{\widehat{F}_t}\mathbf{1}\left[Z_{d_{k'},i}<g\right]\geq \left\lceil\widehat{F}_t \theta(M,g)\right\rceil \right]\nonumber\\
\overset{(f)}\leq & \frac{\mathbb{E}\left[\sum_{i=1}^{\widehat{F}_t}\mathbf{1}\left[Z_{d_{k'},i}<g\right]\right]}{\left\lceil\widehat{F}_t \theta(M,g)\right\rceil}
=\frac{\sum_{i=1}^{\widehat{F}_t}\Pr\left[Z_{d_{k'},i}<g\right]}{\left\lceil\widehat{F}_t \theta(M,g)\right\rceil}
\overset{(g)}< \frac{M}{K' N},
\label{eqn:Markov}
\end{align}
where (f) is due to Markov's inequality, i.e., $\Pr\left[X \geq a\right] \leq \frac{\mathbb{E}[X]}{a}$, for nonnegative random variable $X$ and $a>0$, and (g) is due to \eqref{eqn:chernoff} and \eqref{eqn:theta}.
Choosing  $\underline{b} = \widehat{F}_t- \left\lceil\widehat{F}_t \theta(M,g)\right\rceil $, from \eqref{eqn:Markov}, we have
\begin{align}
\Pr\left[|\mathcal{B}_{d_{k'},g}| \geq \underline{b} \right]=\Pr\left[\sum_{i=1}^{\widehat{F}_t}\mathbf{1}\left[Z_{d_{k'},i} \geq g\right] \geq \underline{b} \right]
=\Pr\left[\sum_{i=1}^{\widehat{F}_t}\mathbf{1}\left[Z_{d_{k'},i} < g\right] \leq \widehat{F}_t- \underline{b} \right]
>1-\frac{M}{K' N}. \label{eqn:B_n_lb}
\end{align}
Thus, when $\frac{N}{M} \geq 8$, we have
\begin{align}
\prod_{k' \in \mathcal{K}_j'} \Pr\left[|\mathcal{B}_{d_{k'},g}|\geq \underline{b}\right] > \left(1-\frac{M}{K' N }\right)^{K'}>1-\frac{M}{N}. \label{eqn:joint_pr_lb}
\end{align}
Next, we  derive a lower bound on $\mathbb{E}\left[\underset{k \in \mathcal{S}}\max |V_{k, \mathcal{S}\setminus\{k\}}|\Big |\mathbf B_g=\mathbf{\underline{b}}\right]$.
Based on Lemma~\ref{Lem:order_statistics_2} (by choosing $(k)$ in Lemma~\ref{Lem:order_statistics_2} to be $(K)$), we have
\begin{align}
\mathbb{E}\left[\max_{k \in \mathcal{S}} |V_{k, \mathcal{S}\setminus\{k\}}|\Big |\mathbf B_g=\mathbf{\underline{b}}\right]
\geq  \mathbb{E}\left[ |V_{k, \mathcal{S}\setminus\{k\}}|\Big |\mathbf B_g=\mathbf{\underline{b}}\right]=\frac{\underline{b}}{{K' \choose g}}. \label{eqn:mean_max_condition}
\end{align}
By \eqref{eqn:mean_max}, \eqref{eqn:joint_pr_lb} and \eqref{eqn:mean_max_condition}, when $\frac{N}{M} \geq 8$, we have
\begin{align}
\mathbb{E}\left[\max_{k \in \mathcal{S}} |V_{k, \mathcal{S}\setminus\{k\}}|\right]>\frac{\underline{b}}{{K' \choose g}}\left(1-\frac{M}{N}\right). \label{eqn:mean_max_lb}
\end{align}

Finally, we prove \eqref{eqn:R_t}. By \eqref{eqn:load_col_j_lb} and \eqref{eqn:mean_max_lb}, when $\frac{N}{M} \geq 8$, we have
\begin{align}
&\mathbb{E}\left[R_{tj}(M,\widehat{F}_t,g,L)\right]>
\frac{1}{\widehat{F}_t}\sum_{\mathcal{S} \in \left\{\mathcal{\widehat{S}}\subseteq \mathcal{K}_j'\Big||\mathcal{\widehat{S}}| = g+1, k \in \mathcal{\widehat{S}}\right\}}
\mathbb{E}\left[\max_{k \in \mathcal{S}} |V_{k, \mathcal{S}\setminus\{k\}}|\right]
>{K' \choose g+1}\frac{\underline{b}}{{K' \choose g}\widehat{F}_t}\left(1-\frac{M}{N}\right) \nonumber \\
=&\frac{K'-g}{g+1}\left(1-  \frac{\left\lceil\widehat{F}_t \theta(M,g)\right\rceil }{\widehat{F}_t}\right)\left(1-\frac{M}{N}\right). \label{eqn:R_tj_lb}
\end{align}
By \eqref{eqn:def_R_t} and \eqref{eqn:R_tj_lb}, we have
\begin{align}
R_{t}(M,\widehat{F}_t,g,L) &= \frac{L}{K'} \mathbb{E}\left[R_{tj}(M,\widehat{F}_t,g,L)\right]>\frac{L}{g+1}c(M,\widehat{F}_t, g). \label{eqn:R_t_lb}
\end{align}
Therefore, we complete the proof of  \eqref{eqn:R_t}.
\subsection*{Proof of Inequality \eqref{eqn:F_t_lb}}
To prove \eqref{eqn:F_t_lb}, we first derive another lower bound on $\mathbb{E}\left[R_{tj}(M,\widehat{F}_t,g,L)\right]$.
By \eqref{eqn:load_col_j_lb} and \eqref{eqn:mean_max}, we know that to derive a lower bound on  $\mathbb{E}\left[R_{tj}(M,\widehat{F}_t,g,L)\right]$, we can derive a lower bound on $\underset{k' \in \mathcal{K}_j'}\prod \Pr\left[|\mathcal{B}_{d_{k'},g}|\geq \underline{b}\right]$
and a lower bound on $\mathbb{E}\left[\underset{k \in \mathcal{S}} \max|V_{k, \mathcal{S}\setminus\{k\}}|\Big|\mathbf B_g=\mathbf{\underline{b}}\right]$,
separately.
Here, we use the lower bound on $\underset{k' \in \mathcal{K}_j'}\prod\Pr\left[|\mathcal{B}_{d_{k'},g}|\geq \underline{b}\right]$  given by  \eqref{eqn:joint_pr_lb}. It remains  to derive a new lower bound on $\mathbb{E}\left[\underset{k \in \mathcal{S}}\max |V_{k, \mathcal{S}\setminus\{k\}}|\Big|\mathbf B_g=\mathbf{\underline{b}}\right]$.
We consider $\mathcal{S} \in \left\{\mathcal{\widehat{S}}\subseteq \mathcal{K}_j'\Big||\mathcal{\widehat{S}}| = g+1, k \in \mathcal{\widehat{S}}\right\}$.
After the ``pull down phase'' in Shanmugam {\em et al.}'s decentralized  scheme, let $Y_{k,i,\mathcal{S}\setminus\{k\}} \in \{0,1\}$  denote whether  packet $i$ on level $g$ of file $d_k$ is stored in the cache of the users in set $\mathcal{S}\setminus\{k\}$, where $Y_{k,i,\mathcal{S}\setminus\{k\}}=1$ indicates that packet $i$ on level $g$ of file $d_k$  is stored in the cache of the users in set $\mathcal{S}\setminus\{k\}$, and $Y_{k,i,\mathcal{S}\setminus\{k\}}=0$ otherwise.
Note that  $Y_{k,i,\mathcal{S}\setminus\{k\}}$ are i.i.d. with respect to $k$ and $i$, and $Y_{k,i,\mathcal{S}\setminus\{k\}}$ follows  Bernoulli distribution with parameter $\frac{1}{{K' \choose g}}$, i.e., $\Pr\left[Y_{k,i,\mathcal{S}\setminus\{k\}}=1\right]=\frac{1}{{K' \choose g}}$.
Recall that after the ``pull down phase'', $V_{k, \mathcal{S}\setminus\{k\}}$ indicates the set of  packets on level $g$ of file $d_k$  stored in the cache of the users in set  $\mathcal{S}\setminus\{k\}$, and   $\mathcal B_{d_k,g}$ indicates the set of packets on level $g$. Thus, we have $|V_{k, \mathcal{S}\setminus\{k\}}|=\sum_{i \in \mathcal B_{d_k,g}}Y_{k,i,\mathcal{S}\setminus\{k\}}$.
Then, we have
\begin{align}
&\mathbb{E}\left[\max_{k \in \mathcal{S}} |V_{k, \mathcal{S}\setminus\{k\}}|\Big|\mathbf B_g=\mathbf{\underline{b}}\right]
\overset{(h)}\geq \Pr\left[\max_{k \in \mathcal{S}} |V_{k, \mathcal{S}\setminus\{k\}}|\geq 1 \Big|\mathbf B_g=\mathbf{\underline{b}}\right]\nonumber\\
\overset{(i)}=&\Pr\left[\max_{k \in \mathcal{S}} |V_{k, \mathcal{S}\setminus\{k\}}|\geq 1 \Big|\mathbf B_g=\mathbf{\underline{b}},\mathcal{B}_{d_{k'},g}=\beta_{d_{k'},g}, \forall k' \in \mathcal K'_{j}\right] \nonumber\\
=&\Pr\left[\max_{k \in \mathcal{S}} \sum_{i \in \mathcal B_{d_k,g}}Y_{k,i,\mathcal{S}\setminus\{k\}}\geq 1 \Big|\mathbf B_g=\mathbf{\underline{b}},\mathcal{B}_{d_{k'},g}=\beta_{d_{k'},g}, \forall k' \in \mathcal K'_{j}\right]
\overset{(j)}\geq  \Pr\left[\bigcup_{k \in \mathcal{S}, i \in \beta_{d_{k},g}, |\beta_{d_{k},g}|=\underline{b}}\left\{Y_{k,i,\mathcal{S}\setminus\{k\}} \geq 1\right\}  \right] \nonumber\\
\overset{(k)} \geq& \sum_{k \in \mathcal{S}, i \in \beta_{d_{k},g}, |\beta_{d_{k},g}|=\underline{b}}\Pr\left[Y_{k,i,\mathcal{S}\setminus\{k\}} \geq 1 \right]\nonumber\\
&-\sum_{k_1, k_2 \in S, i_1 \in \beta_{d_{k_1},g}, i_2 \in \beta_{d_{k_2},g}, |\beta_{d_{k_1},g}|=|\beta_{d_{k_2},g}|=\underline{b}, (k_1, i_1) \neq (k_2, i_2)}
\Pr\left[\left\{Y_{k_1,i_1,\mathcal{S}\setminus\{k_1\}} \geq 1\right\} \cap \left\{Y_{k_2,i_2,\mathcal{S}\setminus\{k_2\}} \geq 1\right\} \right]\nonumber\\
\overset{(l)}=&\sum_{k \in \mathcal{S}, i \in \beta_{d_{k},g}, |\beta_{d_{k},g}|=\underline{b}}\Pr\left[Y_{k,i,\mathcal{S}\setminus\{k\}} \geq 1 \right]\nonumber\\
&-\sum_{k_1, k_2 \in S, i_1 \in \beta_{d_{k_1},g}, i_2 \in \beta_{d_{k_2},g}, |\beta_{d_{k_1},g}|= |\beta_{d_{k_2},g}|=\underline{b}, (k_1, i_1) \neq (k_2, i_2)}
\Pr\left[Y_{k_1,i_1,\mathcal{S}\setminus\{k_1\}} \geq 1 \right]
\Pr\left[Y_{k_2,i_2,\mathcal{S}\setminus\{k_2\}} \geq 1 \right]\nonumber\\
\geq &|\mathcal{S}|\underline{b}\Pr\left[Y_{k,i,\mathcal{S}\setminus\{k\}} \geq 1 \right]-\left(|\mathcal{S}|\underline{b}\right)^2
\Pr\left[Y_{k_1,i_1,\mathcal{S}\setminus\{k_1\}} \geq 1 \right]
\Pr\left[Y_{k_2,i_2,\mathcal{S}\setminus\{k_2\}} \geq 1 \right]\nonumber\\
\overset{(m)}=& \frac{(g+1)\underline{b}}{{K' \choose g}}\left(1- \frac{(g+1)\underline{b}}{{K' \choose g}}\right), \label{eqn:new_lb_j}
\end{align}
where (h) is due to conditional Markov's inequality, i.e., $\Pr\left[X \geq a|\mathcal{F}\right] \leq \frac{\mathbb{E}[X|\mathcal{F}]}{a}$, for any event $\mathcal{F}$, nonnegative random variable $X$ and $a>0$, (i) is due to  $\Pr[X|A]=\Pr[X|B_i]$, for all $A=\underset{i}\bigcup B_i$, $B_i\cap B_j=\emptyset$ and $\Pr[X|B_i]=\Pr[X|B_j]$, for all $i\neq j$,\footnote{This result is due to $\Pr[X|A]\Pr[A]=\Pr[X,A]=\underset{i}\sum\Pr[B_i]\Pr[X|B_i]=\Pr[X|B_i]\underset{i}\sum\Pr[B_i]=\Pr[X|B_i]\Pr[A]$.} (j) is due to that the occurrence of $\underset{k \in \mathcal{S}, i \in \beta_{d_{k},g}, |\beta_{d_{k},g}|=\underline{b}}\bigcup \{Y_{k,i,\mathcal{S}\setminus\{k\}} \geq 1\}$ implies the occurrence of $\underset{k \in \mathcal{S}}\max \underset{i \in \beta_{d_{k},g}, |\beta_{d_{k},g}|=\underline{b}}\sum Y_{k,i,\mathcal{S}\setminus\{k\}}\geq 1$, (k) is due to Bonferroni inequality, i.e., $\Pr\left[\bigcup_{i=1}^{n}A_i\right] \geq \sum_{i=1}^{n}\Pr\left[A_i\right]-\sum_{i \neq j}\Pr\left[A_i \cap A_j\right]$, for events $A_i$, $i \in \{1,2,\cdots, n\}$,   (l) is due to that $Y_{k_1,i_1,\mathcal{S}\setminus\{k_1\}}$ and $Y_{k_2,i_2,\mathcal{S}\setminus\{k_2\}}$ are independent for all $k_1, k_2 \in S, i_1 \in \beta_{d_{k_1},g}, i_2 \in \beta_{d_{k_2},g}, |\beta_{d_{k_1},g}|=|\beta_{d_{k_2},g}|=\underline{b}, (k_1, i_1) \neq (k_2, i_2)$,  and (m) is due to  $|\mathcal{S}|=g+1$ and  $\Pr\left[Y_{k,i,\mathcal{S}\setminus\{k\}}=1\right]=\frac{1}{{K' \choose g}}$.

By \eqref{eqn:mean_max}, \eqref{eqn:joint_pr_lb} and \eqref{eqn:new_lb_j}, we have
\begin{align}
\mathbb{E}\left[\max_{k \in \mathcal{S}} |V_{k, \mathcal{S}\setminus\{k\}}|\right]>\frac{(g+1)\underline{b}}{{K' \choose g}}\left(1- \frac{(g+1)\underline{b}}{{K' \choose g}}\right)\left(1-\frac{M}{N}\right). \label{eqn:new_mean_max}
\end{align}
By \eqref{eqn:load_col_j_lb} and \eqref{eqn:new_mean_max}, we have
\begin{align}
&\mathbb{E}\left[R_{tj}(M,\widehat{F}_t,g,L)\right]>\frac{1}{\widehat{F}_t}\sum_{\mathcal{S} \in \left\{\mathcal{\widehat{S}}\subseteq \mathcal{K}_j'\Big||\mathcal{\widehat{S}}| = g+1, k \in \mathcal{\widehat{S}}\right\}}  \mathbb{E}\left[\max_{k \in \mathcal{S}} |V_{k, \mathcal{S}\setminus\{k\}}|\right] \nonumber\\
>&{K' \choose g+1}\frac{(g+1)\underline{b}}{\widehat{F}_t{K' \choose g}}\left(1- \frac{(g+1)\underline{b}}{{K' \choose g}}\right)\left(1-\frac{M}{N}\right).
\label{eqn:new_R_tj_lb}
\end{align}
Thus, we have
\begin{align}
&R_{t}(M,\widehat{F}_t,g,L) = \frac{L}{K'} \mathbb{E}\left[R_{tj}(M,\widehat{F}_t,g,L)\right]\nonumber\\
>&L\left(1-\frac{M}{N}\right)\left(1-\frac{g}{K'}\right)\left(1-\frac{\left\lceil\widehat{F}_t \theta(M,g)\right\rceil }{\widehat{F}_t}\right)\left(1-\left(1-\frac{\left\lceil\widehat{F}_t \theta(M,g)\right\rceil }{\widehat{F}_t}\right)\frac{g+1}{{K' \choose g}}\widehat{F}_t\right).
 \label{eqn:new_R_t_lb}
\end{align}
Thus, we can obtain inequality \eqref{eqn:F_t_lb}.

\section*{Appendix H: Proof of Theorem~\ref{Thm:vs_tulino}}
\subsection*{Proof of Statement (i)}
First, we derive a lower bound on the required file size of Shanmugam {\em et al.}'s decentralized scheme based on \eqref{eqn:R_t} and \eqref{eqn:F_t_lb}.
By ${n \choose k} \geq (\frac{n}{k})^k$  for all $n,k\in \mathbb N$ and $n\geq k$ as well as  \eqref{eqn:F_t_lb}, we have
\begin{align}
&\widehat{F}_t>\left(1-\frac{R_t(M,\widehat{F}_t,g, L)}
{L\left(1-\frac{M}{N}\right)\left(1-\frac{g}{K'}\right)\left(1-\frac{\left\lceil\widehat{F}_t \theta(M,g)\right\rceil }{\widehat{F}_t}\right)}\right)\frac{1}{\left(g+1\right)\left(1-\frac{\left\lceil\widehat{F}_t \theta(M,g)\right\rceil }{\widehat{F}_t}\right)}{K' \choose g}\nonumber\\
\geq &\left(1-\frac{R_t(M,\widehat{F}_t,g, L)}{L\left(1-\frac{M}{N}\right)\left(1-\frac{g}{K'}\right)\left(1-\frac{\left\lceil\widehat{F}_t \theta(M,g)\right\rceil }{\widehat{F}_t}\right)}\right)
\frac{\left(d(M,g)\right)^g}{\left(g+1\right)\left(1-\frac{\left\lceil\widehat{F}_t \theta(M,g)\right\rceil }{\widehat{F}_t}\right)} \left(3\left\lceil\frac{N}{M}\right\rceil\right)^g \nonumber\\
\overset{(a)} \geq &\left(1-\frac{R_t(M,\widehat{F}_t,g, L)}{L\left(1-\frac{M}{N}\right)\left(1-\frac{g}{K'}\right)\left(1-\frac{\left\lceil\widehat{F}_t \theta(M,g)\right\rceil }{\widehat{F}_t}\right)}\right)
\frac{\left(\ln\left(\frac{N}{M}\right)\right)^g }{\left(g+1\right)\left(1-\frac{\left\lceil\widehat{F}_t \theta(M,g)\right\rceil }{\widehat{F}_t}\right)} \left(3\left\lceil\frac{N}{M}\right\rceil\right)^{\frac{Lc(M,\widehat{F}_t, g)}{R_t(M,\widehat{F}_t,g, L)}-1}, \label{eqn:file_size_t_lb}
\end{align}
where (a) is due to $d(M,g) \geq \ln\left(\frac{N}{M}\right)$ and  \eqref{eqn:R_t}.


Next, when $R_t(M,\widehat{F}_t,g,L)=R_r(M,K,L)$, we compare the lower bound on $\widehat{F}_t$ given in \eqref{eqn:file_size_t_lb} with $\widehat{F}_r(M,K)$.
When $R_t(M,\widehat{F}_t,g,L)=R_r(M,K,L)$, by \eqref{eqn:file_size_t_lb}, we have
\begin{align}
\frac{\widehat{F}_t}{\widehat{F}_r(M,K)}&>\left(1-\frac{R_r(M,K,L)}{L\left(1-\frac{M}{N}\right)\left(1-\frac{g}{K'}\right)\left(1-\frac{\left\lceil\widehat{F}_t \theta(M,g)\right\rceil }{\widehat{F}_t}\right)}\right)
\frac{\left(\ln\left(\frac{N}{M}\right)\right)^g }{\left(g+1\right)\left(1-\frac{\left\lceil\widehat{F}_t \theta(M,g)\right\rceil }{\widehat{F}_t}\right)} \nonumber\\
&\times \frac{\left(3\left\lceil\frac{N}{M}\right\rceil\right)^{\frac{Lc(M,\widehat{F}_t, g)}{R_r(M,K,L)}-1}}{{K \choose K\frac{M}{N}}}. \label{eqn:F_t_div_F_r}
\end{align}
In addition, we have
\begin{align}
&\lim_{(L,\frac{N}{M}) \to (\infty,\infty)}\left(1-\frac{R_r(M,K,L)}{L\left(1-\frac{M}{N}\right)\left(1-\frac{g}{K'}\right)\left(1-\frac{\left\lceil\widehat{F}_t \theta(M,g)\right\rceil }{\widehat{F}_t}\right)}\right)
\frac{\left(\ln\left(\frac{N}{M}\right)\right)^g }{\left(g+1\right)\left(1-\frac{\left\lceil\widehat{F}_t \theta(M,g)\right\rceil }{\widehat{F}_t}\right)} \overset{(c)}\to \infty\label{eqn:F_t_div_F_r_1}
\end{align}
and
\begin{align}
\lim_{(L,\frac{N}{M}) \to (\infty,\infty)}\frac{\left(3\left\lceil\frac{N}{M}\right\rceil\right)^{\frac{Lc(M,\widehat{F}_t, g)}{R_r(M,K,L)}-1}}{{K \choose K\frac{M}{N}}}
\overset{(d)}=\lim_{(L,\frac{N}{M}) \to (\infty,\infty)}\left(3\left\lceil\frac{N}{M}\right\rceil\right)^{\frac{Lc(M,\widehat{F}_t, g)}{R_r(M,K,L)}-1}
\overset{(e)}\to \infty, \label{eqn:lim_to_1_1}
\end{align}
where (c) is due to $\underset{(L,\frac{N}{M}) \to (\infty,\infty)}\lim \left(1-\frac{R_r(M,K,L)}{L\left(1-\frac{M}{N}\right)\left(1-\frac{g}{K'}\right)\left(1-\frac{\left\lceil\widehat{F}_t \theta(M,g)\right\rceil }{\widehat{F}_t}\right)}\right)>0$ and $\underset{(L,\frac{N}{M}) \to (\infty,\infty)}\lim\left(\ln\left(\frac{N}{M}\right)\right)^g \to \infty$,
(d) is due to  $\underset{\frac{N}{M} \to \infty} \lim {K \choose K\frac{M}{N}} \to 1$, and (e) is due to $\underset{(L,\frac{N}{M}) \to (\infty,\infty)}\lim\frac{Lc(M,\widehat{F}_t, g)}{R_r(M,K,L)}-1>0$.
By \eqref{eqn:F_t_div_F_r}, \eqref{eqn:F_t_div_F_r_1} and  \eqref{eqn:lim_to_1_1},  we have
\begin{align}
\lim_{(L,\frac{N}{M}) \to (\infty,\infty)}\frac{\widehat{F}_t}{\widehat{F}_r(M,K)} \to \infty. \label{t_vs_r_infty}
\end{align}
Thus, we know that, at the same given load,  there exists $\widetilde{L}_r>0$ and $q_r>0$, such that when $L > \widetilde{L}_r$ and $\frac{N}{M}>q_r$, we have $\widehat{F}_t>\widehat{F}_r(M,K)$. Thus, we complete the proof of Statement (i).

\subsection*{Proof of Statement (ii)}

When $R_t(M,\widehat{F}_t,g,L)=R_s(M,K,L)$, we compare the lower bound on $\widehat{F}_t$ given in \eqref{eqn:file_size_t_lb} with  $\widehat{F}_s(M,K)$. When $R_t(M,\widehat{F}_t,g,L)=R_s(M,K,L)$, by \eqref{eqn:file_size_t_lb}, we have
\begin{align}
\frac{\widehat{F}_t}{\widehat{F}_s(M,K)}&>\left(1-\frac{R_s(M,K,L)}{L\left(1-\frac{M}{N}\right)\left(1-\frac{g}{K'}\right)\left(1-\frac{\left\lceil\widehat{F}_t \theta(M,g)\right\rceil }{\widehat{F}_t}\right)}\right)
\frac{\left(\ln\left(\frac{N}{M}\right)\right)^g }{\left(g+1\right)\left(1-\frac{\left\lceil\widehat{F}_t \theta(M,g)\right\rceil }{\widehat{F}_t}\right)} \nonumber\\
&\times \frac{\left(3\left\lceil\frac{N}{M}\right\rceil\right)^{\frac{Lc(M,\widehat{F}_t, g)}{R_s(M,K,L)}-1}}{{K \choose K\frac{M}{N}}}. \label{eqn:F_t_div_F_s}
\end{align}
In addition, we have
\begin{align}
&\lim_{(L,\frac{N}{M}) \to (\infty,\infty)}\left(1-\frac{R_s(M,K,L)}{L\left(1-\frac{M}{N}\right)\left(1-\frac{g}{K'}\right)\left(1-\frac{\left\lceil\widehat{F}_t \theta(M,g)\right\rceil }{\widehat{F}_t}\right)}\right)
\frac{\left(\ln\left(\frac{N}{M}\right)\right)^g }{\left(g+1\right)\left(1-\frac{\left\lceil\widehat{F}_t \theta(M,g)\right\rceil }{\widehat{F}_t}\right)} \overset{(f)}\to \infty\label{eqn:F_t_div_F_r_2}
\end{align}
and
\begin{align}
\lim_{(L,\frac{N}{M}) \to (\infty,\infty)}\frac{\left(3\left\lceil\frac{N}{M}\right\rceil\right)^{\frac{Lc(M,\widehat{F}_t, g)}{R_s(M,K,L)}-1}}{{K \choose K\frac{M}{N}}}
\overset{(g)}=\lim_{(L,\frac{N}{M}) \to (\infty,\infty)}\left(3\left\lceil\frac{N}{M}\right\rceil\right)^{\frac{Lc(M,\widehat{F}_t, g)}{R_s(M,K,L)}-1}
\overset{(h)}\to \infty, \label{eqn:lim_to_1_2}
\end{align}
where (f) is due to $\underset{(L,\frac{N}{M}) \to (\infty,\infty)}\lim \left(1-\frac{R_s(M,K,L)}{L\left(1-\frac{M}{N}\right)\left(1-\frac{g}{K'}\right)\left(1-\frac{\left\lceil\widehat{F}_t \theta(M,g)\right\rceil }{\widehat{F}_t}\right)}\right)>0$ and $\underset{(L,\frac{N}{M}) \to (\infty,\infty)}\lim\left(\ln\left(\frac{N}{M}\right)\right)^g \to \infty$, (g) is due to  $\underset{\frac{N}{M} \to \infty} \lim {K \choose K\frac{M}{N}} \to 1$, and (h) is due to $\underset{(L,\frac{N}{M}) \to (\infty,\infty)}\lim\frac{Lc(M,\widehat{F}_t, g)}{R_s(M,K,L)}-1>0$.
By  \eqref{eqn:F_t_div_F_s}, \eqref{eqn:F_t_div_F_r_2}, and  \eqref{eqn:lim_to_1_2},  we have
\begin{align}
\lim_{(L,\frac{N}{M}) \to (\infty,\infty)}\frac{\widehat{F}_t}{\widehat{F}_s(M,K)} \to \infty. \label{t_vs_s_infty}
\end{align}
By \eqref{t_vs_s_infty}, we know that, at the same given load, there exists $\widetilde{L}_s>0$ and $q_s>0$, such that when $L > \widetilde{L}_s$ and $\frac{N}{M}>q_s$, we have $\widehat{F}_t>\widehat{F}_s(M,K)$. Thus, we complete the proof of Statement (ii).

\section*{Appendix I: Proof of Lemma~\ref{Lem:Asymptotic_approximations_R_r} and Lemma~\ref{Lem:Asymptotic_approximations_R_s}}
\subsection*{Proof of Lemma~\ref{Lem:Asymptotic_approximations_R_r}}
First, we show that $\Pr[\widehat K_1=L]\to 1$, as $K\to \infty$. Note that $\widehat K_1=L$ if and only if $X_k\in \{0,1\}$ for all $k \in \mathcal{K}$.  Thus, $\widehat K_1=L$ and $\mathbf x \in \mathcal X_{Ld} \triangleq \{(x_1,x_2,\ldots,x_K)|\sum_{k=1}^{K}x_k=L, x_k\in \{0,1\}\} \subset \mathcal X_{K,L}$ imply each other.\footnote{Note that $\mathcal X_{Ld}=\emptyset$ if and only if $K<L$.}
When $K \geq L$, we have
\begin{align}
&\lim_{K \to \infty}\Pr[\widehat K_1=L]=\lim_{K \to \infty}\sum_{\mathbf x \in \mathcal X_{Ld}} P_{\mathbf {X}}(\mathbf x) \overset{(a)}=\lim_{K \to \infty}\sum_{\mathbf x \in \mathcal X_{Ld}} L! \frac{1}{K^{L}} \overset{(b)}=\lim_{K \to \infty}{K \choose L} \frac{L!}{K^L} \nonumber\\
=&\lim_{K \to \infty}\prod_{i=0}^{L-1}\frac{K-i}{K} =\prod_{i=0}^{L-1}\lim_{K \to \infty}\frac{K-i}{K} =\prod_{i=0}^{L-1}\left(1-\lim_{K \to \infty}\frac{i}{K}\right)=1,  \label{eqn:P_d}
\end{align}
where (a) is due to $P_{\mathbf {X}}(\mathbf x)={L \choose x_1\,x_2 \ldots x_K}\frac{1}{K^{L}}=\frac{L!}{x_1!x_2!\ldots x_K!}\frac{1}{K^{L}}=\frac{L!}{K^{L}}$ for all $\mathbf x \in \mathcal X_{Ld}$, and (b) is due to $|\mathcal X_{Ld}|={K \choose L}$.

Then, we show that $R_{r,\infty}(M,L)=(N/M-1)\left(1-(1-M/N)^L\right)$. Denote $\overline{\mathcal X_{Ld}} \triangleq \mathcal X_{K,L} \setminus \mathcal X_{Ld}$. We separate $R_r (M,K,L)$ into two parts, i.e., $R_r (M,K,L) =R_d(M,K,L)+ R_{\overline{d}}(M,K,L)$, where
\begin{align}
R_d(M,K,L) \triangleq \sum_{\mathbf x \in \mathcal X_{Ld}} P_{\mathbf {X}}(\mathbf x) R(M,K,L,\mathbf {x}), \label{eqn:D_R_d} \\
R_{\overline{d}}(M,K,L) \triangleq \sum_{\mathbf x \in \overline{\mathcal X_{Ld}}} P_{\mathbf {X}}(\mathbf x) R(M,K,L,\mathbf {x}). \label{eqn:D_R_d_}
\end{align}
To calculate $R_{r,\infty}(M,L)$, we  calculate $\lim_{K \rightarrow \infty} R_d(M,K,L)$ and $\lim_{K \rightarrow \infty} R_{\overline{d}}(M,K,L)$, respectively.
\begin{enumerate}
\item
First, we  calculate $\lim_{K \rightarrow \infty} R_d(M,K,L)$.
When $L+1 \leq K(1-M/N)$, i.e., $K \geq \frac{L+1}{1-M/N}$, we have
\begin{align}
&R(M,K,L,\mathbf {x})=\frac{1}{{K \choose KM/N}}\sum_{k=KM/N+1}^{K} x_{(k)}  {k-1 \choose KM/N} \nonumber\\
\overset{(c)}=&\frac{1}{{K \choose KM/N}}\sum_{k=K-L+1}^{K} {k-1 \choose KM/N}  \nonumber\\
\overset{(d)}=&\frac{{K \choose KM/N+1} - {K-L \choose KM/N+1}}{{K \choose KM/N}}, \quad \mathbf x \in \mathcal X_{Ld}, \label{eqn:R_x_d}
\end{align}
where (c) is due to
\begin{align}
x_{(k)}=
\begin{cases}
0, &k=1,2, \cdots ,K-L \\
1, &k=K-L+1, K-L+2, \cdots, K
\end{cases} \nonumber
\end{align}
and (d) is due to Pascal's identity, i.e., ${{k+1} \choose t}={k \choose t}+{k \choose {t-1}}$. Note that when $K \geq \frac{L+1}{1-M/N}$, the values of $R(M,K,L,\mathbf {x}),\mathbf x \in \mathcal X_{Ld}$ are the same. Taking limits of both sides of~\eqref{eqn:R_x_d}, we have
\begin{align}
&\lim_{K \to \infty} R(M,K,L,\mathbf {x})=\lim_{K \rightarrow \infty} \frac{{K \choose KM/N+1} - {K-L \choose KM/N+1}}{{K \choose KM/N}} \nonumber\\
=&\lim_{K \rightarrow \infty} \frac{K(1-M/N)}{1+KM/N} \left(1-\prod_{i=0}^{L-1}\frac{K-KM/N-1-i}{K-i}\right) \nonumber\\
=&\lim_{K \rightarrow \infty} \frac{1-M/N}{1/K+M/N} \left(1-(1-M/N)^L\prod_{i=0}^{L-1}\lim_{K \rightarrow \infty}\frac{K-(i+1)/(1-M/N)}{K-i}\right) \nonumber\\
=&(N/M-1)\left(1-(1-M/N)^L\right), \quad \mathbf x \in \mathcal X_{Ld}. \label{eqn:lim_R_x_d}
\end{align}
Thus, from \eqref{eqn:D_R_d}, we have
\begin{align}
&\lim_{K \rightarrow \infty} R_d(M,K,L) =\lim_{K \rightarrow \infty}\sum_{\mathbf x \in \mathcal X_{Ld}} P_{\mathbf {X}}(\mathbf x) R(M,K,L,\mathbf {x}) \nonumber\\
\overset{(e)}=&\lim_{K \rightarrow \infty}R(M,K,L,\mathbf {x}) \sum_{\mathbf x \in \mathcal X_{Ld}} P_{\mathbf {X}}(\mathbf x) =\lim_{K \rightarrow \infty}R(M,K,L,\mathbf {x}) \lim_{K \rightarrow \infty}\sum_{\mathbf x \in \mathcal X_{Ld}} P_{\mathbf {X}}(\mathbf x) \nonumber\\
\overset{(f)}=&(N/M-1)\left(1-(1-M/N)^L\right), \label{eqn:lim_R_d}
\end{align}
where (e) is due to the fact that when $K \geq \frac{L+1}{1-M/N}$, the values of $R(M,K,L,\mathbf {x}),\mathbf x \in \mathcal X_{Ld}$ are the same, and (f) is due to  \eqref{eqn:P_d} and \eqref{eqn:lim_R_x_d}.
\item
Next, we  calculate $\lim_{K \to \infty}R_{\overline{d}}(M,K,L)$.
We have
\begin{align}
&R(M,K,L,\mathbf {x}) =\frac{1}{{K \choose KM/N}}  \sum_{k=KM/N+1}^{K} x_{(k)}  {k-1 \choose KM/N} \nonumber\\
&\overset{(g)}\leq \frac{{K-1 \choose KM/N}}{{K \choose KM/N}}  \sum_{k=KM/N+1}^{K} x_{(k)} =(1-M/N)\sum_{k=KM/N+1}^{K} x_{(k)} \overset{(h)}\leq L(1-M/N),  \quad \mathbf x \in \mathcal X_{K,L},  \label{squeeze}
\end{align}
where (g) is due to that ${k-1 \choose KM/N} \leq {K-1 \choose KM/N}$ holds for all $k \in \mathbb N$ satisfying $KM/N+1 \leq k \leq K$, and (h) is due to $\sum_{k=KM/N+1}^{K} x_{(k)} \leq \sum_{k=1}^{K} x_{(k)}=L$.
Thus, from \eqref{eqn:D_R_d_}, we have
\begin{align}
&R_{\overline{d}}(M,K,L) =\sum_{\mathbf x \in  \overline{\mathcal X_{Ld}}} P_{\mathbf {X}}(\mathbf x) R(M,K,L,\mathbf {x}) \overset{(i)}\leq L(1-M/N) \sum_{\mathbf x \in  \overline{\mathcal X_{Ld}}} P_{\mathbf {X}}(\mathbf x) \overset{(j)}\to 0,\quad \text{as $K \to \infty$}, \label{eqn:R_d_}
\end{align}
where (i) is due to \eqref{squeeze}, and (j) is due to $\lim_{K \to \infty}\sum_{\mathbf x \in \overline{\mathcal X_{Ld}}} P_{\mathbf {X}}(\mathbf x)=1-\lim_{K \to \infty}\sum_{\mathbf x \in \mathcal X_{Ld}} P_{\mathbf {X}}(\mathbf x)=0$. On the other hand, we know that $R_{\overline{d}}(M,K,L) \geq 0$. Thus, we have
\begin{align}
\lim_{K \to \infty} R_{\overline{d}}(M,K,L)=0. \label{eqn:lim_R_d_}
\end{align}
\end{enumerate}
By \eqref{eqn:lim_R_d} and \eqref{eqn:lim_R_d_}, we have
\begin{align}
R_{r,\infty}(M,L)=\lim_{K \rightarrow \infty}R_d(M,K,L)+ \lim_{K \rightarrow \infty}R_{\overline{d}}(M,K,L)=(N/M-1)\left(1-(1-M/N)^L\right).\label{eqn:limit_R_r}
\end{align}

Then, we derive the asymptotic approximation of an upper bound on $R_r(M,K,L)$, as $K \to \infty$.
By \eqref{squeeze}, we have
\begin{align}
&R_r (M,K,L)=\sum_{\mathbf x \in \mathcal X_{L}} P_{\mathbf {X}}(\mathbf x) R(M,K,L,\mathbf {x}) \nonumber\\
\leq & \overline{R_r}^{ub}(M,K,L) \triangleq \sum_{\mathbf x \in \mathcal X_{Ld}} P_{\mathbf {X}}(\mathbf x) R(M,K,L,\mathbf {x})+\sum_{\mathbf x \in \overline{\mathcal X_{Ld}}}P_{\mathbf {X}}(\mathbf x) L(1-M/N). \label{eqn:R_r_ub}
\end{align}
When $K \geq \frac{L+1}{1-M/N}$, we have
\begin{align}
&\overline{R_r}^{ub}(M,K,L)
\overset{(k)}=\frac{{K \choose KM/N+1} - {K-L \choose KM/N+1}}{{K \choose KM/N}} \sum_{\mathbf x \in \mathcal X_{Ld}} P_{\mathbf {X}}(\mathbf x) +L(1-M/N)\sum_{\mathbf x \in \overline{\mathcal X_{Ld}}}P_{\mathbf {X}}(\mathbf x) \nonumber\\
\overset{(l)}=&\left(\frac{K(1-M/N)}{1+KM/N}\left(1-(1-M/N)^L \prod_{i=0}^{L-1}\frac{K-\frac{i+1}{1-M/N}}{K-i}\right) -L(1-M/N)\right) \sum_{\mathbf x \in \mathcal X_{Ld}} P_{\mathbf {X}}(\mathbf x)+L(1-M/N), \label{eqn:R_up_D}
\end{align}
where (k) is due to the fact that $R(M,K,L,\mathbf {x})=\frac{{K \choose KM/N+1} - {K-L \choose KM/N+1}}{{K \choose KM/N}}$
for all $K \geq \frac{L+1}{1-M/N}$ and $\mathbf x \in \mathcal X_{Ld}$, and (l) is due to $\sum_{\mathbf x \in \overline{\mathcal X_{Ld}}}P_{\mathbf {X}}(\mathbf x)=1-\sum_{\mathbf x \in \mathcal X_{Ld}} P_{\mathbf {X}}(\mathbf x)$.

To obtain the asymptotic approximation of $\overline{R_r}^{ub}(M,K,L)$ using \eqref{eqn:R_up_D}, we now derive the asymptotic approximation of $\prod_{i=0}^{L-1}\frac{K-\frac{i+1}{1-M/N}}{K-i}$, $\frac{K(1-M/N)}{1+KM/N}$ and $\sum_{\mathbf x \in \mathcal X_{Ld}} P_{\mathbf {X}}(\mathbf x)$
, separately.
First, we have
\begin{align}
&\prod_{i=0}^{L-1}\frac{K-\frac{i+1}{1-M/N}}{K-i}=\prod_{i=0}^{L-1}\left(1+\frac{i-\frac{i+1}{1-M/N}}{K-i}\right) = e^{\Sigma_{i=0}^{L-1}\ln\left(1+\frac{i-\frac{i+1}{1-M/N}}{K-i}\right)} \nonumber\\
\overset{(m)}=& e^{\Sigma_{i=0}^{L-1}\left(\frac{i-\frac{i+1}{1-M/N}}{K}\frac{1}{1-i/K}+o\left(\frac{1}{K}\right)\right)} \overset{(n)}= e^{\Sigma_{i=0}^{L-1}\left(\frac{i-\frac{i+1}{1-M/N}}{K}(1+\frac{i}{K}+o(\frac{1}{K}))+o\left(\frac{1}{K}\right)\right)}= e^{\Sigma_{i=0}^{L-1}\frac{i-\frac{i+1}{1-M/N}}{K}+o\left(\frac{1}{K}\right)}\nonumber\\
\overset{(o)}=&1+\Sigma_{i=0}^{L-1}\frac{{i-\frac{i+1}{1-M/N}}}{K}+o\left(\frac{1}{K}\right) =1-\frac{L}{K(1-M/N)}-\frac{M/N}{K(1-M/N)}\frac{L(L-1)}{2}+o\left(\frac{1}{K}\right),  K \to \infty, \label{eqn:f}
\end{align}
where (m) is due to $\ln (1+x)=x+o(x)$ as $x \to 0$, (n) is due to $\frac{1}{1+x}=1-x+o(x)$ as $x \to 0$, and (o) is due to $e^x=1+x+o(x)$ as $x \to 0$.
In addition, we have
\begin{align}
\frac{K(1-M/N)}{1+KM/N}=(N/M-1)\frac{1}{1+\frac{N}{KM}}\overset{(p)}=\left(N/M-1\right)\left(1-\frac{N}{MK}\right)+o\left(\frac{1}{K}\right), K \to \infty, \label{eqn:Ali}
\end{align}
where (p) is due to $\frac{1}{1+x}=1-x+o(x)$ as $x \to 0$.
Further, based on \eqref{eqn:P_d}, we have
\begin{align}
&\sum_{\mathbf x \in \mathcal X_{Ld}} P_{\mathbf {X}}(\mathbf x)=\prod_{i=0}^{L-1}\frac{K-i}{K}= e^{\sum_{i=0}^{L-1}\ln(1-i/K)}\nonumber\\
\overset{(q)}=&e^{\sum_{i=0}^{L-1}-i/K+o\left(\frac{1}{K}\right)}=e^{-\frac{L(L-1)}{2K}+o\left(\frac{1}{K}\right)}\nonumber\\
\overset{(r)}=&1-\frac{L(L-1)}{2K}+o\left(\frac{1}{K}\right), K \to \infty, \label{eqn:P_D}
\end{align}
where (q) is due to $\ln (1+x)=x+o(x)$ as $x \to 0$, and (r) is due to $e^x=1+x+o(x)$ as $x \to 0$.
Substituting \eqref{eqn:f}, \eqref{eqn:Ali} and \eqref{eqn:P_D} into \eqref{eqn:R_up_D}, we have
\begin{align}
R_r (M,K,L) \leq \overline{R_r}^{ub}(M,K,L)=R_{\infty}(M,L)+\frac{A(M,L)}{K}+o\left(\frac{1}{K}\right), \label{eqn:R_r__up_D}
\end{align}
as $K \to \infty$.
Here, $A(M,L) \triangleq \frac{N}{M}(\frac{N}{M}-1)\left((1-M/N)^{L-1}\left(1+\frac{(L+2)(L-1)M}{2N}\right)-1+\frac{L(L-1)M}{2N}(\frac{LM}{N}-1)\right)$.

Finally, we show $A(M,L) \geq 0$.
Denote $g(z,L) \triangleq (1-z)^{(L-1)}    (1+\frac{(L+2)(L-1)z}{2})-1+\frac{L(L-1)}{2}z(Lz-1)$. Note that $A(M,L)=\frac{N}{M}(\frac{N}{M}-1)g(M/N,L)$ and $g(M/N,2)=0$. To prove $A(M,L) \geq 0$, we now prove $g(M/N,L) \geq g(M/N,2)=0$ by showing $g(M/N,L+1)-g(M/N,L)>0$, for $L \in \{2,3,\cdots\}$.
Denote $h(z,L) \triangleq \frac{L^2+3L}{2}(1-z)^L-\frac{L^2+L}{2}(1-z)^{L-1}-L+z\frac{L+3L^2}{2}$.
Note that $g(M/N,L+1)-g(M/N,L)=\frac{M}{N}\left(\frac{L^2+3L}{2}(1-M/N)^L-\frac{L^2+L}{2}(1-M/N)^{L-1}-L+\frac{M}{N}\frac{L+3L^2}{2}\right)=\frac{M}{N}h(M/N,L)$ and $\frac{\partial h(z,L)}{\partial z}=-\frac{3L^2+L^3}{2}(1-z)^{L-1}+\frac{L^3-L}{2}(1-z)^{L-2}+\frac{L+3L^2}{2}$.
We can easily show that $\frac{\partial h(z,L)}{\partial z}>-\frac{3L^2+L^3}{2}(1-z)^{L-1}+\frac{L^3-L}{2}(1-z)^{L-1}+\frac{L+3L^2}{2}
=\frac{L+3L^2}{2}\left(1-(1-z)^{L-1}\right)>0$ for all $z \in (0,1)$ and $L \in \{2,3,\cdots\}$.
Thus, when $z \in (0,1)$ and $L \in \{2,3,\cdots\}$, $h(z,L)$ increases with $z$,  implying that $g(M/N,L+1)-g(M/N,L)=\frac{M}{N}h(M/N,L)>\frac{M}{N}h(0,L)=0$.
Thus, when $L \in \{2,3,\cdots\}$, we have $A(M,L)=\frac{N}{M}(\frac{N}{M}-1)g(M/N,L) \geq \frac{N}{M}(\frac{N}{M}-1)g(M/N,2)=0$.

Therefore, we complete the proof of Lemma \ref{Lem:Asymptotic_approximations_R_r}.
\subsection*{Proof of Lemma~\ref{Lem:Asymptotic_approximations_R_s}}
First, we  prove $R_{s,\infty}(M,K)=(N/M-1)\left(1-(1-M/N)^L\right)$. When $K \geq \frac{L+1}{1-M/N}$, we have $X_{\max}=\lceil L/K \rceil=1$, $\widehat K_{X_{\max}}=L$ and $\widehat K_{X_{\max}}+1 \leq K(1-M/N)$. Thus, by \eqref{eqn:R_s_notfull}, we have
\begin{align}
&R_s(M,K,L) =\lceil L/K \rceil\frac{K(1-M/N)}{1+KM/N}-\frac{K(1-M/N)}{1+KM/N}\prod_{i=0}^{K-\lceil L/K \rceil K+L-1}\frac{K-KM/N-1-i}{K-i} \nonumber\\
=&\frac{K(1-M/N)}{1+KM/N}\left(1-\prod_{i=0}^{L-1}\frac{K-KM/N-1-i}{K-i}\right).
\label{eqn:R_s_K_infinity}
\end{align}
Taking limits of both sides of \eqref{eqn:R_s_K_infinity}, we have
\begin{align}
&R_{s,\infty}(M,L) =\lim_{K \rightarrow \infty}\frac{K(1-M/N)}{1+KM/N}\left(1-\prod_{i=0}^{L-1}\frac{K-KM/N-1-i}{K-i}\right)  \nonumber\\
=&\lim_{K \rightarrow \infty} \frac{1-M/N}{1/K+M/N} \left(1-(1-M/N)^L\prod_{i=0}^{L-1}\lim_{K \rightarrow \infty}\frac{K-(i+1)/(1-M/N)}{K-i}\right)  \nonumber\\
=&(N/M-1)\left(1-(1-M/N)^L\right). \label{eqn:limit_R_s}
\end{align}

Next, we show \eqref{eqn:R_s__D}.
When $K \geq \frac{L+1}{1-M/N}$, by \eqref{eqn:R_s_K_infinity}, we have
\begin{align}
&R_s(M,K,L) =\frac{K(1-M/N)}{1+KM/N}\left(1-\prod_{i=0}^{L-1}\frac{K-KM/N-1-i}{K-i}\right) \nonumber\\
=&\frac{K(1-M/N)}{1+KM/N}\left(1-(1-M/N)^L \prod_{i=0}^{L-1}\frac{K-\frac{i+1}{1-M/N}}{K-i}\right).
\label{eqn:R_s_K_large}
\end{align}
Substituting \eqref{eqn:f} and \eqref{eqn:Ali} into \eqref{eqn:R_s_K_large}, we have
\begin{align}
R_s(M,K,L)=R_{\infty}(M,L)+\frac{B(M,L)}{K}+o\left(\frac{1}{K}\right). \label{eqn:small_order_R_s}
\end{align}
Thus, we can obtain  \eqref{eqn:R_s__D}.

Finally, we show $B(M,L)<0$.
Denote $f(z,L) \triangleq (1-z)^{L-1}(1+(L-1)z(1+\frac{Lz}{2}))$. Note that $B(M,L)=\frac{N}{M}(\frac{N}{M}-1)(f(M/N,L)-1)$ and
$\frac{\partial f(z,L)}{\partial z}=-z^2(1-z)^{L-2}\frac{(L-1)L(L+1)}{2}$.  We can easily show that $\frac{\partial f(z,L)}{\partial z}<0$ for $z \in (0,1)$ and $L \in \{2,3,\cdots\}$. Thus, when $L \in \{2,3,\cdots\}$ and $z \in (0,1)$, $f(z,L)$ decreases with $z$, implying  $f(M/N,L)<f(0,L)=1$ for $L \in \{2,3,\cdots\}$.
Thus, when $L \in \{2,3,\cdots\}$, we have $B(M,L)=\frac{N}{M}(\frac{N}{M}-1)(f(M/N,L)-1)<0$.


Therefore, we complete the proof of Lemma \ref{Lem:Asymptotic_approximations_R_s}.

\section*{Appendix J: Proof of Theorem~\ref{Thm:gain_R_r}}
\subsection*{Proof of Statement (i)}
First, we prove $g_r(M,K,L) \geq 1$. By \eqref{squeeze}, we have
\begin{align}
R_r (M,K,L)\triangleq \mathbb E_{\mathbf X}[R(M,K,L,\mathbf X)] \leq \mathbb E_{\mathbf X}[L(1-M/N)]=L(1-M/N),
\end{align}
with $\mathbf X$ given by the proposed decentralized random coded caching  scheme.
Thus, we have
\begin{align}
g_r(M,K,L) \geq \frac{R_u(M,L)}{L(1-M/N)}=1.
\end{align}

Next, we prove $g_r(M,K,L) < 1+KM/N$.
To prove $g_r(M,K,L) < 1+KM/N$, we first derive a lower bound on $R_r(M,K,L)$.
Based on  Theorem~\ref{Thm:comparison}, when $L \in \{2,3,\cdots\}$, we have $R_r(M,K,L) > R_s(M,K,L)$. Thus, to derive a lower bound on $R_r(M,K,L)$, we can derive a lower bound on $R_s(M,K,L)$.
By \eqref{eqn:R_s_full} and \eqref{eqn:R_s_notfull-inter}, we have
\begin{align}
&R_s(M,K,L)\geq\lceil L/K \rceil \frac{{K \choose KM/N+1}}{{K \choose KM/N}}- \frac{ {\lceil L/K\rceil K-L \choose KM/N+1}}{{K \choose KM/N}} \nonumber\\
=&\lceil L/K \rceil\frac{K(1-M/N)}{1+KM/N}-\frac{K(1-M/N)}{1+KM/N}\prod_{i=0}^{KM/N}\frac{\lceil L/K\rceil K-L-i}{K-i}\nonumber\\
\Longrightarrow &R_s(M,K,L)-\frac{L(1-M/N)}{1+KM/N} \nonumber\\
\geq&\frac{(\lceil L/K \rceil K-L)(1-M/N)}{1+KM/N}\left(1-\prod_{i=1}^{KM/N}\frac{\lceil L/K\rceil K-L-i}{K-i}\right) \overset{(a)} \geq 0,\nonumber
\end{align}
where (a) is due to $K>\lceil L/K\rceil K-L \geq 0$ (as $L/K+1>\lceil L/K\rceil $).
Thus,  when $L \in \{2,3,\cdots\}$, we have
\begin{align}
R_r(M,K,L)  > R_s(M,K,L)\geq \frac{L(1-M/N)}{1+KM/N}. \label{eqn:two_sche_lb}
\end{align}
By \eqref{eqn:two_sche_lb}, we have
\begin{align}
g_r(M,K,L)=\frac{R_u(M,L)}{R_r(M,K,L)}< \frac{L(1-M/N)}{\frac{L(1-M/N)}{1+KM/N}}=1+KM/N. \label{eqn:g_r__ub}
\end{align}

Finally,
we  prove  $\underset{L \to \infty}\lim g_r(M,K,L)=1+\frac{KM}{N}$.
We have
\begin{align}
g_r(M,K,L)=\frac{L(1-M/N)}{R_r(M,K,L)} \overset{(b)}\geq \frac{L(1+KM/N)}{L+K\sqrt{\frac{L}{2}}} \to 1+KM/N, \quad \text{as $L \to \infty$}, \label{eqn:g_r_lim_ub}
\end{align}
where (b) is due to \eqref{random___ub}.
On the other hand, we have
\begin{align}
g_r(M,K,L)=\frac{L(1-M/N)}{R_r(M,K,L)} \overset{(c)} < 1+KM/N, \label{eqn:g_r_lim_lb}
\end{align}
where (c) is due to \eqref{eqn:two_sche_lb}.
By \eqref{eqn:g_r_lim_ub} and \eqref{eqn:g_r_lim_lb},
we have
\begin{align}
\underset{L \to \infty}\lim g_r(M,K,L)=1+KM/N. \label{eqn:g_r_lim_}
\end{align}
\subsection*{Proof of Statement (ii)}
First, we prove $\widehat{F}_r(M,K) \geq (\frac{N}{M})^{g_r(M,K,L)-1}$.
By \eqref{eqn:g_r__ub}, we have
\begin{align}
K > \frac{N}{M}(g_r(M,K,L)-1). \label{eqn:K_geq_ran}
\end{align}
By ${n \choose k} \geq (\frac{n}{k})^k$  for all $n,k\in \mathbb N$ and $n\geq k$ as well as \eqref{eqn:K_geq_ran},  we have
\begin{align}
\widehat{F}_r(M,K)={K \choose K\frac{M}{N}}   \geq \left(\frac{N}{M}\right)^{K \frac{M}{N}}  > \left(\frac{N}{M}\right)^{g_r(M,K,L)-1}.
\end{align}

Next, we prove $\widehat{F}_r(M,K) \leq \left(\frac{N}{M}e\right)^\frac{\left(g_r(M,K,L)-1\right)\sqrt{2L}}{\sqrt{2L}-g_r(M,K,L)N/M}$.
Substituting  $R_r(M,K,L) = \frac{R_u(M,L)}{g_r(M,K,L)}$ into \eqref{random___ub}, we have
\begin{align}
\left(\frac{\sqrt{2L}M}{N}-g_r(M,K,L)\right)K \leq \left(g_r(M,K,L)-1\right)\sqrt{2L}.
\end{align}
When $L \in \left\{\left\lceil\frac{1}{2}(\frac{N}{M})^2\right\rceil, \left\lceil\frac{1}{2}(\frac{N}{M})^2\right\rceil+1,\cdots\right\}$ and $g_r(M,K,L) \in \left(1, \min \left\{\frac{\sqrt{2L}M}{N},1+\frac{KM}{N}\right\}\right)$, we have
\begin{align}
K \leq \frac{\left(g_r(M,K,L)-1\right)\sqrt{2L}}{\frac{\sqrt{2L}M}{N}-g_r(M,K,L)}. \label{random_K__ub}
\end{align}
By ${n \choose k} \leq (\frac{n}{k}e)^k$  for all $n,k\in \mathbb N$ and $n\geq k$ as well as \eqref{random_K__ub},  we have
\begin{align}
\widehat{F}_r(M,K)={K \choose K\frac{M}{N}} \leq \left(\frac{N}{M}e\right)^{K \frac{M}{N}} \leq \left(\frac{N}{M}e\right)^\frac{\left(g_r(M,K,L)-1\right)\sqrt{2L}}{\sqrt{2L}-g_r(M,K,L)N/M},\label{gain_random__ub}
\end{align}
for all $g_r(M,K,L) \in \left(1, \min \left\{\frac{\sqrt{2L}M}{N},1+\frac{KM}{N}\right\}\right)$.


\section*{Appendix K: Proof of Theorem~\ref{Thm:gain_R_s}}

\subsection*{Proof of Statement (i)}
First, we prove
$\frac{L\frac{M}{N}}{1-\left(1-\frac{M}{N}\right)^L} <g_s(M,K,L) \leq 1+L\frac{M}{N}$, when $K \geq L\in\{2,3,\cdots\}$.
According to Theorem \ref{Thm:seq}, when $K \geq L\in\{2,3,\cdots\}$, $R_s(M,K,L)$ increases with $K$. Thus, we have $R_s(M,L,L) \leq R_s(M,K,L)< R_{\infty}(M,L)$, where $R_s(M,L,L)=\frac{L(1-M/N)}{1+LM/N}$ and $R_{\infty}(M,L)=(N/M-1)\left(1-(1-M/N)^L\right)$.
Thus, we have $$\frac{L\frac{M}{N}}{1-\left(1-\frac{M}{N}\right)^L}=\frac{R_u(M,L)}{R_{\infty}(M,L)}<g_s(M,K,L)  \leq \frac{R_u(M,L)}{R_s(M,L,L)}=1+L\frac{M}{N}.$$

Next, we prove $1 \leq g_s(M,K,L) \leq 1+KM/N$, when  $K<L$.
By \eqref{squeeze}, we have
\begin{align}
R_s (M,K,L)\triangleq R(M,K,L,\mathbf x) \leq L(1-M/N),
\end{align}
with $\mathbf x$ given by the proposed partially decentralized sequential coded caching  scheme.
Thus, we have
\begin{align}
g_s(M,K,L)\geq \frac{R_u(M,L)}{L(1-M/N)}=1. \label{eqn:g_s_lb}
\end{align}
Furthermore, by \eqref{eqn:two_sche_lb}, we have
\begin{align}
g_s(M,K,L) = \frac{R_u(M,L)}{R_s(M,K,L)} \leq 1+KM/N. \label{eqn:g_s_ub}
\end{align}
Thus, by \eqref{eqn:g_s_lb} and \eqref{eqn:g_s_ub}, we have
\begin{align}
1 \leq g_s(M,K,L)  \leq 1+KM/N.
\end{align}

Finally,
we  prove  $\underset{L \to \infty}\lim g_s(M,K,L)=1+\frac{KM}{N}$.
We have
\begin{align}
g_s(M,K,L)=\frac{L(1-M/N)}{R_s(M,K,L)} \overset{(a)}\geq \frac{L(1+KM/N)}{\lceil L/K\rceil K} \overset{(b)} \to 1+KM/N, \quad \text{as $L \to \infty$}. \label{eqn:g_s_lim_ub}
\end{align}
where (a) is due to \eqref{eqn:R_s_ub_ii} and  (b) is due to $\underset{L \to \infty} \lim\frac{\lceil L/K\rceil }{L/K}=1$.
On the other hand, we have
\begin{align}
g_s(M,K,L)=\frac{L(1-M/N)}{R_s(M,K,L)} \overset{(c)} \leq 1+KM/N, \label{eqn:g_s_lim_lb}
\end{align}
where (c) is due to \eqref{eqn:two_sche_lb}.
By \eqref{eqn:g_s_lim_ub} and \eqref{eqn:g_s_lim_lb},
we have
\begin{align}
\lim_{L \to \infty}g_s(M,K,L)=1+KM/N. \label{eqn:g_s_lim_}
\end{align}

\subsection*{Proof of Statement (ii)}
First, when $K \geq L$, we prove $\widehat{F}_s(M,K) \geq (\frac{N}{M})^{\frac{M/N}{g_s(M,K,L)/L-M/N}}$.
By \eqref{eqn:seq}, when $K \geq L$, we have $R_s(M,K,L) \leq \frac{K(1-M/N)}{1+KM/N}$, implying $g_s(M,K,L) = \frac{R_u(M,L)}{R_s(M,K,L)} \geq L(\frac{1}{K}+\frac{M}{N})$. Thus, when $K \geq L$, we have
\begin{align}
K \geq \frac{1}{g_s(M,K,L)/L-M/N}.  \label{eqn:R_s_1}
\end{align}
By ${n \choose k} \geq (\frac{n}{k})^k$ for all $n,k\in \mathbb N$ and $n\geq k$ as well as \eqref{eqn:R_s_1}, when $K \geq L$, we have
\begin{align}
\widehat{F}_s(M,K)={K \choose K\frac{M}{N}}  \geq \left(\frac{N}{M}\right)^{K \frac{M}{N}}   \geq \left(\frac{N}{M}\right)^{\frac{M/N}{g_s(M,K,L)/L-M/N}}.
\end{align}

Next,  when $K \geq L$,  we prove $\widehat{F}_s(M,K) <(\frac{N}{M}e)^{\frac{M/N}{\left(1-(1-M/N)^L\right)g_s(M,K,L)/L-M/N}}$.
By \eqref{eqn:seq}, when $K \geq L$, we have
\begin{align}
R_s(M,K,L) \geq &\frac{K(1-M/N)}{1+KM/N}-\frac{K(1-M/N)}{1+KM/N}\prod_{i=0}^{L-1}\frac{K-KM/N-1-i}{K-i}\nonumber\\
=&\frac{K(1-M/N)}{1+KM/N}\left(1-(1-M/N)^L\prod_{i=0}^{L-1}\left(1-\frac{\frac{1+i}{1-M/N}-i}{K-i}\right)\right).\label{eqn:proof-KgeqL-bound-Rs}
\end{align}
To bound $R_s(M,K,L)$ from below, we bound  $\prod_{i=0}^{L-1}(1-\frac{\frac{1+i}{1-M/N}-i}{K-i})$ from above.
As $(1-\frac{\frac{1+i}{1-M/N}-i}{K-i})$ increases with $K$, we have $\prod_{i=0}^{L-1}(1-\frac{\frac{1+i}{1-M/N}-i}{K-i})<\lim_{K\to \infty}\prod_{i=0}^{L-1}(1-\frac{\frac{1+i}{1-M/N}-i}{K-i})=1$. Thus, by \eqref{eqn:proof-KgeqL-bound-Rs}, we have $R_s(M,K,L)>\frac{K(1-M/N)}{1+KM/N}\left(1-\left(1-M/N\right)^L\right)$, implying  $g_s(M,K,L)=\frac{R_u(M,L)}{R_s(M,K,L)} < \frac{L(1+KM/N)}{K\left(1-\left(1-\frac{M}{N}\right)^L\right)}$. Thus, when $K \geq L$, we have
\begin{align}
K <\frac{1}{\left(1-\left(1-\frac{M}{N}\right)^L\right)g_s(M,K,L)/L-M/N}.  \label{eqn:K_less}
\end{align}
By ${n \choose k} \leq (\frac{n}{k}e)^k$  for all $n,k\in \mathbb N$ and $n\geq k$ as well as \eqref{eqn:K_less}, when $K \geq L$, we have
\begin{align}
\widehat{F}_s(M,K)={K \choose K\frac{M}{N}} \leq \left(\frac{N}{M}e\right)^{K \frac{M}{N}} <\left(\frac{N}{M}e\right)^{\frac{M/N}{\left(1-\left(1-\frac{M}{N}\right)^L\right)g_s(M,K,L)/L-M/N}}.
\end{align}

Then,  when $K < L$,  we prove $\widehat{F}_s(M,K) \geq (\frac{N}{M})^{g_s(M,K,L)-1}$.
By \eqref{eqn:g_s_ub}, we have
\begin{align}
K \geq \frac{N}{M}(g_s(M,K,L)-1). \label{eqn:K_geq_multi}
\end{align}
By ${n \choose k} \geq (\frac{n}{k})^k$  for all $n,k\in \mathbb N$ and $n\geq k$ as well as \eqref{eqn:K_geq_multi},  we have
\begin{align}
\widehat{F}_s(M,K)={K \choose K\frac{M}{N}}   \geq \left(\frac{N}{M}\right)^{K \frac{M}{N}}   \geq \left(\frac{N}{M}\right)^{g_s(M,K,L)-1}.
\end{align}

Finally,  when $K < L$,  we prove $\widehat{F}_s(M,K) \leq (\frac{N}{M}e)^{g_s(M,K,L)\frac{\lceil L/K\rceil }{L/K}-1}$. By \eqref{eqn:seq}, we have $R_s(M,K,L) \leq \lceil L/K\rceil \frac{K(1-M/N)}{1+KM/N}$, implying $g_s(M,K,L)= \frac{R_u(M,L)}{R_s(M,K,L)} \geq \frac{L(1+KM/N)}{\lceil L/K\rceil K}$. Thus, we have
\begin{align}
K \leq \left(g_s(M,K,L)\frac{\lceil L/K\rceil }{L/K}-1\right)\frac{N}{M}. \label{eqn:K_eq_multi}
\end{align}
By ${n \choose k} \leq (\frac{n}{k}e)^k$  for all $n,k\in \mathbb N$ and $n\geq k$ as well as \eqref{eqn:K_eq_multi}, when $K < L$, we have
\begin{align}
\widehat{F}_s(M,K)={K \choose K\frac{M}{N}}  \leq \left(\frac{N}{M}e\right)^{K \frac{M}{N}}  \leq \left(\frac{N}{M}e\right)^{g_s(M,K,L)\frac{\lceil L/K\rceil }{L/K}-1}. \label{F_g_s}
\end{align}


\section*{Appendix L: Proof of Lemma~\ref{Lem:gain_asymp_rand} and Lemma~\ref{Lem:gain_asymp}}
First, we prove \eqref{eqn:g_r_lim}. We have
\begin{align}
g_{r,\infty}(M,L) = \lim_{K \to \infty}g_r(M,K,L)=\lim_{K \to \infty} \frac{R_u(M,L)}{R_r(M,K,L)}\overset{(a)}=\frac{R_u(M,L)}{R_{\infty}(M,L)}=g_{\infty}(M,L),
\end{align}
where (a) is due to \eqref{eqn:ran_lim}.

Next, we prove \eqref{eqn:g_r_asymp}.
By \eqref{eqn:R_r___D}, we have
\begin{align}
&g_r(M,K,L)=\frac{R_u(M,L)}{R_r(M,K,L)} \geq \frac{R_u(M,L)}{R_{\infty}(M,L)+\frac{A(M,L)}{K}+o\left(\frac{1}{K}\right)}
=\frac{R_u(M,L)}{R_{\infty}(M,L)}\frac{1}{1+\frac{A(M,L)}{R_{\infty}(M,L)}\frac{1}{K}+o\left(\frac{1}{K}\right)} \nonumber \\
\overset{(b)}=&\frac{R_u(M,L)}{R_{\infty}(M,L)}\left(1-\frac{A(M,L)}{R_{\infty}(M,L)}\frac{1}{K}+o\left(\frac{1}{K}\right)\right)
=\frac{R_u(M,L)}{R_{\infty}(M,L)}\left(1-\frac{A(M,L)}{R_{\infty}(M,L)}\frac{1}{K}\right)+o\left(\frac{1}{K}\right), \label{eqn:g_r_stir}
\end{align}
where (b) is due to $\frac{1}{1+x}=1-x+o(x)$ as $x \to 0$.  By Stirling's approximation, when $K$ is large, we have
\begin{align}
\widehat{F}_r(M,K)=\frac{K!}{(\frac{KM}{N})!(K-\frac{KM}{N})!}\overset{(c)}
=(1+o(1))\sqrt{\frac{1}{2\pi K\frac{M}{N}(1-\frac{M}{N})}}\left(\frac{N}{M}\right)^{\frac{KM}{N}}\left(\frac{N}{N-M}\right)^{K-K\frac{KM}{N}},\label{eqn:F_r_stir}
\end{align}
where (c) is due to Stirling's approximation, i.e.,  $n!\sim \sqrt{2 \pi n} (\frac{n}{e})^n$, for $n \to \infty$. Taking the logarithm of
\eqref{eqn:F_r_stir} gives
\begin{align}
\frac{1}{K}=\frac{H(\frac{N}{M})}{\ln \widehat{F}_r(M,K)}+o\left(\frac{1}{\ln \widehat{F}_r(M,K)}\right). \label{eqn:K_r_stir}
\end{align}
Substituting \eqref{eqn:K_r_stir} into \eqref{eqn:g_r_stir}, we can obtain \eqref{eqn:g_r_asymp}.

Then, we prove \eqref{eqn:g_s_lim}. We have
\begin{align}
g_{s,\infty}(M,L) = \lim_{K \to \infty}g_s(M,K,L)=\lim_{K \to \infty} \frac{R_u(M,L)}{R_s(M,K,L)}\overset{(d)}=\frac{R_u(M,L)}{R_{\infty}(M,L)}=g_{\infty}(M,L),
\end{align}
where (d) is due to \eqref{eqn:seq_lim}.

Finally, we prove \eqref{eqn:g_s_asymp}.
By \eqref{eqn:R_s__D}, we have
\begin{align}
&g_s(M,K,L)=\frac{R_u(M,L)}{R_s(M,K,L)}=\frac{R_u(M,L)}{R_{\infty}(M,L)+\frac{B(M,L)}{K}+o\left(\frac{1}{K}\right)}
=\frac{R_u(M,L)}{R_{\infty}(M,L)}\frac{1}{1+\frac{B(M,L)}{R_{\infty}(M,L)}\frac{1}{K}+o\left(\frac{1}{K}\right)} \nonumber \\
\overset{(e)}=&\frac{R_u(M,L)}{R_{\infty}(M,L)}\left(1-\frac{B(M,L)}{R_{\infty}(M,L)}\frac{1}{K}+o\left(\frac{1}{K}\right)\right)
=\frac{R_u(M,L)}{R_{\infty}(M,L)}\left(1-\frac{B(M,L)}{R_{\infty}(M,L)}\frac{1}{K}\right)+o\left(\frac{1}{K}\right), \label{eqn:g_s_stir}
\end{align}
where (e) is due to $\frac{1}{1+x}=1-x+o(x)$ as $x \to 0$.  By Stirling's approximation, when $K$ is large, we have
\begin{align}
\widehat{F}_s(M,K)=\frac{K!}{(\frac{KM}{N})!(K-\frac{KM}{N})!}\overset{(f)}
=(1+o(1))\sqrt{\frac{1}{2\pi K\frac{M}{N}(1-\frac{M}{N})}}\left(\frac{N}{M}\right)^{\frac{KM}{N}}\left(\frac{N}{N-M}\right)^{K-K\frac{KM}{N}},\label{eqn:F_s_stir}
\end{align}
where (f) is due to Stirling's approximation, i.e.,  $n!\sim \sqrt{2 \pi n} (\frac{n}{e})^n$, for $n \to \infty$. Taking the logarithm of
\eqref{eqn:F_s_stir} gives
\begin{align}
\frac{1}{K}=\frac{H(\frac{N}{M})}{\ln \widehat{F}_s(M,K)}+o\left(\frac{1}{\ln \widehat{F}_s(M,K)}\right). \label{eqn:K_s_stir}
\end{align}
Substituting \eqref{eqn:K_s_stir} into \eqref{eqn:g_s_stir}, we can obtain \eqref{eqn:g_s_asymp}.


\end{document}